\def\DynkinNodeSize{1.5mm}   
\def\DynkinArrowLength{1.5mm}
\tikzset{
dnode/.style={
circle,
inner sep=0pt,
minimum size=\DynkinNodeSize,
fill=white,
draw},
middlearrow/.style={
decoration={markings,
mark=at position 0.6 with
{\draw (0:0mm) -- +(+135:\DynkinArrowLength); \draw (0:0mm) -- +(-135:\DynkinArrowLength);},
},
postaction={decorate}
},
leftrightarrow/.style={
decoration={markings,
mark=at position 0.999 with
{
\draw (0:0mm) -- +(+135:\DynkinArrowLength); \draw (0:0mm) -- +(-135:\DynkinArrowLength);
},
mark=at position 0.001 with
{
\draw (0:0mm) -- +(+45:\DynkinArrowLength); \draw (0:0mm) -- +(-45:\DynkinArrowLength);
},
},
postaction={decorate}
},
sedge/.style={
},
dedge/.style={
middlearrow,
double distance=0.5mm,
},
tedge/.style={
middlearrow,
double distance=1.0mm+\pgflinewidth,
postaction={draw}, 
},
infedge/.style={
leftrightarrow,
double distance=0.5mm,
},
}
\theoremstyle{plain} \newtheorem{lemma}{Lemma}[section]
\theoremstyle{plain} 
\theoremstyle{plain} \newtheorem{theorem}[lemma]{Theorem}
\theoremstyle{plain} \newtheorem{corollary}[lemma]{Corollary}
\theoremstyle{definition}  
\newtheorem{definition}{Definition}
\newtheorem{remark}[lemma]{Remark}
\newtheorem{exa}[lemma]{Example}
\numberwithin{equation}{section}
\newcommand{\C}{\mathbb{C}}
\newcommand{\Z}{\mathbb{Z}}
\newcommand{\R}{\mathbb{R}}
\newcommand{\K}{\mathbb{K}}
\newcommand{\GL}{\mathrm{\mathop{GL}}}
\newcommand{\SL}{\mathrm{\mathop{SL}}}
\newcommand{\ad}{\mathrm{\mathop{ad}}}
\newcommand{\diag}{\mathrm{\mathop{diag}}}
\newcommand{\ssl}{\mathfrak{\mathop{sl}}}
\newcommand{\so}{\mathfrak{\mathop{so}}}
\newcommand{\g}{\mathfrak{g}}
\newcommand{\h}{\mathfrak{h}}
\newcommand{\z}{\mathfrak{z}}
\renewcommand{\a}{\mathfrak{a}}
\newcommand{\cc}{\mathfrak{c}}
\newcommand{\wG}{\widehat{G}}
\newcommand{\Dih}{{\rm Dih}}
\newcommand{\OO}{\mathcal{O}}
\newcommand{\CC}{\mathcal{C}}
\newcommand{\MM}{\mathcal{M}}
\newcommand{\QQ}{\mathcal{Q}}
\newcommand{\SmallMatrix}[1]{\text{\begin{tiny}${\arraycolsep=0.4\arraycolsep\ensuremath
{\begin{pmatrix}#1\end{pmatrix}}}$\end{tiny}}}
\newcommand{\mye}[1]{\text{\small$|#1\rangle$}}
\renewcommand{\leq}{\leqslant}
\renewcommand{\geq}{\geqslant}
\newcommand{\labelto}[1]{\xrightarrow{\makebox[1.5em]{\scriptsize ${#1}$}}}
\subjclass[2000]{}
\newcounter{ithmcount}
\newenvironment{iprf}{\begin{list}{{\rm
	\alph{ithmcount})}}{\usecounter{ithmcount}\labelwidth-5pt
      \leftmargin0pt \topsep3pt \itemsep1pt \parsep2pt}}{\qedhere\end{list}}
\newenvironment{ithm}{\begin{list}{{\rm \alph{ithmcount})}}{\usecounter{ithmcount}\labelwidth18pt
      \leftmargin18pt \topsep3pt \itemsep1pt \parsep2pt}}{\end{list}}
\begin{document}

\title{Classification of four-rebit states}
\subjclass[2000]{}
\author[H. Dietrich]{Heiko Dietrich}
\address[Dietrich, Origlia]{School of Mathematics, Monash University, Clayton VIC 3800, Australia}
\author[W. A. de Graaf]{Willem A.\ de Graaf}
\address[de Graaf]{Department of Mathematics, University of Trento, Povo (Trento), Italy}
\author[A. Marrani]{Alessio Marrani}
\address[Marrani]{IFT, Departamento de F{\'{i}}sica, Univ. de Murcia, Campus de Espinardo, E-30100 Murcia, Spain}
\author[M. Origlia]{Marcos Origlia}
\email{\rm heiko.dietrich@monash.edu, jazzphyzz@gmail.com, marcos.origlia@unc.edu.ar, willem.degraaf@unitn.it}
\thanks{The first, second, and fourth author were supported by an Australian Research Council grant, identifier DP190100317. The third author is supported by a `Maria Zambrano' distinguished researcher grant, financed by the European Union - NextGenerationEU program}
\keywords{}
\date{\today}

\begin{abstract}
We classify states of four rebits, that is, we classify the orbits of the group $\wG(\R) = \SL(2,\R)^4$ in the space $(\R^2)^{\otimes 4}$.
This is the real analogon of the well-known SLOCC operations in quantum
information theory. By constructing the $\wG(\R)$-module $(\R^2)^{\otimes 4}$
via a $\Z/2\Z$-grading of the simple split real Lie algebra of type D$_4$, the
orbits are divided into three groups: semisimple, nilpotent and mixed.
The nilpotent orbits have been classified in Dietrich et al.\ (2017),  
yielding applications in theoretical physics (extremal black holes in the STU model of $\mathcal{N}=2, D=4$ supergravity, see Ruggeri and Trigiante (2017)). 
Here we focus on the semisimple and mixed orbits which
we classify with recently developed methods based on Galois cohomology, see Borovoi et al.\ (2021).
These orbits are relevant to the
classification of non-extremal (or extremal over-rotating) and two-center extremal black hole
solutions in the STU model. 
\end{abstract}

\maketitle  

\section{Introduction} 

\noindent In a recent paper \cite{complex}, we obtained a complete and
irredundant classification of
the orbits of the group $\wG=\SL(2,\mathbb{C})^{4}$ acting on the space
$(\C^2)^{\otimes 4} = \C^2\otimes \C^2\otimes \C^2 \otimes \C^2$. This is
relevant to Quantum Information Theory because it amounts to the classification
of the entanglement states of four pure multipartite quantum bits
(\textit{qubits}) under the group $\wG$ of reversible Stochastic Local Quantum 
Operations assisted by Classical Communication (SLOCC). Here we
obtain the classification of the orbits of the real group
$\wG(\R)=\SL(2,\R)^4$ on the space $(\R^2)^{\otimes 4}$. This is relevant to real quantum mechanics, where the
elements of $(\R^2)^{\otimes 4}$ are called four-rebit states. Via the
\textit{``black hole /qubit correspondence"} our classification has also applications to high-energy
theoretical physics. We refer to Section \ref{secApp} for a short introduction
into rebits and their relevance to extremal black holes in string theory.

The main idea behind the complex classification is to construct the
representation of $\wG$ on $(\C^2)^{\otimes 4}$ using a $\Z/2\Z$-grading $\g=
\g_0\oplus \g_1$ of the simple Lie algebra $\g$ of type D$_4$ (see Section
\ref{secNotComp} for more details). In this construction the spaces
$(\C^2)^{\otimes 4}$ and
$\g_1$ are identified, yielding a Jordan decomposition of the elements of
$(\C^2)^{\otimes 4}$. This way the elements of (and hence the orbits in) the space
$(\C^2)^{\otimes 4}$ are divided into three groups: nilpotent, semisimple,
and mixed. The main result of \cite{complex} is a classification of the
semisimple and mixed elements;  the classification of the corresponding
nilpotent orbits was already completed a decade earlier by Borsten et al.\
\cite{nilp}.

Analogously to the complex case, the representation of $\wG(\R)$ on
$(\R^2)^{\otimes 4}$ can be
constructed using a $\Z/2\Z$-grading $\g(\R) = \g_0(\R)\oplus \g_1(\R)$ of the
split real form $\g(\R)$ of $\g$. Also here the nilpotent orbits have
been classified in previous work, see Dietrich et al.\ \cite{nilporb}: There are 145 nilpotent orbits, and 101 of these turned out to be relevant to
the study of (possibly multi-center) extremal black holes (BHs) in the STU model (see \cite{STU1,STU2}); this application was discussed in full detail in a subsequent paper by Ruggeri and Trigiante \cite{Trigiante2}. While in various papers, 
such as Bossart et al.\ \cite{15,20}, the classification of extremal BH solutions
had been essentially based on the \textit{complex} nilpotent $\wG$-orbits in $\mathfrak{g}_{1}$, a more intrinsic, accurate and 
detailed treatment was provided in \cite{nilporb,Trigiante2}. 

The present paper deals with the real $\wG(\R)$-orbits of semisimple and mixed elements in $(\R^2)^{\otimes 4}$. Such \emph{non-nilpotent} orbits are relevant for the classification of \textit{non-extremal} (or extremal \textit{over-rotating}) as well as of two-center extremal BH solutions in the STU model of $\mathcal{N}=2, D=4$ supergravity, see Section \ref{secApp}.
A detailed discussion of the application of our classification
to the study of BHs goes beyond the scope of the present investigation and we leave it
for  future work. From now on we focus on the mathematical side of this research.

\pagebreak

The methods that we use to classify the  $\wG(\R)$-orbits in
$(\R^2)^{\otimes 4}$ are based on \cite{trivectors,trivectors2} and employ the theory  of Galois cohomology. One of the main implications of this theory is the
following: Let $v\in (\R^2)^{\otimes 4}$ and consider its complex orbit
$\wG v\subset (\C^2)^{\otimes 4}$. Then the $\wG(\R)$-orbits contained in
$\wG v \cap (\R^2)^{\otimes 4}$ are in bijection with the Galois cohomology
set $H^1 (Z_{\wG}(v))$, where $Z_{\wG}(v) = \{ g\in \wG : g v = v\}$ is
the stabiliser of $v$ in $\wG$. So in principle the only thing one has to do is
to compute $H^1( Z_{\wG}(v) )$ for each $\wG$-orbit in $(\C^2)^{\otimes 4}$ that has
a real representative~$v$. This works well for the nilpotent orbits because
they are finite in number and all have real representatives (we do not discuss the nilpotent case here since a classification is already given in \cite{nilporb}). However, for the orbits of semisimple and mixed elements this is not straightforward: firstly, there is an infinite number of them and, secondly, it is a problem to decide  whether a given complex orbit has a real representative or not.

Our approach to classifying semisimple elements is described in Section
\ref{secSS} and analogous to the method developed in \cite{trivectors2}. However, the work in \cite{trivectors2} relies on some specific preliminary results that do not apply to our case; as a first step, we therefore need to establish the corresponding results for the situation discussed here.  Mixed elements are
considered in Section \ref{secMixed}; also here the methods are similar to
those in \cite{trivectors2}. The main difference is that in the case
treated in \cite{trivectors2}, the stabilisers of semisimple elements all
have trivial Galois cohomology. This is far from being the case in the situation discussed here, which requires significant amendments. For example, we will work with sets of 4-tuples $(p,h,e,f)$, which do not explicitly appear in~\cite{trivectors2}.

In the course of our research we have made frequent use of the computer
algebra system GAP4 \cite{gap4}. This system makes it possible to compute
with the simple Lie algebra $\g$ of type D$_4$. We have used additional GAP
programs of our own, for example to compute defining equations of the
stabilisers of elements in the group $\wG$. 

Our main result is summarised by the following theorem,  it is proved with Theorems \ref{thmRSE} and \ref{thmMTE}.

\begin{theorem}\label{thmMain} The following is established.
 \begin{ithm} 
\item  Up to $\wG(\R)$-conjugacy, the nonzero semisimple elements in $(\R^2)^{\otimes 4}$ are the elements in  Tables \ref{tabreps10}--\ref{tabreps}.
\item Up to $\wG(\R)$-conjugacy, the mixed elements in $(\R^2)^{\otimes 4}$ are the elements in Tables \ref{tabrepsME1}-- \ref{tabrepsME_NS_4_2}
\item Up to $\wG(\R)$-conjugacy, the nilpotent elements in $(\R^2)^{\otimes 4}$ are given in \cite[Table I]{nilporb}.  
 \end{ithm}
 The notation used in these tables is explained in Definition \ref{normalizer}.
\end{theorem}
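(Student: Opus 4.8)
The plan is to organise the classification around the Jordan decomposition supplied by the $\Z/2\Z$-grading $\g(\R) = \g_0(\R)\oplus\g_1(\R)$ of the split real form of type $\mathrm{D}_4$, under which $(\R^2)^{\otimes 4}$ is identified with $\g_1(\R)$ as a $\wG(\R)$-module. Every $v\in\g_1(\R)$ decomposes uniquely as $v = s+n$ with $s$ semisimple, $n$ nilpotent, and $[s,n]=0$, and this decomposition is defined over $\R$; hence it partitions the nonzero $\wG(\R)$-orbits into the three disjoint families of the statement. Part (c) requires nothing beyond the classification in \cite{nilporb}. For (a) and (b) the method is the Galois-cohomological one of \cite{trivectors,trivectors2}: starting from the complex classification of \cite{complex}, for each complex $\wG$-orbit $\wG v$ that meets $(\R^2)^{\otimes 4}$ one uses the bijection between the $\wG(\R)$-orbits in $\wG v\cap(\R^2)^{\otimes 4}$ and the Galois cohomology set $H^1(Z_{\wG}(v))$ to read off the real forms, with the stabilisers $Z_{\wG}(v)$ computed explicitly in GAP.

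For the semisimple elements (Theorem \ref{thmRSE}) I would work through Cartan subspaces: the complex semisimple orbits correspond to points of a Cartan subspace of $\g_1$ modulo the little Weyl group, and the first task --- replacing preliminary results of \cite{trivectors2} that do not transfer to the present setting --- is to determine the $\wG(\R)$-classes of real Cartan subspaces and, within each, which points admit a real representative. For such a $v$ one then computes $Z_{\wG}(v)$ and its cohomology $H^1(Z_{\wG}(v))$, and twists $v$ by each cocycle class to obtain an explicit real tensor; these are the entries of Tables \ref{tabreps10}--\ref{tabreps}. Since, in contrast to \cite{trivectors2}, these stabilisers need not have trivial cohomology, one must also verify irredundancy, i.e.\ that distinct Cartan-subspace classes and distinct cocycle classes yield non-conjugate tensors.

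For the mixed elements (Theorem \ref{thmMTE}) the same scheme is applied to pairs: a mixed $v = s+n$ is recorded by its semisimple part $s$, drawn from the list of (a), together with a nonzero nilpotent $n$ in the relevant part of the centraliser $\mathfrak{z}_{\g_1}(s)$, and such nilpotents are classified up to the action of $Z_{\wG}(s)$, working one real form at a time through Galois cohomology. Because $Z_{\wG}(s)$ now has nontrivial cohomology, the tidy correspondence of \cite{trivectors2} must be refined; concretely I would attach to each mixed orbit a $4$-tuple $(p,h,e,f)$, namely the semisimple part together with an $\ssl_2$-triple through the nilpotent part, and classify these tuples, which is the device that keeps the cohomology computation and the elimination of redundancies under control. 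Assembling the three families and fixing the table notation as in Definition \ref{normalizer} then yields Theorem \ref{thmMain}.

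The main obstacle, I expect, is twofold: deciding, among infinitely many complex semisimple and mixed orbits, exactly which ones have real points; and then computing $H^1$ of the (frequently disconnected and non-abelian) stabilisers and converting every cohomology class into an explicit tensor, together with the global consistency check that the resulting tables are complete and irredundant.
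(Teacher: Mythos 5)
Your proposal follows essentially the same route as the paper: partition the orbits via the Jordan decomposition coming from the $\Z/2\Z$-grading, cite \cite{nilporb} for the nilpotent case, use the Galois-cohomology bijection between $\wG(\R)$-orbits in $\wG v\cap\g_1(\R)$ and $H^1(Z_{\wG}(v))$, detect real points of semisimple orbits through the Cartan-subspace/little-Weyl-group structure (the paper's Theorem \ref{thmRP} via $H^1(\Gamma_{\Pi_i})$), and reduce mixed orbits to $4$-tuples $(p,h,e,f)$ with homogeneous $\ssl_2$-triples exactly as in Section \ref{secMixed}. The outline is correct and matches the paper's proof, which consists of Theorems \ref{thmRSE} and \ref{thmMTE} together with the cited nilpotent classification.
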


\noindent{\bf Structure of this paper.} In Section \ref{secApp} we briefly comment on applications to real quantum mechanics and non-extremal black holes. In Section \ref{secNotComp} we introduce more notation and recall the known classifications over the complex field; these classifications are the starting point for the classifications over the real numbers. In Section \ref{secGH} we discuss some results from Galois cohomology that will be useful for splitting a known complex orbit into real orbits. Section~\ref{secSS} presents our classification of real semisimple elements; we prove our first main result Theorem~\ref{thmRSE}. In Section~\ref{secMixed} we prove Theorem \ref{thmMTE}, which completes the classification of the real mixed elements. The Appendix contains tables listing our classifications.

\section{Rebits and the black hole/qubit correspondence}\label{secApp}

\subsection{On rebits}

\textit{Real} quantum mechanics (that is, quantum mechanics defined over
real vector spaces) dates back to St\"{u}ckelberg \cite{Stueck}. It 
provides an interesting theory whose study may help to
discriminate among the aspects of \textit{quantum entanglement} which are
unique to standard quantum theory and those aspects which are more generic over
other physical theories endowed with this phenomenon \cite{real-QM}. Real
quantum mechanics is based on the \textit{rebit}, a quantum bit with \textit{%
real} coefficients for probability amplitudes of a two-state system, namely
a two-state quantum state that may be expressed as a real linear combination
of $\left\vert 0\right\rangle $ and $\left\vert 1\right\rangle $ (which can
also be considered as restricted states that are known to lie on a
longitudinal great circle of the Bloch sphere corresponding to only real
state vectors). In other words, the density matrix of the processed quantum
state $\rho $ is real; that is, at each point in the quantum computation, it
holds that $\left\langle x|\rho |y\right\rangle \in \mathbb{R}$  for all $%
\left\vert x\right\rangle $ and $\left\vert y\right\rangle $ in the
computational basis.

As discussed in \cite{real}, \cite{Levay-stringy}, and  \cite[Appendix B]{Borsten}, quantum computation based on rebits is qualitatively
different from the complex case. Following \cite{real-QM}, some entanglement
properties of two-rebit systems have been discussed in \cite{two-rebit},
also exploiting quaternionic quantum mechanics. Moreover, as recalled in
\cite{rebit-1}, rebits were shown in \cite{rebit-0} to be sufficient for
universal quantum computation; in that scheme, a quantum state of $n$
qubits 
\[
\left\vert \psi \right\rangle =\sum\nolimits_{\mathbf{v}\in \mathbb{Z}_{2}^{n}}r_{%
\mathbf{v}}e^{i\theta _{\mathbf{v}}}\left\vert \mathbf{v}\right\rangle \quad (r_{\mathbf{v}}\in \mathbb{R}^{+},\; \theta _{\mathbf{v}}\in \mathbb{R})
\]%
can be encoded into a state of $n+1$ rebits,%
\[
\overline{\left\vert \psi \right\rangle }=\sum\nolimits_{\mathbf{v}\in \mathbb{Z}%
_{2}^{n}}\left( r_{\mathbf{v}}\cos \theta _{\mathbf{v}}\left\vert \mathbf{v}%
\right\rangle \otimes \left\vert R\right\rangle +r_{\mathbf{v}}\sin \theta _{%
\mathbf{v}}\left\vert \mathbf{v}\right\rangle \otimes \left\vert
I\right\rangle \right) ,
\]%
where the additional rebit (which has been also named \textit{universal rebit%
} or \textit{ubit} \cite{rebit-2}), with basis states $\left\vert
R\right\rangle =\left\vert 0\right\rangle $ and $\left\vert I\right\rangle
=\left\vert 1\right\rangle $, allows one to keep track of the real and
imaginary parts of the unencoded $n$-qubit state.

It should also be remarked that in \cite{rebit-3} the three-tangle for three
rebits has been defined and evaluated, resulting to be expressed by the same
formula as in the complex case, but \textit{without} an overall absolute
value sign: thus, unlike the usual three-tangle, the rebit three-tangle can
be negative. In other words, by denoting the pure three rebits state as%
\[
\left\vert \phi \right\rangle =\sum\nolimits_{i,j,k\in \mathbb{Z}_{2}}a_{ijk}\left%
\vert ijk\right\rangle ,
\]%
where the binary indices $i,j,k$ correspond to rebits A, B, C, respectively,
the three-tangle is simply four times the \textit{Cayley's hyperdeterminant}
\cite{Cayley} of the cubic $2\times 2\times 2$ matrix $a_{ijk}$, see \cite{rebit-3}.

\subsection{Rebits and black holes}

In recent years, the relevance of rebits in high-energy theoretical physics
was highlighted by the determination of striking  relations between
the entanglement of pure states of two and three qubits and extremal BHs
holes in string theory. In this framework, which has been subsequently
dubbed as the ``black hole / qubit
correspondence''  (see for example \cite{Oct, BH-1, BH-2} for
reviews and references), rebits acquire the physical meaning of the
electric and magnetic charges of the extremal BH, and they linearly
transform under the generalised electric-magnetic duality group (named
U-duality group in string theory) $\mathcal{G}(\mathbb{R})$ of the
Maxwell-Einstein (super)gravity theory under consideration.\footnote{%
In supergravity, the approximation of \textit{real} (rather than integer)
electric and magnetic charges of the BH is often considered, thus
disregarding the charge quantization.} This development started with the seminal
paper \cite{Duff-Cayley}, in which Duff pointed out that the entropy of the
so-called extremal BPS STU BHs can be expressed in a very compact
way in terms of Cayley's hyperdeterminant \cite{Cayley},
which, as mentioned above, plays a prominent role as the three-tangle in
studies of three-qubit entanglement \cite{rebit-3}. Crucially, the electric
and magnetic charges of the extremal BH, which are conserved due to
the underlying Abelian gauge invariance, are forced to be real because they
are nothing but the fluxes of the two-form field strengths of the Abelian
potential one-forms, as well as of their dual forms, which are real. Later on, for example in \cite{Kallosh-Linde},
\cite{Levay-0, Levay-1}, \cite{Duff-Ferrara-1, Duff-Ferrara-2} and
subsequent developments, Duff's observation was generalised and extended to
non-BPS BHs (which thus break all supersymmetries), also in ($\mathcal{N}>2$)-extended supergravity theories in four and five space-time
dimensions. Further mathematical similarities were thoroughly investigated
by L\'{e}vay, which for instance showed that the frozen values of the moduli
in the calculation of the macroscopic, Bekenstein-Hawking BH entropy
in the STU model are related to finding the canonical form for a pure
three-qubit entangled state, whereas the extremisation of the BPS mass with
respect to the moduli is connected to the problem of finding the so-called
optimal local \textit{distillation protocol} \cite{Levay-2, Levay-3}.

Another application of rebits concerns extremal BHs with two-centers. Multi-center BHs are a natural generalisation of single-center
BHs. They
occur as solutions to Maxwell-Einstein equations in $4D$, regardless of the
presence of local supersymmetry, and they play a prominent role within the
dynamics of high-energy theories whose ultra-violet completion aims at
describing Quantum Gravity, such as $10D$ superstrings and $11D$ M-theory.
In multi-center BHs the \textit{attractor mechanism} \cite{AM1,AM1-1,AM1-2,AM1-3, AM2} is
generalised by the so-called \textit{split attractor flow} \cite{SP,SP1,SP2},
concerning the existence of a co-dimension-one region - the \textit{marginal
stability} wall - in the target space of scalar fields, where a stable
multi-center BH may decay into its various single-center constituents, whose
scalar flows then separately evolve according to the corresponding attractor
dynamics.

In this framework, the aforementioned real fluxes of the two-form Abelian field strengths and of
their duals, which are usually referred to as \textit{electric} and \textit{%
magnetic} charges of the BH, fit into a representation $\mathbf{R}$ of the $%
4D$ $U$-duality group $\mathcal{G}\left( \R\right) $. In the STU model of $\mathcal{N}=2, D=4$ supergravity, $\mathcal{G}\left( \R\right) =\SL(2,\R)^{3}$ and $\mathbf{R}=\left( \R^{2}\right) ^{\otimes 3}$,
and each $\SL(2,\R)^{3}$-orbit supports a \textit{unique} class of
single-center BH solutions. In general, in presence of a multi-center BH
solution with $p$ centers, the dimension $I_{p}$ of the
ring of $\mathcal{G}\left( \R\right) $-invariant homogeneous
polynomials constructed with $p$ distinct copies of the $\SL(2,\R%
)^{3}$-representation charge $\mathbf{R}$ is given by the general formula
\cite{FMOSY}%
\begin{equation}
p\dim _{\R}\mathbf{R}=\dim _{\R}\mathcal{O}_{p}+I_{p},
\label{counting}
\end{equation}%
where $\mathcal{O}_{p}=\mathcal{G}\left( \R\right) /\mathcal{H}%
_{p}\left( \R\right) $ is a generally non-symmetric coset describing
the generic, open $\mathcal{G}\left( \R\right) $-orbit, spanned by
the $p$ copies of the charge representation $\mathbf{R}$, each pertaining to
one center of the multi-center solution. A crucial feature of \textit{%
multi-center} ($p>1$) BHs is that the various ($I_{p}>1$) $\mathcal{G}\left(
\R\right) $-invariant polynomials arrange into multiplets of a
global, \textit{\textquotedblleft horizontal\textquotedblright } symmetry
group\footnote{%
Actually, the \textquotedblleft horizontal\textquotedblright\ symmetry group
is $\GL(p,\R)$, where the additional scale symmetry with respect to
$\SL(p,\R)$ is encoded by the homogeneity of the $\mathcal{G}(%
\R)$-invariant polynomials in the BH charges. The subscript
\textquotedblleft hor\textquotedblright\ stands for \textquotedblleft
horizontal\textquotedblright\ throughout.} $\SL_{\text{hor}}(p,\R)$
\cite{FMOSY}, encoding the combinatoric structure of the $p$-center
solutions of the theory, and commuting with $\mathcal{G}\left( \R%
\right) $ itself. Thus, by considering two-center BHs (that is,  $p=2$ -- an assumption which does not imply any loss of
generality due to tree structure of split attractor flows in the STU
model), it holds that $\dim _{\R}\mathbf{R}=8$ and the
stabiliser of $\mathcal{O}_{p=2}$ has trivial identity connected component.
The two-center version of formula \eqref{counting} in the STU model
yields%
\begin{equation}
{\rm STU}~\text{:~}I_{p=2}=2\dim _{\R}\left( \left( \R^{2}\right)
^{\otimes 3}\right) \mathbf{-}\dim _{\R}\left( \SL(2,\R%
)^{3}\right) =2\cdot 8-9=7,
\end{equation}%
implying that the ring of $\SL(2,\R)^{3}$-invariant homogeneous
polynomials built out of two copies of the tri-fun\-da\-mental representation $%
\left( \R^{2}\right) ^{\otimes 3}$ has dimension $7$. As firstly
discussed in \cite{FMOSY} and then investigated in \cite{ADFMT, ProcStora,
CFMY-Small, FMY-FI}, the seven $\SL(2,\R)^{3}$-invariant generators
of the aforementioned polynomial ring arrange into one quintuplet (in the
spin-$2$ irreducible representation $\mathbf{5}$) and two singlets $\mathbf{1%
}\oplus \mathbf{1}^{\prime }$ of the ``horizontal''  symmetry group $\SL_{\text{hor}}(2,\R)$:
\begin{equation}
I_{p=2}=7=\underset{\text{under~}\SL_{\text{hor}}(2,\R)}{\underset{%
\text{deg~}4}{\mathbf{5}}\oplus \underset{\text{deg~}2}{\mathbf{1}}\oplus
\underset{\text{deg~}4}{\mathbf{1}^{\prime }}},  \label{decomp}
\end{equation}%
where the degrees of each term (corresponding to one or more homogeneous
polynomials) has been reported. The overall semisimple global group
providing the action of the $U$-duality as well as of the ``horizontal'' symmetry on two-center BHs is
\begin{equation}
\SL_{\text{hor}}(2,\R)\otimes \SL(2,\R)^{3}\simeq \SL(2,%
\R)^{4}=\wG(\R),
\end{equation}%
acting on the $\SL_{\text{hor}}(2,\R)$-doublet of $G$-representations
$\mathbf{R}$'s, namely%
\begin{equation}
\R^{2}\otimes \mathbf{R}=\R^{2}\otimes \left( \R%
^{2}\right) ^{\otimes 3}\simeq \left( \R^{2}\right) ^{\otimes 4}.
\end{equation}

Since the \textquotedblleft horizontal\textquotedblright\ factor $\SL_{\text{%
hor}}$ stands on a different footing than the $U$-duality group $\SL(2,%
\R)^{3}$, only the discrete group Sym$_{3}$ of permutations of the
three tensor factors in $\mathbf{R}=\left( \R^{2}\right) ^{\otimes
3} $ should be taken into account when considering two-center BH solutions
in the STU model, to which a classification invariant under Sym$%
_{3}\ltimes \SL(2,\R)^{3}$ thus pertains. Clearly, the two singlets
in the right hand side of (\ref{decomp}) are invariant under the whole $\SL_{\text{hor}}(2,\R)\otimes \SL(2,\R)^{3}$; on the other hand, when
enforcing the symmetry also under the \textquotedblleft
horizontal\textquotedblright\ $\SL_{\text{hor}}(2,\R)$, one must
consider its non-transitive action on the quintuplet $\mathbf{5}$ occurring
in the right hand side of (\ref{decomp}). As explicitly computed (for example, in \cite{FMOSY})
and as known within the classical theory of invariants (see for example \cite%
{Vinberg-Weyl} as well as the Tables of \cite{Kac-80}), the spin-$2$ $\SL_{2}$%
-representation $\mathbf{5}$ has a two-dimensional ring of invariants,
\textit{finitely generated} by a \textit{quadratic} and a \textit{cubic}
homogeneous polynomial :%
\begin{equation}
I_{\text{spin-}2}=\dim _{\R}\left( \mathbf{5}\right) \mathbf{-}\dim
_{\R}\SL_{\text{hor}}(2,\R)=5-3=2=\underset{\text{under~}\SL_{%
\text{hor}}(2,\R)}{\underset{\text{deg~}2}{\mathbf{1}^{\prime \prime
}}\oplus \underset{\text{deg~}3}{\mathbf{1}^{\prime \prime \prime }}}.
\end{equation}%
This results into a four-dimensional basis of $(\rm{Sym}_{3}\ltimes \left( \SL_{%
\text{hor}}(2,\R)\otimes \SL(2,\R)^{3}\right))$-invariant
homogeneous polynomials, respectively of degree $2$, $4$, $8$ and $12$ in
the elements of the two-center BH charge representation space $\left(
\R^{2}\right) ^{\otimes 4}$. However, as discussed in \cite{FMOSY},
a lower degree invariant polynomial of degree $6$ can be introduced and
related to the degree-$12$ polynomial, giving rise to a
4-dimensional basis of $({\rm Sym}_{3}\ltimes \left( \SL_{\text{hor}}(2,\R%
)\otimes \SL(2,\R)^{3}\right))$-invariant homogeneous polynomials
with degrees $2$, $4$, $6$ and $8$, respectively, see \cite{FMOSY}.

We recall that the enforcement of the whole discrete
permutation symmetry Sym$_{4}$ (as done in Quantum Information Theory applications)
allows for the degrees of the four $({\rm Sym}_{4}\ltimes \left( \SL_{\text{hor}}(2,\R)\otimes \SL(2,\R)^{3}\right))$-invariant polynomial
generators to be further lowered down to $2$, $4$, $4$ and $6$; this is explicitly
computed  in \cite{Verstraete, LT} and then discussed in \cite{Levay} in
relation to two-center extremal BHs in the STU model. In all cases, the
lowest-order element of the invariant basis, namely the homogeneous
polynomial \textit{quadratic} in the BH charges, is nothing but the \textit{%
symplectic product} of the two copies of the single-center charge
representation $\mathbf{R}=\left( \R^{2}\right) ^{\otimes 3}$; such
a symplectic product is constrained to be non-vanishing in non-trivial and
regular two-center BH solutions with \textit{mutually non-local} centers
\cite{FMOSY}. This implies that regular two-center extremal BHs are related\enlargethispage{0.4cm}
to non-nilpotent orbits of the whole symmetry $\SL_{\text{hor}}(2,\R)\otimes \SL(2,\R)^{3}$ (with a discrete factor Sym$_{3}$
or Sym$_{4}$, as just specified) on $\left( \R^{2}\right) ^{\otimes 4}$. The application of the classification of such orbits (which are the object of interest in
this paper) to the study of two-center extremal BHs in the
prototypical STU model goes beyond the scope of the present investigation,
and we leave it for further future work.

\section{Notation and classifications over the complex field}\label{secNotComp}

\subsection{The grading} Let  $\g$ be the simple Lie algebra of type D$_4$ defined over the complex
numbers. Let $\Psi$ denote its root system with respect to a fixed Cartan
subalgebra $\mathfrak{t}$. Let $\gamma_1,\ldots,\gamma_4$ be a fixed choice of simple roots such that the Dynkin diagram of $\Psi$ is labelled as follows
\begin{center}
\begin{tikzpicture}[x=0.8cm,y=0.8cm]
\node[dnode, label=below:{\small $1$}] (1) at (0,1) {};
\node[dnode, label=below:{\small $2$}] (2) at (1,1) {};
\node[dnode, label=below:{\small $3$}] (3) at (2,1) {};
\node[dnode, label=left:{\small $4$}] (4) at (1,2) {};
\path (1) edge[sedge] (2)
(2) edge[sedge] (3)
(2) edge[sedge] (4);
\end{tikzpicture}
\end{center}
We now construct a $\Z/2\Z$-grading of $\g$: let $\g_0$ be spanned by
$\mathfrak{t}$ along with the root spaces $\g_\gamma$, where $\gamma= \sum_i k_i
\gamma_i$ has $k_2$ even, and let $\g_1$ be spanned by those $\g_\gamma$
where $\gamma= \sum_i k_i \gamma_i$ has $k_2$ odd. Let $\gamma_0=\gamma_1+2\gamma_2+\gamma_3+\gamma_4$ be the highest root
of $\Psi$. The  root system of $\g_0$ is $\{\pm \gamma_0, \pm \gamma_1, \pm \gamma_3,\pm\gamma_4\}$, hence
$$\g_0\cong \ssl(2,\C)^4=\ssl(2,\C)\oplus \ssl(2,\C)\oplus\ssl(2,\C)\oplus\ssl(2,\C).$$
Taking $-\gamma_0,\gamma_1,\gamma_3,\gamma_4$ as basis of simple roots of
$\g_0$ we have that $-\gamma_2$ is the highest weight of the $\g_0$-module
$\g_1$, which therefore is isomorphic to $(\C^2)^{\otimes 4}$. We fix a basis
$\{e_0,e_1\}$ of $\C^2$ and denote the basis elements of
$(\C^2)^{\otimes 4}$ by
\[\mye{i_1i_2i_3i_4}= e_{i_1}\otimes e_{i_2}\otimes e_{i_3}\otimes e_{i_4}.\]
Mapping any nonzero root vector in $\g_{-\gamma_2}$ to $\mye{0000}$
extends uniquely to an isomorphism $\g_1 \to (\C^2)^{\otimes 4}$ of $\ssl(2,\C)^4$-modules. We denote  by $G$ the adjoint group of $\g$, and we write $G_0$ for the 
connected algebraic subgroup of $G$ with  Lie algebra $\ad_\g \g_0\cong
\mathfrak{sl}(2,\C)^4$. 
The isomorphism $\ssl(2,\C)^4 \to \g_0$ lifts to a surjective 
morphism $\pi\colon \widehat{G}\to G_0$  of algebraic groups, which makes
$\g_1$ into a $\wG$-module isomorphic to $(\C^2)^{\otimes 4}$.

In order to define a similar grading over $\R$ we take a basis of $\g$
consisting of root vectors and basis elements of $\mathfrak{t}$, whose
real span is a real Lie algebra (for example, we can take a Chevalley basis of
$\g$). We denote this real Lie algebra by $\g(\R)$. We set $\g_0(\R) =
\g_0\cap \g(\R)$ and $\g_1(\R) = \g_1\cap \g(\R)$, so that  \[\g(\R) =
\g_0(\R)\oplus \g_1(\R).\] If $G_0(\R)$ denotes the group of real points of $G_0$, then  $\pi$ restricts to a morphism $\pi \colon \wG(\R) \to G_0(\R)$ that makes $\g_1(\R)$ a $\wG(\R)$-module isomorphic to $(\R^2)^{\otimes 4}$.

A first consequence of these constructions is the existence of a Jordan
decomposition of the elements of the modules $(\C^2)^{\otimes 4}$ and 
$(\R^2)^{\otimes 4}$. Indeed, the Lie algebras $\g$ and $\g(\R)$ have such
decompositions as every element $x$ can be written uniquely as $x=s+n$
where $\ad s$ is semisimple, $\ad n$ is nilpotent, and  $[s,n]=0$. 
It is straightforward to see that if $x$ lies in $\g_1$ or $\g_1(\R)$, then the same holds for its  semisimple and nilpotent parts. Thus, the elements of
$(\C^2)^{\otimes 4}$ and $(\R^2)^{\otimes 4}$ are divided into three groups: semisimple,
nilpotent and mixed. Since the actions of $\wG$ and $\wG(\R)$ respect the Jordan
decomposition, also the orbits of these groups in their respective modules
are divided into the same three groups.

A second consequence is that we can consider $\ssl_2$-triples instead of
nilpotent elements; we use these in Section~\ref{secMixed} when considering  mixed elements: the classification of the orbits of mixed elements
with a fixed semisimple part $p$ reduces to the classification of
the nilpotent orbits in the centraliser of $p$, which in turn reduces to the
classification of orbits of certain $\ssl_2$-triples. We provide more details in Section~\ref{secMixed}.

\subsection{Notation}\label{secNot}
We now we recall the notation used in \cite{complex} to describe the
classification of $\wG$-orbits in $(\C^2)^{\otimes 4}$.

A \emph{Cartan subspace} of $\g_1$ is a maximal space consisting of commuting
semisimple elements. A Cartan subspace $\h$ of $\g_1$ (and in fact a Cartan subalgebra of $\g$) is spanned by  
$$u_1=\mye{0000}+\mye{1111},\;\; u_2=\mye{0110}+\mye{1001},\;\; u_3=\mye{0101}+\mye{1010},\;\; u_4=\mye{0011}+\mye{1100}.$$
We denote by $\Phi$  the corresponding root system with Weyl group $W$.
This group acts on $\Phi$ and $\h$ in the following way.
For $\alpha\in \Phi$ let $s_\alpha\in W$ be the corresponding reflection. If $\beta\in \Phi$ and $h\in \h$, then $s_\alpha(\beta)=\beta-\beta(h_\alpha)\alpha$ and $s_\alpha (h) = h - \alpha(h) h_\alpha$
where $h_\alpha$ is the unique element of $[\g_\alpha,\g_{-\alpha}]\leq \h$ with
$\alpha(h_\alpha)=2$. This defines a $W$-action on $\h$ and we write $W_p=\{\alpha\in W: \alpha(p)=p\}$ for the stabiliser of $p\in\h$ in $W$; the latter is generated by all $s_\alpha$ with $\alpha\in\Phi_p$ where $\Phi_p=\{\alpha\in\Phi:\alpha(p)=0\}$, see \cite[Lemma 2.4]{complex}. For a root subsystem $\Pi\subseteq \Phi$ define
\begin{align*}
  \h_\Pi&=\{p\in \h : \alpha(p)=0\text{ for all }\alpha\in\Pi\}, & 
   W_\Pi &=\langle s_\alpha : \alpha\in \Pi\rangle,\\
 \h_\Pi^\circ&=\{p\in \h_\Pi : \alpha(p)\ne0 \text{ for all }\alpha\in\Phi\setminus\Pi\},&  \Gamma_\Pi& = N_W(W_\Pi)/W_\Pi.
\end{align*}

Let   $\zeta$ be a fixed primitive $8$-th root of unity; for $A=\SmallMatrix{a&b\\c&d}$ and $u\in\C^\times$ we write
\begin{equation}\label{eqmats}
\begin{array}{rclrclrclrclrcl}
  A^\#&=&\SmallMatrix{d&c\\b&a},& D(u)&=&\SmallMatrix{u&0\\0&u^{-1}},& I&=&\SmallMatrix{1&0\\0&1},& J&=&\SmallMatrix{0&1\\-1&0},& K&=&\SmallMatrix{0&\imath \\\imath &0},\\[1.5ex] F&=&\SmallMatrix{1/2 & \imath /2\\\imath &1},&
  L&=&\SmallMatrix{\imath&0\\0&-\imath},&M&=&\SmallMatrix{\zeta^3&0\\0,&-\zeta},&N&=&\SmallMatrix{\zeta&0\\0&-\zeta^3}.
\end{array}
\end{equation}
Throughout this paper,  we freely identify the spaces $\g_1\cong (\C^2)^{\otimes 4}$ and
$\g_1(\R)\cong (\R^2)^{\otimes 4}$ and we write elements of $\wG$ and $\wG(\R)$  as 4-tuples
$(A,B,C,D)$, with  $A,B,C,D$ in $\SL(2,\C)$, respectively in $\SL(2,\R)$. 
 

\subsection{Complex classifications}
In \cite[Section 3.1]{complex} we have determined 11 subsystems $\Pi_1,\ldots,\Pi_{11}$  to classify the semisimple $\wG$-orbits in $\g_1$; these sets are also described in Table \ref{tabW}. The following result summarises \cite[Proposition 2.5, Lemma 2.9, Theorem~3.2, Lemma 3.5, Proposition 3.6]{complex}.

\begin{theorem}[Complex classification of semisimple elements \cite{complex}]\label{thmSE}
 ~{} 
 \begin{ithm} 
 \item Each semisimple $\wG$-orbit in $(\C^2)^{\otimes 4}$ intersects exactly
   one of the sets $\h_{\Pi_i}^\circ$ nontrivially. Two elements of
   $\h_{\Pi_i}^\circ$ are $\wG$-conjugate if and only if they are
   $\Gamma_{\Pi_i}$-conjugate. 
   Each $\Gamma_{\Pi_i}$ can be realised as complement subgroup to $W_{\Pi_i}$ in~$N_W(W_{\Pi_i})$, so as a matrix group relative to the
   basis $u_1,\ldots,u_4$ of $\h$.
   The group $\Gamma_{\Pi_2}\cong (\mathbb{Z}/2\mathbb{Z})^3$ is generated by all $4\times 4$ diagonal matrices that have two $1$s and two $-1$s on the diagonal; the groups $\Gamma_{\Pi_4},\Gamma_{\Pi_5},\Gamma_{\Pi_6}\cong \Dih_4$ are isomorphic to the dihedral group of order $8$ and defined as
\[\Gamma_{\Pi_4}=\langle \SmallMatrix{1&0&0&0\\0&1&0&0\\0&0&1&0\\0&0&0&-1}, \SmallMatrix{0&0&0&-1\\0&0&1&0\\0&1&0&0\\1&0&0&0}\rangle,\;
\Gamma_{\Pi_5}=\langle  \SmallMatrix{1&0&0&0\\0&1&0&0\\0&0&-1&0\\0&0&0&1},\SmallMatrix{0&0&-1&0\\0&0&0&-1\\1&0&0&0\\0&-1&0&0}\rangle,\;
\Gamma_{\Pi_6}=\langle  \SmallMatrix{-1&0&0&0\\0&1&0&0\\0&0&1&0\\0&0&0&1},\SmallMatrix{0&-1&0&0\\1&0&0&0\\0&0&0&-1\\0&0&-1&0}\rangle.
\]
Furthermore, $\Gamma_{\Pi_1}=W$ and the remaining $\Gamma_{\Pi_i}$ are equal to
$\{\pm 1\}$.
\item  If $x,y\in \h_{\Pi_i}^\circ$, then $Z_{\wG}(x)=Z_{\wG}(y)$, and the group
  $Z_{\wG}(x)$ is given in Row $i$ of Table \ref{tabZ}.
 \end{ithm}
 \end{theorem}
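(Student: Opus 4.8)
The plan is to deduce the statement from Vinberg's structure theory for the $\Z/2\Z$-graded Lie algebra $\g=\g_0\oplus\g_1$ together with explicit computations in the little Weyl group; this is the route taken in \cite{complex}. Two structural facts do the heavy lifting: (i) every semisimple element of $\g_1$ is $G_0$-conjugate — hence $\wG$-conjugate, since $\pi(\wG)=G_0$ and the $\wG$-action on $\g_1$ factors through $\pi$ — into the fixed Cartan subspace $\h$; and (ii) two elements of $\h$ are $\wG$-conjugate if and only if they lie in one orbit of the finite group $W=N_{G_0}(\h)/Z_{G_0}(\h)$ acting on $\h$, with root system $\Phi$ as above. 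So I would first reduce everything to the $W$-action on $\h$ and to the stabilisers in $\wG$ of its points.

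For part (a): for $p\in\h$ set $\Phi_p=\{\alpha\in\Phi:\alpha(p)=0\}$, a closed subsystem, and recall $W_p=W_{\Phi_p}$ from \cite[Lemma 2.4]{complex}. The sets $\h_\Pi^\circ$, as $\Pi$ runs over closed subsystems, partition $\h$ with $p\in\h_{\Phi_p}^\circ$, and $w\cdot\h_\Pi^\circ=\h_{w\Pi}^\circ$; since $\Phi_{wp}=w\Phi_p$, a single semisimple orbit meets $\h_\Pi^\circ$ for exactly one $W$-conjugacy class of $\Pi$. A finite computation in $W$ (carried out in GAP) then shows that the classes that occur are exactly the eleven with representatives $\Pi_1,\dots,\Pi_{11}$ of Table \ref{tabW}, which gives the first assertion. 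Next I would fix $i$ and observe that an element of $W$ maps $\h_{\Pi_i}^\circ$ into itself iff it stabilises the subsystem $\Pi_i$, iff it lies in $N_W(W_{\Pi_i})$, while $W_{\Pi_i}$ fixes $\h_{\Pi_i}\supseteq\h_{\Pi_i}^\circ$ pointwise; hence the $W$-action on $\h_{\Pi_i}^\circ$ descends to a faithful action of $\Gamma_{\Pi_i}=N_W(W_{\Pi_i})/W_{\Pi_i}$ (faithful because $\h_{\Pi_i}^\circ$ is Zariski-dense in the linear space $\h_{\Pi_i}$), so two points of $\h_{\Pi_i}^\circ$ are $\wG$-conjugate iff they are $\Gamma_{\Pi_i}$-conjugate. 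Finally, writing the elements of $W$ as $4\times4$ matrices on $u_1,\dots,u_4$ via $s_\alpha h=h-\alpha(h)h_\alpha$, I would compute each $N_W(W_{\Pi_i})$, exhibit a complement $\Gamma_{\Pi_i}$ to $W_{\Pi_i}$ in it, and read off its generators, obtaining $W$ for $i=1$, $(\Z/2\Z)^3$ for $i=2$, the three copies of $\Dih_4$ for $i\in\{4,5,6\}$, and $\{\pm1\}$ otherwise.

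For part (b): decompose $\g$ into root spaces with respect to the Cartan subalgebra $\h$. Because $\h\subseteq\g_1$, the grading involution $\theta$ sends $\g_\alpha$ to $\g_{-\alpha}$, so on each block $\g_\alpha\oplus\g_{-\alpha}$ the $+1$-eigenline of $\theta$ is $\C(x_\alpha+\theta x_\alpha)$, whence
\begin{equation*}
\z_{\g_0}(p)=\bigoplus_{\alpha>0,\ \alpha(p)=0}\C\,(x_\alpha+\theta x_\alpha)
\end{equation*}
depends only on $\Phi_p$. In particular $\z_{\g_0}(x)=\z_{\g_0}(y)$ as subalgebras for $x,y\in\h_{\Pi_i}^\circ$, so $Z_{G_0}(x)^\circ=Z_{G_0}(y)^\circ$; that the full stabiliser $Z_{\wG}(x)=\pi^{-1}(Z_{G_0}(x))$ also depends only on $i$ — including its component group — is then checked stratum by stratum, and the resulting group is recorded as an explicit $4$-tuple subgroup of $\wG=\SL(2,\C)^4$, matching Row $i$ of Table \ref{tabZ}.

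The conceptual steps above are routine; the real work, and the main obstacle, is the explicit bookkeeping: identifying the eleven subsystem classes, computing each $N_W(W_{\Pi_i})$ and a complement $\Gamma_{\Pi_i}$, and — most delicately — pinning down the stabilisers $Z_{\wG}(p)$ together with their component groups. The last point is subtle because $\wG=\SL(2,\C)^4$ is a strict cover of $G_0$, so $Z_{\wG}(p)$ can acquire extra components or central elements invisible in $G_0$, and all of these groups must be presented uniformly as explicit subgroups of $\SL(2,\C)^4$; this is precisely where the GAP computations become indispensable.
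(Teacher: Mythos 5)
The paper does not prove Theorem \ref{thmSE}: it is imported verbatim from \cite{complex} (the text states it ``summarises Proposition 2.5, Lemma 2.9, Theorem 3.2, Lemma 3.5, Proposition 3.6'' of that reference), so there is no in-paper argument to compare against. Your outline --- reducing semisimple elements into the fixed Cartan subspace $\h$ via Vinberg theory, governing conjugacy by the little Weyl group $W$, stratifying $\h$ by the complete subsystems $\Phi_p$ so that $\Gamma_{\Pi_i}=N_W(W_{\Pi_i})/W_{\Pi_i}$ controls conjugacy on $\h_{\Pi_i}^\circ$, and pinning down the stabilisers $Z_{\wG}(p)$ stratum by stratum --- is precisely the strategy of that companion paper and is correct modulo the explicit finite computations (the eleven subsystem classes, the complements $\Gamma_{\Pi_i}$, and the component groups of the stabilisers) that it rightly defers to GAP.
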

See \cite[Remark 3.3]{complex} for a comment on the $\Gamma_{\Pi_i}$-orbits in $\h_{\Pi_i}^\circ$; this yields a complete and irredundant classification of the semisimple $\wG$-orbits in $\g_1$, see Table \ref{tabW}. The next theorem is \cite[Theorem 3.7]{complex}.

\begin{theorem}[Complex classification of mixed elements \cite{complex}]\label{thmME}
  For $i\in\{2,\ldots,10\}$ let $\Sigma_i$ be a set of $\wG$-conjugacy representatives of semisimple elements in $\h_{\Pi_i}^\circ$ as specified in Table \ref{tabW}. Up to $\wG$-conjugacy, the mixed elements in $\g_1$ are the elements $s+n_{i,j}$ where $i\in\{2,\ldots,10\}$, $s\in \Sigma_i$, and $n_{i,j}$ as specified in Table \ref{tabnij}.
\end{theorem}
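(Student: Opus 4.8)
The plan is to leverage the Jordan decomposition established in Section~\ref{secNotComp}: a mixed element $x\in\g_1$ is written uniquely as $x=s+n$ with $s$ nonzero semisimple, $n$ nonzero nilpotent, $[s,n]=0$, and both $s,n\in\g_1$. Since the $\wG$-action on $\g_1$ is by restrictions of Lie algebra automorphisms, it preserves Jordan decompositions: if $g(s+n)=s'+n'$ with $s'$ semisimple and $n'$ nilpotent, then $gs=s'$ and $gn=n'$. Hence $s+n\sim_{\wG}s'+n'$ if and only if there is $g\in\wG$ with $gs=s'$ and $gn=n'$; in particular, for a \emph{fixed} semisimple part $s$, the $\wG$-orbit of $s+n$ is governed precisely by the $Z_{\wG}(s)$-orbit of $n$ inside the set of nilpotent elements of $\z_\g(s)\cap\g_1$. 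This is the reduction ``mixed orbits $\leadsto$ nilpotent orbits in a centraliser'' alluded to above.

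First I would pin down the semisimple part. By Theorem~\ref{thmSE}(a) the orbit $\wG s$ meets exactly one $\h_{\Pi_i}^\circ$, and inside that set $\wG$-conjugacy coincides with $\Gamma_{\Pi_i}$-conjugacy; choosing $\Sigma_i$ as in Table~\ref{tabW} we may arrange $s\in\Sigma_i$. The two values of $i$ for which $\z_\g(s)\cap\g_1$ carries no nonzero nilpotent element, namely $i=1$ (the generic case: $\z_\g(s)$ is a Cartan subalgebra, so $\z_\g(s)\cap\g_1=\h$ has no nilpotents) and $i=11$ (where $\h_{\Pi_i}^\circ=\{0\}$, i.e.\ the purely nilpotent case), contribute no mixed elements, so $i$ ranges over $\{2,\dots,10\}$. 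Crucially, by Theorem~\ref{thmSE}(b) the group $Z_{\wG}(s)$ depends only on $i$ (it is the group in Row~$i$ of Table~\ref{tabZ}), and since $s\in\h_{\Pi_i}^\circ$ means $\alpha(s)=0$ exactly for $\alpha\in\Pi_i$, the subalgebra $\z_\g(s)$ and its graded piece $\z_\g(s)\cap\g_1$ also depend only on $i$. Thus for each $i$ we face a single $i$-dependent problem: classify the nonzero nilpotent $Z_{\wG}(s)$-orbits in the $\Z/2\Z$-graded reductive subalgebra $\z_\g(s)$, i.e.\ in $\z_\g(s)\cap\g_1$; this is what produces the list $n_{i,j}$, indexed by $i$ and $j$ only, as in Table~\ref{tabnij}.

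The heart of the argument — and the main obstacle — is this last step, carried out for each of the nine subsystems. Here I would use that $(Z_{\wG}(s)^\circ,\z_\g(s)\cap\g_1)$ is again a Vinberg $\theta$-group attached to the $\Z/2\Z$-graded reductive Lie algebra $\z_\g(s)$, so there are only finitely many nilpotent orbits and each is detected by an $\ssl_2$-triple $(e,h,f)$ with $e\in\z_\g(s)\cap\g_1$. Since the centralisers in Table~\ref{tabZ} are assembled from copies of $\SL(2,\C)$, one-dimensional tori, and small finite groups, the module $\z_\g(s)\cap\g_1$ decomposes as a short sum of tensor products of standard $\ssl(2,\C)$-modules, and the nilpotent orbits can be enumerated directly (or with the GAP routines used elsewhere in the paper): first classify the $Z_{\wG}(s)^\circ$-orbits, then fuse them according to the action of the component group of $Z_{\wG}(s)$. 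The resulting representatives are exactly the elements $n_{i,j}$.

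Finally I would assemble completeness and irredundancy. Completeness: given an arbitrary mixed $x=s+n$, conjugate so that $s\in\Sigma_i$ with $i\in\{2,\dots,10\}$; then $n$ has moved to a nilpotent element of $\z_\g(s)\cap\g_1$, and a further conjugation by $Z_{\wG}(s)$ brings it to some $n_{i,j}$, so $x\sim_{\wG}s+n_{i,j}$. Irredundancy: if $s+n_{i,j}\sim_{\wG}s'+n_{i',j'}$ then the semisimple parts $s,s'$ are $\wG$-conjugate, forcing $i=i'$ by Theorem~\ref{thmSE}(a) and then $s=s'$ since $\Sigma_i$ consists of pairwise non-conjugate representatives; and $gs=s'=s$ for the conjugating element $g$, so $n_{i,j}\sim_{Z_{\wG}(s)}n_{i,j'}$, whence $j=j'$ because the $n_{i,j}$ are distinct orbit representatives. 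Each $s+n_{i,j}$ is genuinely mixed because $s\neq0\neq n_{i,j}$ and the Jordan decomposition is unique. I expect the only genuinely delicate point to be the bookkeeping in the nilpotent-orbit step, in particular tracking how the disconnectedness of the groups $\Gamma_{\Pi_i}$ and $Z_{\wG}(s)$ fuses identity-component orbits.
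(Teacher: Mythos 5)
This theorem is not proved in the paper at all: it is quoted verbatim from \cite[Theorem 3.7]{complex} (the paper says so immediately before the statement), so there is no internal proof to compare against. That said, your strategy --- use uniqueness of the Jordan decomposition to reduce to classifying nilpotent $Z_{\wG}(s)$-orbits in the graded centraliser $\z_\g(s)\cap\g_1$, note via Theorem \ref{thmSE} that this centraliser and the acting group depend only on $i$, handle the nilpotent orbits through homogeneous $\ssl_2$-triples in the inherited $\Z/2\Z$-graded (Vinberg) setting, and discard $i=1$ (no nonzero nilpotents in the centraliser) and $i=11$ (zero semisimple part, hence purely nilpotent rather than mixed) --- is precisely the reduction the paper itself sketches in Sections \ref{secNotComp} and \ref{secMixed} and the method of the cited source, and your completeness/irredundancy bookkeeping is correct. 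The only missing substance is the case-by-case enumeration that actually produces the representatives $n_{i,j}$ of Table \ref{tabnij}; you rightly flag this as the computational heart, but it is the entire content of the table and cannot be checked from your outline.
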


The nilpotent $\wG$-orbits in $\g_1$ and the nilpotent $\wG(\R)$-orbits in $\g_1(\R)$ are determined in \cite{nilp} and \cite{nilporb}, respectively, see also \cite[Table 7]{complex}; therefore we do not recall these classifications here. We  conclude this section by mentioning \cite[Remark 3.1]{complex}; the symmetries described in this remark allow us to simplify our classifications.

\begin{remark}\label{remSym}
  If $\sigma\in{\rm Sym}_4$, then the linear map $\pi_{\sigma}\colon \g_1\to\g_1$ that maps each $\mye{i_1i_2i_3i_4}$ to $\mye{i_{1^\sigma}i_{2^\sigma}i_{3^\sigma}i_{4^\sigma}}$ extends to a Lie algebra automorphism of $\g$ that preserves $\g_0$ and $\g_1$. The group generated by all these $\pi_\sigma$ fixes $u_1$ and permutes $\{u_2,u_3,u_4\}$ as ${\rm Sym}_3$.  Specifically, $\pi_{(2,3)}$ swaps $u_3$ and $u_4$, and $\pi_{(2,4)}$ swaps $u_2$ and $u_4$.
\end{remark}

\section{Galois cohomology}\label{secGH}
\noindent We describe some results from Galois cohomology that we use for determining the real orbits within a complex orbit; see \cite[Section 3]{trivectors} for a recent treatment of Galois cohomology in the context of orbit classifications.

In this section only we consider the following notation. Let $G$ be a group with \emph{conjugation}  $\sigma \colon G\to G$, that is, an automorphism of $G$ of order 2; often $\sigma$ 
is the complex conjugation of a complex group. An element $c\in G$ is a {\em cocycle} (with respect to $\sigma$) if
$c\sigma(c)=1$; write $Z^1(G,\sigma)$ for the set of all cocycles. Two cocycles
$c,c'$ are \emph{equivalent} if $c'=ac\sigma(a)^{-1}$ for some $a\in G$; the equivalence class of $c$ is denoted $[c]$, and the set of equivalence classes is denoted $H^1(G,\sigma)$. We also write $Z^1(G)$ and $H^1(G)$ if it is clear which conjugation is used; these definitions are an adaption of the definitions in \cite[I.\S 5.1]{serre} to the special case of an acting group $\langle \sigma\rangle$ of size $2$. We now list a few results that help to
determine $H^1(G,\sigma)$. In the following we write $G^\sigma = \{g\in G : \sigma(g)=g\}$.

Let $X$ be a set on which $G$ acts. We suppose that $X$ has a conjugation,
also denoted $\sigma$ (that is, a map $\sigma \colon X\to X$ with $\sigma^2={\rm Id}_X$),
such that  $\sigma(g x) = \sigma(g)\sigma(x)$ for all $x\in X$ and $g\in G$.
Let $\mathcal{O}$ be a $G$-orbit in $X$  that has a real point, that is, there is 
 $x_0\in \mathcal{O}$ with $\sigma(x_0)=x_0$. In this situation, $\mathcal{O}$ is stable under $\sigma$, and we are interested in listing the $G^\sigma$-orbits in $\mathcal{O}^\sigma= \{ y\in \mathcal{O}:
\sigma(y)=y\}$. For this we consider the stabiliser
$$Z_G(x_0) = \{ g\in G: g x_0=x_0\}$$
and the exact sequence
$$1\to Z_G(x_0) \labelto{i} G \labelto{j} \mathcal{O}\to 1$$
resulting from the orbit-stabiliser theorem; here  $j$ maps $g$ to $g x_0$. This sequence gives rise to the exact sequence
$$1\to (Z_G(x_0))^\sigma \labelto{i} G^\sigma\labelto{j}
\mathcal{O}^\sigma \labelto{\delta} H^1(Z_G(x_0),\sigma) \labelto{i^*} H^1(G,\sigma),$$
see \cite[Proposition 36]{serre}: the map $i^\ast$ sends the class defined by $g\in Z^1(Z_G(x_0),\sigma)$  to its class in $H^1(G,\sigma)$;  moreover,
$\delta(g x_0)$ is the class of the cocycle $g^{-1}\sigma(g)$. The following is one of the
main theorems in Galois cohomology, see
\cite[\S 5.4, Corollary 1 to Proposition 36]{serre}. 

\begin{theorem}\label{thm:H1}
The map $\delta$ induces a bijection between the orbits of $G^\sigma$ in
$\mathcal{O}^\sigma$ and the set $\ker i^*$.   
\end{theorem}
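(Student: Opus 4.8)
The plan is to derive everything from the long exact cohomology sequence
$$1\to (Z_G(x_0))^\sigma \labelto{i} G^\sigma\labelto{j}
\mathcal{O}^\sigma \labelto{\delta} H^1(Z_G(x_0),\sigma) \labelto{i^*} H^1(G,\sigma)$$
together with the standard ``twisting'' functoriality of non-abelian $H^1$, exactly as in \cite[\S 5.4]{serre}; the statement is really the orbit-theoretic repackaging of exactness of this sequence at the term $H^1(Z_G(x_0),\sigma)$. First I would record the two elementary facts needed about $\delta$: if $y=gx_0\in\mathcal{O}^\sigma$ with $g\in G$, then $g^{-1}\sigma(g)\in Z_G(x_0)$ and is a cocycle (using $\sigma(y)=y$, $\sigma(x_0)=x_0$, and $\sigma^2=\mathrm{Id}$), so $\delta(y):=[g^{-1}\sigma(g)]$ makes sense; and $\delta(y)$ is independent of the choice of $g$ with $gx_0=y$, because replacing $g$ by $gz$ with $z\in Z_G(x_0)$ changes the cocycle within its equivalence class.

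Next I would prove the two implications separately. \emph{Surjectivity onto $\ker i^*$:} take a cocycle $c\in Z^1(Z_G(x_0),\sigma)$ with $[c]\in\ker i^*$, i.e.\ $c=a\sigma(a)^{-1}$ for some $a\in G$; set $y=a^{-1}x_0$. Then $\sigma(y)=\sigma(a)^{-1}\sigma(x_0)=\sigma(a)^{-1}x_0=a^{-1}(a\sigma(a)^{-1})x_0=a^{-1}c\,x_0=a^{-1}x_0=y$, so $y\in\mathcal{O}^\sigma$, and by construction $\delta(y)=[(a^{-1})^{-1}\sigma(a^{-1})]=[a\sigma(a)^{-1}]=[c]$. \emph{$\delta$ is constant on $G^\sigma$-orbits and separates them:} if $y'=hy$ with $h\in G^\sigma$ and $y=gx_0$, then $y'=(hg)x_0$ and $(hg)^{-1}\sigma(hg)=g^{-1}h^{-1}\sigma(h)\sigma(g)=g^{-1}h^{-1}h\,\sigma(g)=g^{-1}\sigma(g)$, so $\delta(y')=\delta(y)$; conversely, if $\delta(y)=\delta(y')$ for $y=gx_0$, $y'=g'x_0$, then $g^{-1}\sigma(g)=z\,(g'^{-1}\sigma(g'))\sigma(z)^{-1}$ for some $z\in Z_G(x_0)$, and a direct manipulation shows that $h:=g'z^{-1}g^{-1}$ satisfies $\sigma(h)=h$ and $hy=g'z^{-1}g^{-1}gx_0=g'z^{-1}x_0=g'x_0=y'$, so $y$ and $y'$ lie in the same $G^\sigma$-orbit. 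These three computations together say precisely that $\delta$ descends to a \emph{bijection} from $G^\sigma\backslash\mathcal{O}^\sigma$ onto the image of $\delta$, and that this image is exactly $\ker i^*$.

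The only genuinely delicate point is the equality $\operatorname{im}\delta=\ker i^*$: the inclusion $\operatorname{im}\delta\subseteq\ker i^*$ is immediate since $i^*\delta(gx_0)=[g^{-1}\sigma(g)]$ in $H^1(G,\sigma)$, and $g^{-1}\sigma(g)$ is a coboundary in $G$ by definition; the reverse inclusion is the surjectivity argument above, which is the one place where one must actually produce a real point $y$ from an abstract cohomology class, and this is where the hypothesis that $\mathcal{O}$ has a real point $x_0$ is used in an essential way. Everything else is bookkeeping with cocycle identities. I would present the four short computations (well-definedness of $\delta$; constancy on orbits; injectivity on orbit space; surjectivity onto $\ker i^*$) in that order, citing \cite[Proposition 36]{serre} for the underlying exactness so as not to reprove the exact sequence itself, and then conclude that $\delta$ induces the claimed bijection.
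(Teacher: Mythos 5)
Your proof is correct, and all four computations check out under the paper's conventions (a cocycle is $c$ with $c\sigma(c)=1$, and equivalence is $c'=ac\sigma(a)^{-1}$ with $a$ in the relevant group). Note, however, that the paper does not actually prove this statement: it is quoted verbatim from Serre, \cite[\S 5.4, Corollary 1 to Proposition 36]{serre}, so there is no internal argument to compare against. What you have written is a self-contained, elementary verification of that cited result in the special case of an acting group of order $2$: well-definedness of $\delta$ (replacing $g$ by $gz$ with $z\in Z_G(x_0)$ conjugates the cocycle by $a=z^{-1}\in Z_G(x_0)$, so the class in $H^1(Z_G(x_0),\sigma)$ is unchanged), constancy on $G^\sigma$-orbits, injectivity on the orbit space (your element $h=g'z^{-1}g^{-1}$ is indeed $\sigma$-fixed, as one sees by substituting $\sigma(g)^{-1}=\sigma(z)\sigma(g')^{-1}g'z^{-1}g^{-1}$ from the equivalence of cocycles), and surjectivity onto $\ker i^*$, which is the only step using the existence of the real point $x_0$ and the triviality of the class of $c$ in $H^1(G,\sigma)$. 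You correctly identify $\operatorname{im}\delta\subseteq\ker i^*$ as immediate (take $a=g^{-1}$ to trivialise $g^{-1}\sigma(g)$ in $G$). So your proposal is a valid replacement for the citation; it adds a proof where the paper supplies none, at the cost of reproving a standard fact.
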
  

\begin{remark}\label{H1torus}
  It is known for the usual complex conjugation $\bar *$ that $H^1(\GL(n,\C))=1=H^1(\SL(n,\C))$ for all $n\geq 1$, see for example \cite[Proposition III.8.24 and Corollary III.8.26]{berhuy}, in particular, $H^1(\C^\times)=1$. Moreover, if $\bar *$ acts entry-wise on a complex matrix group $G=X\times Y$, then $H^1(X\times Y)\cong H^1(X)\times H^1(Y)$. Since a torus $T\leq G$ is a direct product of copies of $\C^\times$, we have that $H^1(T,\bar *)=1$.
\end{remark}

\subsection{Cartan subspaces}
Recall that $\h$ is the fixed Cartan subspace spanned by $\{u_1,\ldots,u_4\}$. Semisimple elements that lie in $\h$ are represented as a linear combination of these basis elements, however, most of our real orbit representatives lie in a Cartan subspace different to $\h$. To simplify the notation in our classification tables, we classify all Cartan subspaces and then represent our semisimple orbit representatives with respect to fixed bases of these spaces.

It follows from Galois cohomology that the real Cartan subspaces in $\g_1$ are, up to $\wG$-conjugacy, in bijection
with $H^1(N)$ where $N = N_{\wG}(\h)$, see \cite[Theorem 4.4.9]{trivectors}.
The group $N$ fits into an exact sequence
$$ 1 \to Z_{\wG}(\h) \to N \to W \to 1.$$
Since $Z_{\wG}(\h)$ and $W$ are finite groups of orders $32$ and $192$,  respectively, $N$ is a finite group of order $32\cdot 192 = 6144$.
Because we know $Z_{\wG}(\h)$ and $W$
(for the former see the first line of Table \ref{tabZ}), we can determine
$N$. Since $N$ is finite, a brute force calculation determines $H^1(N)$, and we obtain $|H^1(N)| = 7$. 

For a fixed $[n]\in H^1(N)$ define $\tau \colon \h\to \h$ by $\tau(u) = n\bar{u}$. Since $\tau$ is an anti-involution of $\h$, the $\R$-dimension of the fixed space $\h^\tau = \{ u \in \h : \tau(u) = u\}$
equals the $\C$-dimension of $\h$. Let $g\in \wG$ be such that $g^{-1} \bar{g} = n$; if $u\in \h^\tau$, then $u=n\bar u$, and the element $\overline{gu}  = gn n^{-1} u=gu$ is real. Thus, the  real span of all $gu$ with $u\in \h^\tau$ gives a real Cartan subspace. Iterating this procedure for all $[n]\in H^1(N)$ gives all real Cartan subspaces up to $G(\R)$-conjugacy; we fix the notation in the following definition. 

\begin{definition}\label{normalizer}
 There are seven classes in $H^1(N)$ corresponding to cocycles $n_1^*,\ldots, n_7^*\in Z^1(N)$; for each $i\in\{1,\ldots,7\}$ choose $g_i^*\in \wG$ such that
$(g_i^*)^{-1} \bar{g_i^*} = n_i^*$ and $\cc_i = g_i^*(\h^\tau)$.  Specifically, using the notation introduced in \eqref{eqmats}, we choose
\[\begin{array}{llll}
	g^*_1=(I,I,I,I),& g^*_2=(L,I,I,I),& g^*_3=(D(\eta^5),D(\eta^5),-D(\eta^3),-D(\eta^7)),& g^*_4=(M,I,I,M)\\
	g^*_5=(I,M,I,M),& g^*_6=(I,I,M,M),& g^*_7=(D(\eta^5),D(\eta^5),D(\eta^5),D(\eta^5)),
\end{array}\] where $\eta$ is a primitive $16$-th root of unity with $\eta^2=\zeta$. Moreover, we fix the following bases for the  seven Cartan subspaces $\cc_1,\ldots, \cc_7$ constructed above:
\begin{equation*}
  \begin{array}{rclrclrclrcl}
 \{u_1\!\!\!\!&=&\!\!\!\!\mye{0000}+\mye{1111},& u_2\!\!\!\!&=&\!\!\!\!\mye{0110}+\mye{1001},& u_3\!\!\!\!&=&\!\!\!\!\mye{0101}+\mye{1010},& u_4\!\!\!\!&=&\!\!\!\!\mye{0011}+\mye{1100}\}\\
\{v_1\!\!\!\!&=&\!\!\!\!\mye{0000}-\mye{1111},& v_2\!\!\!\!&=&\!\!\!\!\mye{0110}-\mye{1001},& v_3\!\!\!\!&=&\!\!\!\!\mye{0101}-\mye{1010},& v_4\!\!\!\!&=&\!\!\!\!\mye{0011}-\mye{1100}\}\\
\{w_1\!\!\!\!&=&\!\!\!\!\mye{0000}-\mye{1111},& w_2\!\!\!\!&=&\!\!\!\!\mye{0110}-\mye{1001},& w_3\!\!\!\!&=&\!\!\!\!\mye{0101}-\mye{1010},& w_4\!\!\!\!&=&\!\!\!\!\mye{0011}+\mye{1100}\}\\
\{x_1\!\!\!\!&=&\!\!\!\!\mye{0000}-\mye{1111},& x_2\!\!\!\!&=&\!\!\!\!\mye{0110}-\mye{1001},& x_3\!\!\!\!&=&\!\!\!\!\mye{0101}+\mye{1010},& x_4\!\!\!\!&=&\!\!\!\!\mye{0011}+\mye{1100}\}\\
\{y_1\!\!\!\!&=&\!\!\!\!\mye{0000}-\mye{1111},& y_2\!\!\!\!&=&\!\!\!\!\mye{0110}+\mye{1001},& y_3\!\!\!\!&=&\!\!\!\!\mye{0101}-\mye{1010},& y_4\!\!\!\!&=&\!\!\!\!\mye{0011}+\mye{1100}\}\\
\{z_1\!\!\!\!&=&\!\!\!\!\mye{0000}-\mye{1111},& z_2\!\!\!\!&=&\!\!\!\!\mye{0110}+\mye{1001},& z_3\!\!\!\!&=&\!\!\!\!\mye{0101}+\mye{1010},& z_4\!\!\!\!&=&\!\!\!\!\mye{0011}-\mye{1100}\}\\  
\{t_1\!\!\!\!&=&\!\!\!\!\mye{0000}-\mye{1111},& t_2\!\!\!\!&=&\!\!\!\!\mye{0110}+\mye{1001},& t_3\!\!\!\!&=&\!\!\!\!\mye{0101}+\mye{1010},& t_4\!\!\!\!&=&\!\!\!\!\mye{0011}+\mye{1100}\}.
\end{array}
\end{equation*}
\end{definition}

\section{Real semisimple elements}\label{secSS}
\noindent Throughout this section, we fix one of the subsystems $\Pi=\Pi_i$ of Theorem \ref{thmSE} and abbreviate $\CC=\h_{\Pi}^\circ$. We fix a complex $\wG$-orbit $\OO=\wG t$ for some nonzero $t\in\CC$. We now discuss the following problems related to the orbit~$\OO$:\pagebreak
\begin{ithm} 
\item[1)] Decide whether $\OO\cap \g_1(\R)$ is nonempty, that is, whether $\OO$ has a real point.
\item[2)] If $\OO$ has real points, how can we find one?
\item[3)] Determine representatives of the real $\wG(\R)$-orbits contained in $\OO$.
\end{ithm}

\medskip

We prove a number of results that help to decide these questions. These results as well as the proofs are similar to material found in \cite{trivectors2}.
However, the results in \cite{trivectors2} concern a specific
$\Z/3\Z$-grading of the Lie algebra of type $E_8$. Since here we consider a
different situation, we have included the new proofs. 

In the following, the centraliser and normaliser of $\CC$ in $\wG$ are denoted by
\begin{eqnarray*}Z_{\wG}(\CC) &=& \{ g\in \wG : gx=x \text{ for all }
  x\in \CC\}\\ N_{\wG}(\CC)  &=& \{ g\in \wG : gx\in \CC \text{ for all }
  x\in \CC\}.
\end{eqnarray*}
  
\begin{lemma}\label{lem:gp1p2}
  Let $t_1,t_2\in \CC$. If $gt_1=t_2$ for some $g\in \wG$, then  $g\in N_{\wG}(\CC)$.
\end{lemma}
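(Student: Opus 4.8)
The plan is to exploit the structure of $\CC = \h_\Pi^\circ$ as the set of ``$\Pi$-generic'' points inside the linear subspace $\h_\Pi$ of the fixed Cartan subspace $\h$. First I would reduce to a statement purely about $\h$: since $t_1, t_2 \in \h$ and $g t_1 = t_2$, the two subspaces $\h$ and $g\h$ are both Cartan subspaces of $\g_1$ containing the semisimple element $t_2$. A standard fact (conjugacy of Cartan subspaces containing a fixed semisimple element, or equivalently that all Cartan subspaces of the centraliser $\z_{\g_0}(t_2)$-module are conjugate) lets me find $h \in Z_{\wG}(t_2)$ with $h g \h = \h$; replacing $g$ by $hg$ I may assume $g\h = \h$, i.e. $g \in N_{\wG}(\h)$, at the cost of only changing $g$ within the coset that still sends $t_1$ to $t_2$. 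So it suffices to show: if $g \in N_{\wG}(\h)$ and $g t_1 = t_2$ with $t_1, t_2 \in \CC = \h_\Pi^\circ$, then $g \in N_{\wG}(\CC)$.

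Next I would translate the action of $g \in N_{\wG}(\h)$ on $\h$ into the Weyl group $W$ via the exact sequence $1 \to Z_{\wG}(\h) \to N_{\wG}(\h) \to W \to 1$: let $w \in W$ be the image of $g$, so $g$ acts on $\h$ exactly as $w$ does, and $w t_1 = t_2$. Now the key point is the characterisation of $\h_\Pi^\circ$ in terms of $\Phi$: a point $p \in \h$ lies in $\h_\Pi^\circ$ iff its set of vanishing roots $\Phi_p = \{\alpha \in \Phi : \alpha(p) = 0\}$ equals exactly $\Pi$ (this is essentially the definition of $\h_\Pi$ together with the genericity condition $\alpha(p) \neq 0$ for $\alpha \notin \Pi$, using that $\Phi_p$ is generated as a subsystem by its elements). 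Since $w t_1 = t_2$, we get $\Phi_{t_2} = w(\Phi_{t_1})$, hence $w(\Pi) = \Pi$, i.e. $w$ normalises the subsystem $\Pi$. A Weyl group element normalising $\Pi$ permutes $\h_\Pi$ (because $\alpha(w p) = (w^{-1}\alpha)(p) = 0$ for all $\alpha \in \Pi$ exactly when $w^{-1}\alpha \in \Pi$, which holds for all $\alpha \in \Pi$) and likewise preserves the genericity condition, so $w(\h_\Pi^\circ) = \h_\Pi^\circ = \CC$. Therefore $g$ acts on $\h$ as $w$ and maps $\CC$ into $\CC$, giving $g \in N_{\wG}(\CC)$, which combined with the first reduction yields $g \in N_{\wG}(\CC)$ for the original $g$ as well.

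The main obstacle I anticipate is the first step: justifying that two Cartan subspaces $\h$ and $g\h$ of $\g_1$ both containing $t_2$ are conjugate by an element of $Z_{\wG}(t_2)$. This requires knowing that Cartan subspaces of the graded reductive subalgebra $(\z_{\g_0}(t_2), \z_{\g_1}(t_2))$ — where $t_2$ is semisimple — are all conjugate under the connected centraliser group, and then lifting this conjugation from $G_0$-level to $\wG$-level through the isogeny $\pi$. One has to check $t_2 \in g\h$: indeed $t_2 = g t_1 \in g\h$ since $t_1 \in \h$. If establishing the full conjugacy statement is delicate, an alternative is to argue more directly using that $\CC$ consists of regular semisimple elements of $\g_1$ in the graded sense (their centraliser in $\g_1$ is exactly a Cartan subspace), so $\z_{\g_1}(t_1) = \h$ and $\z_{\g_1}(t_2) = \h$; then from $g t_1 = t_2$ we get $g \z_{\g_1}(t_1) = \z_{\g_1}(t_2)$, i.e. $g\h = \h$ immediately, with no need for any intermediate conjugation. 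This shortcut, if the regularity property of points of $\h_\Pi^\circ$ is available (it should follow since $\Phi_{t_i} = \Pi$ spans the same space as the full centraliser root system and the rank is maximal among elements of $\h$, forcing $\z_{\g_1}(t_i) = \h$), makes the lemma almost immediate: $g\h = \h$, so $g \in N_{\wG}(\h)$ and the second paragraph finishes it.
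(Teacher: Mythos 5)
Your route is genuinely different from the paper's. The paper does not reduce to $N_{\wG}(\h)$ at all: it invokes Theorem \ref{thmSE}a) to produce $w\in N_W(W_\Pi)$ with $w(t_1)=t_2$, lifts $w$ to $\hat w\in N_{\wG}(\h)$, and then uses Theorem \ref{thmSE}b) (all elements of $\CC$ have the \emph{same} stabiliser in $\wG$) to conclude that $g^{-1}\hat w$ centralises every point of $\CC$, so that $g$ acts on all of $\CC$ exactly as $w$ does; the claim then follows from $w\Pi=\Pi$, hence $w(\CC)=\CC$. Your final Weyl-group step (characterising $\h_\Pi^\circ$ by $\Phi_p=\Pi$ and deducing $w\Pi=\Pi$, $w(\h_\Pi^\circ)=\h_\Pi^\circ$) is correct and coincides with the paper's last step.

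There are, however, two genuine problems. First, your ``shortcut'' rests on a false claim except when $\Pi=\emptyset$: for $t\in\h_\Pi^\circ$ with $\Pi\ne\emptyset$ the centraliser $\z_\g(t)$ contains the root spaces $\g_{\pm\alpha}$ for $\alpha\in\Pi$, and since the grading automorphism acts as $-1$ on $\h\subseteq\g_1$ it interchanges $\g_\alpha$ and $\g_{-\alpha}$, so $\g_\alpha\oplus\g_{-\alpha}$ contributes a nonzero line to $\g_1$. Hence $\z_{\g_1}(t)\supsetneq\h$, and elements of $\CC$ are not ``regular'' in your sense --- this is precisely why nonzero nilpotent elements commuting with such $t$ exist (Table \ref{tabnij}). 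So $g\h=\h$ does not come for free. Second, your main route has a gap in its last sentence: you establish $hg\in N_{\wG}(\CC)$ for some $h\in Z_{\wG}(t_2)$, but to conclude $g=h^{-1}(hg)\in N_{\wG}(\CC)$ you need $h\in N_{\wG}(\CC)$, which does not follow merely from $h$ fixing the single point $t_2$. It is true, but only via Theorem \ref{thmSE}b): $Z_{\wG}(t_2)=Z_{\wG}(x)$ for every $x\in\CC$, so $h$ fixes $\CC$ pointwise. This is exactly the non-obvious ingredient the paper's proof is built on, and it must be cited; in addition, the conjugacy under $Z_{\wG}(t_2)$ of Cartan subspaces of the graded centraliser $\z_\g(t_2)$ is a standard Vinberg-theory fact but is not established anywhere in the paper, so your reduction step carries a further proof obligation that the paper's argument avoids entirely.
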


\begin{proof}
Theorem \ref{thmSE}a) shows that $w(t_1)=t_2$ for some $w\in N_W(W_\Pi)$. If   $\hat w\in N_{\wG}(\h)$ is a preimage of $w$, then $g^{-1}\hat w
  \in Z_{\wG}(t_1)$. Theorem \ref{thmSE}b) shows that  
  $g^{-1}\hat w \in Z_{\wG}(x)$ for every $x\in\CC$, so $gx=\hat w x=w(x)$ for all $x\in\CC$. Since $w\in N_W(W_\Pi)$, we have $\Pi=w\Pi$ and hence $w(\CC)=\CC$, see \cite[Lemma 2.3 and Proposition 2.5]{complex}.
\end{proof}
 
Now we define a map $\varphi \colon N_{\wG}(\CC) \to \Gamma_\Pi$: If $g\in N_{\wG}(\CC)$, then $gq=x\in \CC$, hence $w(q)=x$ for some $w\in N_W(W_\Pi)$, and we define $\varphi(g)=wW_\Pi\in \Gamma_\Pi$. This is well-defined: if $w'\in N_W(W_\Pi)$ satisfies $w'(q)=x$, then
$w^{-1}w'\in W_\Pi$, so $w'W_\Pi = wW_\Pi$. 

\begin{lemma}\label{lem:Np}
The map $\varphi\colon N_{\wG}(\CC)\to \Gamma_\Pi$ is a surjective group homomorphism with kernel $Z_{\wG}(\CC)$. Moreover, if $g\in N_{\wG}(\CC)$ and $x\in \CC$, then $gx = \varphi(g)x$.
\end{lemma}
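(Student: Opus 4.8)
The plan is to verify each claim in turn, using Lemma \ref{lem:gp1p2} and the structure of the $W$-action described in Theorem \ref{thmSE}. First I would check that $\varphi$ is well-defined as a map into $\Gamma_\Pi$ in the stronger sense needed later: given $g\in N_{\wG}(\CC)$, we must know that the $w\in N_W(W_\Pi)$ with $w(q)=gq$ actually satisfies $w(x)=gx$ for \emph{all} $x\in\CC$, not just for the fixed generic point $q$. This is exactly the argument already used inside the proof of Lemma \ref{lem:gp1p2}: if $\hat w\in N_{\wG}(\h)$ is a preimage of $w$, then $g^{-1}\hat w\in Z_{\wG}(q)$, and by Theorem \ref{thmSE}b) the stabiliser of the generic point $q\in\CC=\h_\Pi^\circ$ equals $Z_{\wG}(x)$ for every $x\in\CC$; hence $gx=\hat w x=w(x)$ for all $x\in\CC$. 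This simultaneously proves the last sentence of the lemma (``$gx=\varphi(g)x$'', interpreting $\varphi(g)=wW_\Pi$ acting on $\h$ via the $W$-action, which is well-defined on the coset since $W_\Pi$ fixes $\CC$ pointwise) and also re-confirms well-definedness.

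Next I would show $\varphi$ is a homomorphism. Take $g_1,g_2\in N_{\wG}(\CC)$ with associated Weyl elements $w_1,w_2$ (so $g_ix=w_i(x)$ for all $x\in\CC$ by the previous step). Then $(g_1g_2)x = g_1(w_2(x)) = w_1(w_2(x)) = (w_1w_2)(x)$ for all $x\in\CC$, and since $w_1w_2\in N_W(W_\Pi)$ represents $\varphi(g_1g_2)$, we get $\varphi(g_1g_2)=w_1w_2W_\Pi=\varphi(g_1)\varphi(g_2)$. Here one uses that the coset $wW_\Pi$ is uniquely determined by the action of $g$ on $\CC$, which follows because two elements of $N_W(W_\Pi)$ agreeing on the generic point $q$ differ by an element of $W_q=W_\Pi$ (using that $W_q$ is generated by the $s_\alpha$ with $\alpha(q)=0$, i.e.\ $\alpha\in\Phi_q=\Pi$, by \cite[Lemma 2.4]{complex}).

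For the kernel: $g\in\ker\varphi$ iff $\varphi(g)=W_\Pi$, i.e.\ the representing $w$ lies in $W_\Pi$, i.e.\ $w$ fixes $\CC$ pointwise; combined with $gx=w(x)$ for all $x\in\CC$ this says exactly $gx=x$ for all $x\in\CC$, which is the definition of $Z_{\wG}(\CC)$. Finally, for surjectivity: given $\gamma\in\Gamma_\Pi=N_W(W_\Pi)/W_\Pi$, pick a representative $w\in N_W(W_\Pi)$ and a preimage $\hat w\in N_{\wG}(\h)$ under the surjection $N_{\wG}(\h)\to W$ (this surjection is part of the exact sequence recalled just before Definition \ref{normalizer}). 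Since $w\in N_W(W_\Pi)$ we have $w\Pi=\Pi$ and hence $w(\CC)=\CC$ by \cite[Lemma 2.3 and Proposition 2.5]{complex}, so $\hat w\in N_{\wG}(\CC)$; and by construction $\varphi(\hat w)=wW_\Pi=\gamma$.

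The only genuinely delicate point — and the one I would treat most carefully — is the passage ``$g$ acts on $\CC$ the same way as a single Weyl element $w$''. A priori $\varphi(g)$ is defined using just one point $q$, and one needs Theorem \ref{thmSE}b) (constancy of the stabiliser on all of $\h_\Pi^\circ$) to upgrade this to an identity of maps on all of $\CC$; without this, neither the homomorphism property nor the identification of the kernel would go through cleanly. Everything else is a routine chase through the orbit--stabiliser exact sequence and the definitions, so I expect no further obstacle.
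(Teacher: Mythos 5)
Your proof is correct and follows essentially the same route as the paper's: the key step in both is the ``preliminary observation'' that $gq=w(q)$ at the single generic point $q$ upgrades, via the constancy of stabilisers on $\h_\Pi^\circ$ from Theorem \ref{thmSE}b), to $gx=w(x)$ for all $x\in\CC$, after which the homomorphism property, the kernel computation, and surjectivity (via lifting $w\in N_W(W_\Pi)$ to $\hat w\in N_{\wG}(\h)$ and using $w(\CC)=\CC$) are routine. Your additional remarks on why $\varphi$ is well-defined on cosets (using $W_q=W_\Pi$) and why $\Gamma_\Pi$ acts on $\CC$ merely make explicit points the paper treats in the paragraph preceding the lemma.
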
 

\begin{proof}
  We start with a preliminary observation. If  $g\in N_{\wG}(\CC)$ and $w\in N_W(W_\Pi)$ such that $gq=w(q)$, then $gy=w(y)$ for all $y\in\CC$: indeed, if $\hat w\in N_{\wG}(\h)$ is a preimage
  of $w$, then $\hat w^{-1} g\in Z_{\wG}(q)$ and $\hat w^{-1} g\in Z_{\wG}(y)$ by Theorem \ref{thmSE}b). Now let $g_1,g_2\in N_{\wG}(\CC)$ and let $w_1,w_2\in N_W(W_\Pi)$ be
  such that each $w_i(q) = g_iq$. By the made observation, $g_1g_2 q = w_1(w_2(q))$; this implies that that $\varphi$ is a group  homomorphism. If  $wW_\Pi\in \Gamma_\Pi$ with preimage  $\hat w\in N_{\wG}(\h)$, then $\hat w \in N_{\wG}(\CC)$ and $\varphi(\hat w) = wW_\Pi$, which shows that $\varphi$ is surjective. If $g\in \ker(\varphi)$, then $gq=q$ and  the
  first part of the proof shows that $g\in Z_{\wG}(\CC)$.
\end{proof}  

By abuse of notation, we also  write \[\varphi\colon N_{\wG}(\CC)/Z_{\wG}(\CC)\to \Gamma_\Pi\] for the induced isomorphism. The next theorem provides solutions to Problems 1) and 2); it is similar
to \cite[Proposition 5.2.4]{trivectors2}.
Recall that we fixed $\Pi=\Pi_i$ and $\OO=\wG t$ with $t\in\CC$.

\begin{theorem}\label{thmRP}
  Write $H^1(\Gamma_\Pi) = \{ [\gamma_1],\ldots, [\gamma_s]\}$.
  Suppose that for each  $\gamma_i\in Z^1(\Gamma_\Pi)$ there is
  $n_i\in Z^1 (N_{\wG}(\CC))$ with $\varphi(n_i)=\gamma_i$.
  Then $\OO$ has a real point if and only if
  there exist $q'\in \OO\cap \CC$ and $i\in\{1,\ldots,s\}$ with
  $\bar q' = \gamma_i^{-1} q'$. If the latter holds,
  then $gq'$ is a real point of $\OO$, where $g\in \wG$ is such that
  $g^{-1} \bar g = n_i$. 
\end{theorem}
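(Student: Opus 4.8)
The plan is to prove the two implications of the equivalence together with the concrete description of the real point, feeding off Lemmas \ref{lem:gp1p2} and \ref{lem:Np}, the vanishing $H^1(\wG)=1$ (immediate from Remark \ref{H1torus} since $\wG=\SL(2,\C)^4$ and $H^1(\SL(2,\C))=1$), and the observation that $\CC=\h_\Pi^\circ$, the groups $N_{\wG}(\CC)$ and $Z_{\wG}(\CC)$, and the homomorphism $\varphi$ are all compatible with complex conjugation, where $\Gamma_\Pi$ carries the \emph{trivial} conjugation because it is realised by integer matrices in Theorem \ref{thmSE}a). In particular $\varphi$ then maps cocycles to cocycles, the identity $gx=\varphi(g)x$ of Lemma \ref{lem:Np} also yields $\bar g x=\varphi(g)x$ for $g\in N_{\wG}(\CC)$ and $x\in\CC$, and $\Gamma_\Pi$ stabilises $\CC$ (as in the proof of Lemma \ref{lem:gp1p2}).

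First I would treat the ``if'' direction, which simultaneously gives the last sentence. Given $q'\in\OO\cap\CC$ with $\overline{q'}=\gamma_i^{-1}q'$, I choose $n_i\in Z^1(N_{\wG}(\CC))$ with $\varphi(n_i)=\gamma_i$ (hypothesis), and then $g\in\wG$ with $g^{-1}\bar g=n_i$ (possible since $H^1(\wG)=1$), so $\bar g=gn_i$. As $\gamma_i^{-1}q'\in\CC$ and $n_i\in N_{\wG}(\CC)$, Lemma \ref{lem:Np} gives
\[
\overline{gq'}=\bar g\,\overline{q'}=g\,n_i\,(\gamma_i^{-1}q')=g\,\varphi(n_i)(\gamma_i^{-1}q')=g\,\gamma_i\gamma_i^{-1}q'=gq',
\]
so $gq'\in\OO$ is a real point.

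For the ``only if'' direction I would start from a real point $v\in\OO$, $\bar v=v$. By Theorem \ref{thmSE}a) the (semisimple) orbit $\OO$ meets exactly one $\h_{\Pi_j}^\circ$, and that is $\CC$ because $t\in\CC$; fix $q\in\OO\cap\CC$ and write $v=hq$ with $h\in\wG$. Put $c=h^{-1}\bar h$; from $hq=\bar v=\bar h\,\bar q$ one gets $c\bar q=q$, and since $q,\bar q\in\CC$, Lemma \ref{lem:gp1p2} forces $c\in N_{\wG}(\CC)$, while $c\bar c=1$ is a one-line check, so $c\in Z^1(N_{\wG}(\CC))$. Set $\gamma=\varphi(c)\in Z^1(\Gamma_\Pi)$; Lemma \ref{lem:Np} rewrites $c\bar q=q$ as $\gamma\bar q=q$, i.e.\ $\bar q=\gamma^{-1}q$. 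Since $[\gamma]\in H^1(\Gamma_\Pi)=\{[\gamma_1],\ldots,[\gamma_s]\}$, there are an index $i$ and $a\in\Gamma_\Pi$ with $\gamma_i=a\gamma a^{-1}$ (trivial conjugation on $\Gamma_\Pi$). Lifting $a$, viewed inside $W$, to an element of $N_{\wG}(\h)\subseteq\wG$ acting on $\h$ as $a$, the element $q':=aq$ lies in $\OO\cap\CC$, and since $a$ acts by a real matrix on $\h$,
\[
\overline{q'}=a\bar q=a\gamma^{-1}q=(a\gamma a^{-1})^{-1}(aq)=\gamma_i^{-1}q',
\]
as wanted.

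The main obstacle I anticipate is not a hard estimate but careful bookkeeping of the several real structures in play --- the bar on $\g_1$, on $\wG$, on $\h$, and on $N_{\wG}(\CC)$ --- and in particular verifying that $\varphi$ is conjugation-equivariant, so that $\varphi(c)$ is genuinely a $\Gamma_\Pi$-cocycle and so that $\varphi(n_i)$ can be prescribed equal to $\gamma_i$; once this is in place the proof is essentially an assembly of Theorem \ref{thmSE}, Lemmas \ref{lem:gp1p2}--\ref{lem:Np} and $H^1(\wG)=1$. A secondary point to get right is that the twist $a\in\Gamma_\Pi$ really lifts to $\wG$ through $N_{\wG}(\h)\to W$, so that $q'=aq$ stays inside $\OO$.
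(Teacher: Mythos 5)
Your proposal is correct and follows essentially the same route as the paper's proof: the ``if'' direction combines the hypothesis on $n_i$, Lemma \ref{lem:Np}, and $H^1(\wG)=1$ exactly as in the paper, and the ``only if'' direction extracts the cocycle $c=h^{-1}\bar h$, pushes it to $\Gamma_\Pi$ via $\varphi$ using Lemma \ref{lem:gp1p2}, and adjusts by an element of $\Gamma_\Pi$ to land on a chosen class representative $\gamma_i$, just as the paper does with its $\beta$. Your extra care about the trivial conjugation on $\Gamma_\Pi$ and the equivariance of $\varphi$ only makes explicit what the paper leaves implicit.
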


\begin{proof}
  The elements $n_i$ exist by Lemma \ref{lem:Np}.   If $\OO$ has a real point, say $p=gt$ for some $g\in \wG$, then $\overline{gt}=gt$, and so $\bar t = n^{-1} t$ for  $n=g^{-1} \bar g$; note that $n$ is a cocycle since $n\bar n = 1$. Because $t,\bar t\in \CC$, Lemma~\ref{lem:gp1p2} shows that 
$n\in N_{\wG}(\CC)$, so we can define $\gamma = \varphi(n)$. Since  $\gamma\in Z^1(\Gamma_\Pi)$, there is $i\in\{1,\ldots,s\}$ and  $\beta\in \Gamma_\Pi$ with
$\gamma = \beta^{-1} \gamma_i \bar\beta$. Now Lemma~\ref{lem:Np} shows that   $\bar t = \gamma^{-1} t= \bar\beta^{-1} \gamma_i^{-1} \beta t$, so if we set $q' = \beta t$, then  $q'\in \OO\cap \CC$ and $\bar q' = \gamma_i^{-1} q'$, as claimed. Conversely, let $q'\in \OO\cap \CC$ and $\gamma_i$ be such that
$\bar q' = \gamma_i^{-1} q'$. By hypothesis there is  $n_i\in
Z^1(N_{\wG}(\CC))$  with $\varphi(n_i)=\gamma_i$, hence
$\bar q' = n_i^{-1} q'$ by Lemma \ref{lem:Np}. Because $n_i\bar n_i=1$ in $\wG$ and $H^1(\wG)=1$, there is a $g\in \wG$ with $n_i = g^{-1} \bar g$. Now $p=gq'$ is a real point of $\OO$, as can be seen from $\bar p = \bar g \bar q' = gn_in_i^{-1} q' =p$.
\end{proof} 

\begin{remark}\label{example_notinH}
The real point $gq'$ mentioned in Theorem \ref{thmRP} might lie in a Cartan
subspace different to $\h$. The real points corresponding to the class
$[\gamma_1]=[1]$ can be chosen to lie in the Cartan subspace $\h$.
\end{remark}

\begin{remark}
One of the hypotheses of the theorem is that for each $\gamma_i$ there is a
cocycle $n_i\in Z^1 (N_{\wG}(\CC))$ such that $\varphi(n_i) = \gamma_i$. We cannot
prove this a priori, but for the cases that are relevant to the classification
given in this paper we have verified it.
\end{remark}

Galois cohomology also comes in handy for a solution to Problem 3): the next
theorem follows from Theorem~\ref{thm:H1}, taking into account that 
$H^1(\wG)=1$, see Remark~\ref{H1torus}.

\begin{theorem}\label{thm1to1}
Let $p\in \OO$ be a real representative. There is a 1-to-1 correspondence between the elements of  $H^1(Z_{\wG}(p))$ and the $\wG(\R)$-orbits of semisimple elements in $\OO$ that are $\wG$-conjugate to $p$:  the real orbit corresponding to $[z]\in H^1(Z_{\wG}(p))$ has representative $bp$ where $b\in \wG$ is chosen with $z=b^{-1}\overline{b}$.
\end{theorem}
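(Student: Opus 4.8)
The plan is to apply Theorem \ref{thm:H1} with $G=\wG$, conjugation $\sigma$ the entry-wise complex conjugation on $\SL(2,\C)^4$ (so that $G^\sigma=\wG(\R)$), acting set $X=\g_1\cong(\C^2)^{\otimes 4}$ with its complex conjugation, orbit $\mathcal{O}=\OO=\wG t$, and distinguished real point $x_0=p$. The first observation I would record is that, since $p$ is semisimple and every element of $\OO=\wG p$ is $\wG$-conjugate to $p$ (and $\wG$ respects the Jordan decomposition), the $\sigma$-fixed set $\OO^\sigma=\OO\cap\g_1(\R)$ consists precisely of the semisimple elements of $\g_1(\R)$ that are $\wG$-conjugate to $p$. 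Hence the $\wG(\R)$-orbits to be enumerated are exactly the $G^\sigma$-orbits in $\OO^\sigma$.

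By Theorem \ref{thm:H1} the connecting map $\delta$ then induces a bijection between these $\wG(\R)$-orbits and $\ker i^*$, where $i^*\colon H^1(Z_{\wG}(p))\to H^1(\wG)$ is induced by the inclusion $Z_{\wG}(p)\hookrightarrow\wG$. The next step is to observe that $H^1(\wG)=1$: since $\wG=\SL(2,\C)^4$, Remark \ref{H1torus} gives $H^1(\SL(2,\C))=1$ and that $H^1$ of a direct product with entry-wise conjugation is the product of the $H^1$'s. Therefore $\ker i^*=H^1(Z_{\wG}(p))$, and $\delta$ yields a bijection between $H^1(Z_{\wG}(p))$ and the set of $\wG(\R)$-orbits in $\OO^\sigma$.

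It remains to identify the real orbit labelled by a given class $[z]\in H^1(Z_{\wG}(p))$, so as to obtain the stated description of representatives. Because the image of the cocycle $z$ in $Z^1(\wG)$ becomes trivial in $H^1(\wG)=1$, there exists $b\in\wG$ with $z=b^{-1}\overline{b}$. I would then check that $bp$ is a real point: $\overline{bp}=\overline{b}\,p=b(b^{-1}\overline{b})p=bzp=bp$, using $z\in Z_{\wG}(p)$. Finally, recalling from the set-up of the orbit--stabiliser sequence that $\delta(g x_0)$ is the class of $g^{-1}\sigma(g)$, we obtain $\delta(bp)=[b^{-1}\overline{b}]=[z]$, so $bp$ indeed represents the real orbit corresponding to $[z]$. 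Combining this with the bijection from the previous paragraph gives the theorem.

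I do not expect a serious obstacle: the statement is essentially a specialisation of the general Galois-cohomology machinery recalled in Section \ref{secGH}, once $H^1(\wG)=1$ is invoked. The only points requiring care are the bookkeeping of the cocycle conventions — making sure the formula $z=b^{-1}\overline{b}$ (rather than $z=\overline{b}\,b^{-1}$, etc.) is the one consistent with the paper's definition of equivalence of cocycles and with the formula for $\delta$ — and the verification that $\wG(\R)$ coincides with $\wG^\sigma$ and $Z_{\wG(\R)}(p)$ with $Z_{\wG}(p)^\sigma$, both of which are immediate.
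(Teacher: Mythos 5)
Your proposal is correct and follows exactly the route the paper intends: the paper derives Theorem \ref{thm1to1} in one line from Theorem \ref{thm:H1} together with $H^1(\wG)=1$ (Remark \ref{H1torus}), and your argument simply spells out that derivation, including the verification that $bp$ is real and that $\delta(bp)=[z]$ under the paper's cocycle conventions. No gaps.
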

\subsection{Classification approach}\label{approach}
Now we explain the classification procedure in full detail. For $i\in\{1,\ldots,10\}$ we compute some information related to Row $i$ of Table
\ref{tabW}: First, we construct the cohomology sets $H^1(\Gamma_{\Pi_i})$;
the $\Gamma_{\Pi_i}$ are finite groups with
trivial conjugation, so the cohomology classes coincide with the
conjugacy classes of elements of order dividing~2. These can be computed brute-force. If the complex orbit of a semisimple element $q$ has a real point, then Theorem \ref{thmRP} shows that there is some $\gamma_j$ and some $q'$ in
the orbit of $q$ such that $\overline{q}' = \gamma_j^{-1} q'$; now $gq'$ is a real point in the orbit of $q$, where $g$ is defined in Theorem \ref{thmRP}. We therefore proceed by looking at each $[\gamma_j]\in H^1(\Gamma_{\Pi_i})$ and determining all $q'$ such that $\overline{q'}=\gamma_j^{-1} q'$; this
will eventually determine the orbits of elements in $\h_{\Pi_i}^\circ$ that have real points,
along with a real point in each such orbit.
The next lemma clarifies that the elements determined for different $j$ yield real orbit representatives of different orbits.

\begin{lemma}
 With the above notation, if $j\ne k$, then the real orbit representatives obtained for $[\gamma_j]$ are not $\wG$-conjugate to those representatives obtained for  $[\gamma_k]$.
\end{lemma}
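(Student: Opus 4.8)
The plan is to show that if a semisimple element $q'\in\CC$ with $\overline{q'}=\gamma_j^{-1}q'$ and $q''\in\CC$ with $\overline{q''}=\gamma_k^{-1}q''$ give rise to $\wG$-conjugate real points $gq'$ and $hq''$ (where $g^{-1}\bar g=n_j$, $h^{-1}\bar h=n_k$), then necessarily $[\gamma_j]=[\gamma_k]$ in $H^1(\Gamma_\Pi)$. First I would note that the real points $p_1=gq'$ and $p_2=hq''$ both lie in the complex orbit $\OO$, and that saying they are $\wG$-conjugate means $p_2=ap_1$ for some $a\in\wG$. Since $q',q''\in\CC$ and conjugation by $g^{-1}$ (resp.\ $h^{-1}$) brings $p_1,p_2$ back into $\CC$, the element $a'=h^{-1}ag$ satisfies $a'q'=q''$ with $q',q''\in\CC$; by Lemma~\ref{lem:gp1p2} we get $a'\in N_{\wG}(\CC)$, and hence $\varphi(a')\in\Gamma_\Pi$ is defined, with $\varphi(a')q'=q''$ by Lemma~\ref{lem:Np}.

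Next I would track what conjugation does. Applying $\bar{*}$ to $q''=\varphi(a')q'$ and using $\overline{q'}=\gamma_j^{-1}q'$, $\overline{q''}=\gamma_k^{-1}q''$, together with the fact that the $\gamma$'s act on $\CC$ through real matrices (they are elements of $\Gamma_\Pi$ realised as matrices relative to the basis of $\h$, hence commute with $\bar{*}$ on coordinates), I obtain $\gamma_k^{-1}q'' = \overline{\varphi(a')}\,\gamma_j^{-1}q'$. Substituting $q''=\varphi(a')q'$ on the left gives $\gamma_k^{-1}\varphi(a')q' = \overline{\varphi(a')}\gamma_j^{-1}q'$, i.e.\ $\bigl(\overline{\varphi(a')}^{-1}\gamma_k^{-1}\varphi(a')\gamma_j\bigr)q' = q'$. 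Since $q'\in\CC=\h_\Pi^\circ$ and the stabiliser of such a point in $\Gamma_\Pi$ is trivial (this is part of why the sets $\h_{\Pi_i}^\circ$ are chosen as they are; an element of $\Gamma_\Pi$ fixing a point of $\h_\Pi^\circ$ lies in $W_\Pi$ and hence is trivial in $\Gamma_\Pi$, cf.\ Theorem~\ref{thmSE}a) and the definition of $\Gamma_\Pi$), we conclude $\gamma_k = \overline{\varphi(a')}^{-1}\gamma_j\,\varphi(a')$ in $\Gamma_\Pi$, wait, let me rewrite: $\gamma_k^{-1} = \overline{\varphi(a')}^{-1}\gamma_j^{-1}\varphi(a')$, equivalently $\gamma_j = \varphi(a')\,\gamma_k\,\overline{\varphi(a')}^{-1}$, which says precisely $[\gamma_j]=[\gamma_k]$ in $H^1(\Gamma_\Pi,\bar{*})$.

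The cleanest way to organise the argument is therefore: (a) reduce the conjugacy of real points to an element $a'\in N_{\wG}(\CC)$ with $a'q'=q''$ via Lemma~\ref{lem:gp1p2}; (b) pass to $\gamma=\varphi(a')\in\Gamma_\Pi$ via Lemma~\ref{lem:Np}; (c) apply complex conjugation to the relation $\gamma q'=q''$ and use $\overline{q'}=\gamma_j^{-1}q'$, $\overline{q''}=\gamma_k^{-1}q''$ plus the $\bar{*}$-equivariance of the $\Gamma_\Pi$-action; (d) use triviality of the stabiliser in $\Gamma_\Pi$ of a point of $\h_\Pi^\circ$ to cancel $q'$ and land with a coboundary relation $\gamma_j\sim\gamma_k$. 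The main obstacle I anticipate is step (c)–(d): one must be careful that the matrices representing $\Gamma_\Pi$ are real relative to the chosen basis of $\h$ so that conjugation commutes with the group action in the expected way, and one must correctly invoke the fact that $\h_\Pi^\circ$ consists of points with trivial $\Gamma_\Pi$-stabiliser (a point $q'$ there has $W$-stabiliser exactly $W_\Pi$, so its $\Gamma_\Pi=N_W(W_\Pi)/W_\Pi$-stabiliser is trivial). Everything else is a direct chase through the two lemmas already proved.
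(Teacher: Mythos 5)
Your proof is correct and follows essentially the same route as the paper's: reduce $\wG$-conjugacy of the real representatives to $\Gamma_{\Pi}$-conjugacy of the underlying elements of $\h_{\Pi}^\circ$ (the paper invokes Theorem~\ref{thmSE}a) directly where you re-derive this via Lemmas~\ref{lem:gp1p2} and~\ref{lem:Np}), apply complex conjugation to the relation $q''=\varphi(a')q'$, and use that the $\Gamma_{\Pi}$-stabiliser of a point of $\h_{\Pi}^\circ$ is trivial to land on the coboundary relation $[\gamma_j]=[\gamma_k]$. The only nit is the ordering in your intermediate identity — the element fixing $q'$ is $\gamma_j\,\overline{\varphi(a')}^{-1}\gamma_k^{-1}\varphi(a')$ rather than $\overline{\varphi(a')}^{-1}\gamma_k^{-1}\varphi(a')\gamma_j$ — but the coboundary relation you extract at the end is the correct one, so the conclusion stands.
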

\begin{proof} 
  Suppose in the above process we construct $q_j,q_k\in \h_{\Pi_i^\circ}$  such that $\bar q_j = \gamma_j^{-1} q_j$ and 
  $\bar q_k = \gamma_k^{-1}q_k$. If $q_j$ and $q_k$ lie in the same $\wG$-orbit, then Theorem \ref{thmSE} shows that $q_k=\beta q_j$ for some $\beta\in \Gamma_{\Pi_i}$, that is,   $\bar q_k = \gamma_k^{-1} \beta q_j$, and solving for $q_j$ yields $q_j = \beta^{-1}\gamma_k \bar \beta \bar q_j = \beta^{-1} \gamma_k \bar\beta\gamma_j^{-1}q_j.$
    Since $q_j,q_k\in \h_{\Pi_i}^\circ$, we have $W_{q_j}=W_{q_k}$, see Section~\ref{secNot}, and so $\beta^{-1} \gamma_k \bar \beta \gamma_j^{-1} \in W_{\Pi_i}$.  Since $\Gamma_{\Pi_i}=N_W(W_{\Pi_i})/W_{\Pi_i}$, it follows that   $[\gamma_j] = [\gamma_k]$ in $H^1(\Gamma_{\Pi_i})$ .
\end{proof}

Our algorithm now proceeds as follows;  recall that each of our $\Gamma_{\Pi_i}$ is realised as a subgroup of $W$:
\begin{iprf}
\item[{\bf (A)}] For each component $\CC=\h_{\Pi_i}^\circ$ and each cohomology class $[\gamma_j]\in H^1(\Gamma_{\Pi_i})$ with $\gamma_j\in W$, we determine all $q'\in \CC$ that satisfy 
$\bar q' = \gamma_j^{-1} q'$ (using Table \ref{tabCoc}, this condition on $q'$ is easily obtained).
  We then determine $g_j\in \wG$ such that $n_j = g_j^{-1} \bar g_j$ (using Table \ref{tabKA}), and set $p= g_j q'$.  Theorem \ref{thmRP} shows that $p$ is a real representative in the complex orbit $\wG q'$ of $q'$. We note that in this approach we do not fix a complex orbit $\OO$ and look for $q'\in \OO\cap\CC$ as in Theorem \ref{thmRP}, but we first look for suitable $q'\in \CC$ and then consider the reduction up to $\Gamma_{\Pi_i}$-conjugacy.

\item[{\bf (B)}] Next, we determine the real orbits contained in $\wG p=\wG q'$. Using Theorem \ref{thm1to1}, we need to consider $Z=Z_{\wG}(p)$ and determine $H^1(Z)$ with respect to the usual complex conjugation $\bar \ast$. Note that $Z$ is one of the centralisers in Table~\ref{tabZ} and $Z=g_j Z_{\wG}(q)g_j^{-1}$.
It will turn out that in most cases we can decompose $Z=\widetilde{Z}\times H$ where $H$ is abelian of finite order; then $H^1(Z)= H^1(\widetilde{Z})\times H^1(H)$ by Remark \ref{H1torus}, which is useful for determining $H^1(Z)$. We will see that the component group $\widetilde{Z}/Z^\circ$ is of order at most 2; if it is nontrivial, then is generated by the class of $(J,J,J,J)$ where $J$ is as in \eqref{eqmats}. The connected component $Z^\circ$ is in most cases  parametrised by a torus or by $\SL(2,\C)$, and  we show that $H_1(Z^\circ)$ is trivial (see Remark \ref{H1torus}). To compute $H^1(Z)$ it remains to consider the cohomology classes of elements of the form $u = w(J,J,J,J)\in\widetilde{Z}\setminus Z^\circ$ where $w\in Z^\circ$. We determine conditions on $w$ such that $u$ is a 1-cocycle, and then solve the equivalence problem. All these calculations can be done by hand, but we have also verified them computationally with the system GAP.

\item[{\bf (C)}]Finally, given $\gamma_j,n_j,g_j$ and $H^1(Z)$, Theorem \ref{thm1to1} shows that a real orbit representative corresponding to $[z_k]\in H^1(Z)$ is $b_kp$ where $b_k\in \wG$ is chosen such that $b_k^{-1}\overline{b_k}=z_k$ (using the elements $\varepsilon(M)$ given in Table~\ref{tabKA}); specifically, we obtain the following real orbits representatives, recall that $p=g_jq'$:
\begin{align}\label{orbitreps}\{(\varepsilon(A),\varepsilon(B),\varepsilon(C),\varepsilon(D))(g_jq') :  [(A,B,C,D)]\in H^1(Z) \;\text{and}\; q'\in \h_{\Pi_i}^\circ \;\text{with}\; \overline{q'}=\gamma_j^{-1}q' \}.
\end{align}
\end{iprf}

\begin{remark}\label{remRewrite}
  Some real points and some real orbit representatives computed in (A) and (C) lie outside our fixed Cartan subspaces  $\cc_1,\ldots,\cc_7$ as defined in Definition \ref{normalizer}. If this is the case, then we rewrite these elements by following steps (A') and (C') after (A) and (C), respectively:
  \begin{iprf}
  \item[{\bf (A')}] 
If the real point $p$ in (A) is not in one of our fixed Cartan spaces $\cc_1,\ldots,\cc_7$, then 
we search for $g\in g_k^\ast N_{\wG}(\h)$ for some $k\in\{1,\dots,7\}$ such that $n_j = g ^{-1}\bar{g}$; recall the definition of $g_k^*$ and $n_k^*$ from  Definition \ref{normalizer}; we then replace $g_j$ by $g$. This is indeed possible: by construction, there is  $g_0\in N_{\wG}(\h)$ and $k\in\{1,\ldots,7\}$ with $g_0n_j\bar{g}_0^{-1} = n_k^\ast = (g_k^\ast) ^{-1}\bar{g^\ast_k}$, so $g=g_k^*g_0\in g_k^* N_{\wG}(\h)$ is a suitable element, and $p=gq'$  lies in $\cc_k$.   
\item[{\bf (C')}] If one of the  $b_kp$ in Step (C) is not in our Cartan spaces $\cc_1,\ldots,\cc_7$, then we proceed as follows (using Definition~\ref{normalizer}). Note that  $b_kp$ is $\wG(\R)$-conjugate to one of our Cartan spaces, say $b_k'b_kp\in \cc_j$ for some $b_k'\in \wG(\R)$ and $j\in\{1,\dots,7\}$. Since $b_k'$ is real, $(b_k'b_k)^{-1} \overline{b_k'b_k} = z_k$, and it follows that $b_0=b_k'b_k\in \wG$ satisfies $b_0^{-1} \bar{b}_0=z_k$ and $b_0p\in \cc_j$, as required. Since $(g_j^\ast)^{-1}b_0 p \in \h$ is $\wG$-conjugate to $q\in \h$, there exists $w\in N_{\wG}(\h)$  such that  $(g_j^\ast)^{-1}b_0p = wq= wg^{-1}p$, where $p=gq$  as in (A'); we always succeed finding $b_0$ in $g_j^\ast N_{\wG}(\h) g^{-1}$.
  \end{iprf}
To simplify the exposition, in our proof below we do not comment on the rewriting process (A') and (C'), but only describe the results for (A), (B), and (C).
\end{remark}


\subsection{Classification results}\label{secProofRSE}
The  procedure detailed in Section \ref{approach} leads to the following result; we prove it in this section.

\begin{theorem}\label{thmRSE}
   Up to $\wG(\R)$-conjugacy, the nonzero semisimple elements in $\g_1(\R)$ are 
   the elements in Tables \ref{tabreps10}--\ref{tabreps} in Appendix \ref{appSSTab}; see Definition \ref{normalizer} for the notation used in these tables.
\end{theorem}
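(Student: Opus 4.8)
The plan is to execute the three-step algorithm \textbf{(A)--(C)} of Section~\ref{approach} for each of the ten components $\CC=\h_{\Pi_i}^\circ$ with $i\in\{1,\ldots,10\}$, and to record the outcome in Tables~\ref{tabreps10}--\ref{tabreps}. First I would fix $i$ and compute the finite group $\Gamma_{\Pi_i}$ (already realised as a subgroup of $W$ by Theorem~\ref{thmSE}a), together with its cohomology set $H^1(\Gamma_{\Pi_i})$; since $\Gamma_{\Pi_i}$ carries the trivial conjugation, this set is just the set of conjugacy classes of elements of order dividing $2$ (including the identity), which is a routine finite computation. For each representative cocycle $\gamma_j\in Z^1(\Gamma_{\Pi_i})\subseteq W$ one then needs the accompanying cocycle $n_j\in Z^1(N_{\wG}(\CC))$ with $\varphi(n_j)=\gamma_j$, whose existence must be verified case by case (as flagged in the Remark after Theorem~\ref{thmRP}); the relevant lifts and the group elements $g_j\in\wG$ with $n_j=g_j^{-1}\bar g_j$ are read off from Tables~\ref{tabCoc} and~\ref{tabKA}.

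Next, for step \textbf{(A)}, I would solve the linear condition $\bar q'=\gamma_j^{-1}q'$ for $q'\in\CC$: writing $q'$ in the basis $u_1,\ldots,u_4$ of $\h$ and using the explicit matrix form of $\gamma_j$, this pins down which coordinates are real and which are purely imaginary, hence describes the real points $p=g_jq'$ in the complex orbit $\wG q'$. By the Lemma preceding the algorithm, points obtained for distinct classes $[\gamma_j]\ne[\gamma_k]$ are never $\wG$-conjugate, and within a fixed $[\gamma_j]$ one reduces the family of admissible $q'$ modulo $\Gamma_{\Pi_i}$-conjugacy using Theorem~\ref{thmSE}a. For step \textbf{(B)} I would take $Z=Z_{\wG}(p)=g_jZ_{\wG}(q)g_j^{-1}$, which is one of the centralisers listed in Table~\ref{tabZ}, and compute $H^1(Z)$ with respect to ordinary complex conjugation: in almost all cases $Z$ splits as $\widetilde Z\times H$ with $H$ finite abelian, so $H^1(Z)=H^1(\widetilde Z)\times H^1(H)$ by Remark~\ref{H1torus}; since $Z^\circ$ is a torus or a product involving $\SL(2,\C)$, Remark~\ref{H1torus} gives $H^1(Z^\circ)=1$, and it remains to handle the component group $\widetilde Z/Z^\circ$ (of order at most $2$, generated by the class of $(J,J,J,J)$) by analysing the cocycle condition and the equivalence relation on elements $u=w(J,J,J,J)$ with $w\in Z^\circ$. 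Finally, step \textbf{(C)} assembles the answer: for each $[z_k]=[(A,B,C,D)]\in H^1(Z)$ one picks $b_k=(\veps(A),\veps(B),\veps(C),\veps(D))$ from Table~\ref{tabKA} so that $b_k^{-1}\overline{b_k}=z_k$, and then $b_k p=b_k g_j q'$ is the desired real orbit representative, giving exactly the list~\eqref{orbitreps}; representatives that leave the fixed Cartan subspaces $\cc_1,\ldots,\cc_7$ are rewritten via (A') and (C') as in Remark~\ref{remRewrite}, and the results are tabulated.

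The correctness of the whole scheme rests on the theorems already proved: Theorem~\ref{thmRP} guarantees that every complex orbit with a real point is detected by some $(\gamma_j,q')$ and that $g_jq'$ is then genuinely real, while Theorem~\ref{thm1to1} guarantees that $H^1(Z_{\wG}(p))$ is in bijection with the real $\wG(\R)$-orbits inside $\wG p$, with the stated representatives; irredundancy across different $\gamma_j$ is the content of the Lemma preceding the algorithm, and irredundancy across different components $\Pi_i$ is Theorem~\ref{thmSE}a. So the proof is, in essence, a finite but lengthy case analysis over $i\in\{1,\ldots,10\}$ and over $[\gamma_j]\in H^1(\Gamma_{\Pi_i})$, each case being an exercise in linear algebra (step A), a small Galois-cohomology computation for a reductive group with a component of order $\le 2$ (step B), and a bookkeeping step (step C), all of which we also cross-checked in GAP.

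I expect the main obstacle to be step \textbf{(B)}: unlike the situation in \cite{trivectors2}, here the stabilisers $Z_{\wG}(p)$ do \emph{not} have trivial Galois cohomology, so one genuinely has to compute $H^1(Z)$ for each centraliser in Table~\ref{tabZ}. The delicate part is the disconnected case, where one must determine precisely which elements $w(J,J,J,J)$ (with $w$ ranging over a torus or an $\SL(2,\C)$-factor) are cocycles and then decide the equivalence relation among them; getting the twisted action right and not over- or under-counting classes is where care is needed. A secondary nuisance is the verification, for every relevant $\gamma_j$, that a compatible cocycle $n_j\in Z^1(N_{\wG}(\CC))$ actually exists (this is not automatic), and the rewriting steps (A')/(C') needed to present representatives inside the seven fixed Cartan subspaces in a uniform way.
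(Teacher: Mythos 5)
Your proposal follows essentially the same route as the paper: the proof of Theorem~\ref{thmRSE} is exactly the case-by-case execution of steps (A)--(C) from Section~\ref{approach}, using Theorem~\ref{thmRP} for detecting real points, Theorem~\ref{thm1to1} and the decomposition $Z=\widetilde Z\times H$ for splitting each complex orbit, and the rewriting steps (A')/(C') to land in the fixed Cartan subspaces. The only economy the paper makes that you do not mention is that it treats only $i\in\{1,2,3,4,7,10\}$ explicitly and deduces $i\in\{5,6,8,9\}$ from the permutation symmetries of Remark~\ref{remSym}, but this is a labour-saving device rather than a difference of method.
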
 

It follows from Theorem \ref{thmSE} and Theorem \ref{thmRSE} that there are many complex semisimple orbits that have no real points. For example, consider the Case $i=10$ and let us fix $q=\lambda u_1$ with $\lambda=a+\imath b$ and $a,b\neq 0$. According to the description given in Case 10 of the proof of Theorem \ref{thmRSE}, there exists a real point in that orbit if and only if there is $q'$ with $\overline{q'}=\gamma_{10}^{-1} q'$; this is equivalent to requiring that $\lambda$ is either a real or a purely imaginary number, which is a contradiction; thus the complex semisimple orbit determined by $q$ has no real points.

We now prove  Theorem \ref{thmRSE} by considering each case $i\in\{1,2,3,4,7,10\}$ individually; due to Remark \ref{remSym}, the classifications for the cases $i\in\{5,6,8,9\}$ can be deduced from those for $i\in\{4,7\}$. For each $i$ we  comment on the classification steps (A), (B), (C) as explained in the previous section; we do not comment on the rewriting process (A') and (C'). In Case $i$ below we write $Z=Z_{\wG}(p_i)$ as in Table \ref{tabZ}. Throughout, we use the notation introduced in \eqref{eqmats}.

\medskip  

\noindent {\bf Case $\pmb{i=1}$.} There are $7$ equivalence classes of cocycles in $\Gamma_{\Pi_1}=W$, with representatives $\gamma_1=I$, $\gamma_2=-I$, $\gamma_3=\diag(-1,-1,-1,1)$, $\gamma_4=\diag(-1,-1,1,1)$, $\gamma_5=\diag(-1,1,-1,1)$, $\gamma_6=\diag(-1,1,1,-1)$, and $\gamma_7=\diag(-1,1,1,1)$. The centraliser $Z$ is finite so the cohomology can be easily computed, and  $H^1(Z)$ has 12 classes with representatives
\[\begin{array}{llll}
   z_1=(I,I,I,I),& z_2=(I,I,-I,-I),& z_3=(I,-I,I,-I),& z_4=(I,-I,-I,I), \\
   z_5=(K,K,K,K),& z_6=(K,K,-K,-K),& z_7=(K,-K,K,-K),& z_8=(K,-K,-K,K),\\
   z_9=(L,L,L,L),& z_{10}=(L,L,-L,-L),& z_{11}=(L,-L,L,-L),& z_{12}=(L,-L,-L,L).
\end{array}\]
We follow the procedure outlined in Section \ref{approach} and consider the various $[c]\in H^1(\Gamma_{\Pi_1})$. We note that in all cases $(1,j)$ below we start with a complex semisimple element $\lambda_1u_1+\ldots+\lambda_4u_4$ as in Table \ref{tabW}, with $(\lambda_1,\ldots,\lambda_4)$ reduced up to $\Gamma_{\Pi_1}$-conjugacy, where $\Gamma_{\Pi_1}=W$ is the Weyl group.

\medskip

\begin{iprf}
\item[{\bf $\pmb{(i,j)=(1,1)}$:}] Let $[c]=[\gamma_1]=[\diag(1,1,1,1)]$. We first determine  all $q'\in \h_{\Pi_1}^\circ$ with $\overline{q'}=q'$; by Theorem \ref{thmSE}, these are the elements $q'=\lambda_1 u_1+\ldots+\lambda_4 u_4\in \h_{\Pi_1}$ with $\lambda_1,\ldots,\lambda_4\in \R\setminus\{0\}$ and $\lambda_1\notin \{\pm\lambda_2\pm\lambda_3\pm\lambda_4\}$.
Since $\gamma_1=I$, we can choose $n_1=g_1=(I,I,I,I)$, and obtain $p=g_1q'=q'$ as real point in the complex $\wG$-orbit of $q'$. 
Since the first cohomology group of  $Z_{\wG}(q')=Z_{\wG}(p)$ has 12 elements, it follows from Theorem~\ref{thm1to1} that $\wG q'$ splits into 12 real orbits with representatives determined as in  \eqref{orbitreps}.
For $z_1=(I,I,I,I)$ we have $b_1=(I,I,I,I)$, with real orbit representative  $b_1p=b_1q'=q'$. For $z_2=(I,I,-I,-I)$ we choose $b_2=(I,I,L,L)$, with  real orbit representative  $b_2p=b_2q'=-\lambda_1 u_1+\lambda_2 u_2+\lambda_3 u_3-\lambda_4 u_4\in \h_{\Pi_1}$. In the same way we obtain the orbit representatives for $z_3,\ldots,z_{12}$, which we summarise in Table \ref{tabreps} in the block $j=1$.

\item[{\bf $\pmb{(i,j)=(1,2)}$:}] Now let $[c]=[\gamma_2]=[\diag(-1,-1,-1,-1)]$. 
As before we determine all $q'\in \h_{\Pi_1}^\circ$ with $\overline{q'}=-q'$; these are the elements  $q'=\lambda_1 u_1+\ldots+\lambda_4 u_4\in \h_{\Pi_1}$ with $\lambda_1,\ldots,\lambda_4\in \imath\R\setminus\{0\}$ and $\lambda_1\notin \{\pm\lambda_2\pm\lambda_3\pm\lambda_4\}$. Table~\ref{tabCoc} shows that $\diag(-1,-1,-1,-1)\in W$ is induced by $n_2=(-I,I,I,I)$. An element $g_2$ with $g_2^{-1}\overline{g_2}=n_2$ is $g_2=(L,I,I,I)$. Now $p=g_2q'$ is a real point in the complex orbit of $q'$. 
Since $Z_{\wG}(p)=g_2 Z_{\wG}(q')g_2^{-1}$ is finite, we can directly compute $H^1(Z_{\wG}(p))$ and obtain 12 classes with the following representatives
\[\begin{array}{llll}
z_1=(I,I,I,I), & z_2=(-I,-I,I,I), & z_3=(I,-I,I,-I), & z_4=(-I,I,I,-I),\\
z_5=(K,K,K,-K), & z_6=(-K,-K,-K,K), & z_7=(K,-K,K,K),  & z_8=(-K,K,K,K),\\
z_9 = (-L,-L,-L,-L), & z_{10}=(L,L,-L,-L), & z_{11}= (-L,L,-L,L),  & z_{12}= (L,-L,-L,L).
\end{array}\]
Real orbits representatives are now determined as in Equation \eqref{orbitreps}, see Table \ref{tabreps}.
Note that here every $\lambda_i$ is purely imaginary, but each product $(\varepsilon(A),\varepsilon(B),\varepsilon(C),\varepsilon(D))g_2(\lambda_1 u_1+\ldots+\lambda_4 u_4)$ is a real point.

\medskip

We repeat the same procedure for $\gamma_3,\cdots, \gamma_7$; for each case we only summarise the important data, and we refer to Table \ref{tabreps} for the list of real orbits representatives.

\medskip

\item[{\bf $\pmb{(i,j)=(1,3)}$:}] If $[c]=[\gamma_3]$, then  $q'\in \h_{\Pi_1}^\circ$
satisfies $\overline{q'}=\gamma_3^{-1}q'$ if and only if  $q'=\lambda_1 u_1+\ldots+\lambda_4 u_4$ with $\imath\lambda_1,\imath\lambda_2,\imath \lambda_3,\lambda_4\in \R\setminus\{0\}$ and $\lambda_1\notin \{\pm\lambda_2\pm\lambda_3\pm\lambda_4\}$; we have  $n_3=(M,M,-N,N)$ and
$g_3=(D(\eta^5),D(\eta^5),-D(\eta^3),-D(\eta^7))$;  the first cohomology of $Z_{\wG}(p)=g_3Z_{\wG}(q')g_3^{-1}$ consists of 
four classes defined by the representatives
\[\begin{array}{llll}
z_1=(-I,-I,-I,-I),& z_2=(-I,-I,I,I), & z_3=(-I,I,-I,I), & z_4=(-I,I,I,-I).
\end{array}\]

\item[{\bf $\pmb{(i,j)=(1,4)}$:}] If $[c]=[\gamma_4]$, then $q'\in \h_{\Pi_1}^\circ$ satisfies $\overline{q'}=\gamma_4^{-1}q'$ if and only if $q'=\lambda_1 u_1+\ldots+\lambda_4 u_4$ with $\imath\lambda_1,\imath\lambda_2,\lambda_3,\lambda_4\in \R\setminus\{0\}$ and $\lambda_1\notin \{\pm\lambda_2\pm\lambda_3\pm\lambda_4\}$. 
We have $n_4=(L,I,I,L)$, 
$g_4=(M,I,I,M),$ and the first cohomology of $Z_{\wG}(p)=g_4Z_{\wG}(q')g_4^{-1}$ consists of four classes defined by the representatives
\[\begin{array}{llll}
z_1=(I,I,I,I),& z_2=(I,I,-I,-I), & z_3=(L,L,L,L), & z_4=(L,L,-L,-L).
\end{array}\]

\item[{\bf $\pmb{(i,j)=(1,5)}$:}] If $[c]=[\gamma_5]$, then  $q'\in \h_{\Pi_1}^\circ$ satisfies $\overline{q'}=\gamma_5^{-1}q'$ if and only if $q'=\lambda_1 u_1+\ldots+\lambda_4 u_4$ with $\imath\lambda_1,\lambda_2,\imath\lambda_3,\lambda_4\in \R\setminus\{0\}$ and $\lambda_1\notin \{\pm\lambda_2\pm\lambda_3\pm\lambda_4\}$. 
We get $n_5=(I,L,I,L)$, 
$g_5=(I,M,I,M),$ and the first cohomology of  $Z_{\wG}(p)=g_5Z_{\wG}(q')g_5^{-1}$ consists of four classes defined by the representatives
\[\begin{array}{llll}
z_1=(I,I,I,I),& z_2=(I,I,-I,-I), & z_3=(L,L,L,L), & z_4=(L,L,-L,-L).
\end{array}\]

\item[{\bf $\pmb{(i,j)=(1,6)}$:}] If $[c]=[\gamma_6]$, then $q'\in \h_{\Pi_1}^\circ$ satisfies $\overline{q'}=\gamma_6^{-1}q'$ if and only if $q'=\lambda_1 u_1+\ldots+\lambda_4 u_4$ with $\imath\lambda_1,\lambda_2,\lambda_3,\imath\lambda_4\in \R\setminus\{0\}$ and $\lambda_1\notin \{\pm\lambda_2\pm\lambda_3\pm\lambda_4\}$. 
Now $n_6=(I,I,L,L)$, 
$g_6=(I,I,M,M)$, and the first cohomology of  $Z_{\wG}(p)=g_6Z_{\wG}(q')g_6^{-1}$ consists of four classes defined by the representatives
\[\begin{array}{llll}
z_1=(I,I,I,I),& z_2=(I,-I,I,-I), & z_3=(L,L,L,L), & z_4=(L,-L,L,-L).
\end{array}\]

\item[{\bf $\pmb{(i,j)=(1,7)}$:}] If $[c]=[\gamma_7]$, then  $q'\in \h_{\Pi_1}^\circ$ satisfies $\overline{q'}=\gamma_7^{-1}q'$ if and only if $q'=\lambda_1 u_1+\ldots+\lambda_4 u_4$ with $\imath\lambda_1,\lambda_2,\lambda_3,\lambda_4\in \R\setminus\{0\}$ and $\lambda_1\notin \{\pm\lambda_2\pm\lambda_3\pm\lambda_4\}$. 
We have $n_7=(M,M,M,M)$ and $g_7=(D(\eta^{5}),D(\eta^{5}),D(\eta^{5}),D(\eta^{5}))$; the first cohomology of $Z_{\wG}(p)=g_7Z_{\wG}(q')g_7^{-1}$ consists of four classes defined by the representatives
\[\begin{array}{llll}
z_1=(I,I,I,I),& z_2=(I,I,-I,-I), & z_3=(I,-I,I,-I), & z_4=(I,-I,-I,I).
\end{array}\]
\end{iprf}

\medskip

\noindent {\bf Case $\pmb{i=2}$.}
Since $\Gamma_{\Pi_2}$ is elementary abelian of order $8$, there are $8$ equivalence classes of cocycles in $H^1(\Gamma_{\Pi_2})$, with representatives
\[\text{\scalebox{0.97}{$\begin{array}{llll}
	\gamma_1=\diag(1,1,1,1),& \gamma_2=\diag(1,-1,-1,1),& \gamma_3=\diag(1,-1,1,-1),& \gamma_4=\diag(-1,-1,1,1), \\
	\gamma_5=\diag(1,1,-1,-1),& \gamma_6=\diag(-1,1,-1,1),& \gamma_7=\diag(-1,1,1,-1),& \gamma_8=\diag(-1,-1,-1,-1).
\end{array}$}}\]
The centraliser decomposes as $Z=\widetilde{Z}\times H$ where $H$ is abelian of order $4$,  generated by $(-I,-I,I,I)$ and $(-I,I,-I,I)$. Furthermore
$\widetilde{Z}/Z^\circ$ has size $2$, generated by the class of $(J,J,J,J)$, and
$Z^\circ$ is a $1$-dimensional torus consisting of elements
$T_1(a) = (D(a^{-1}), D(a^{-1}), D(a), D(a))$ with $a\in \C\setminus\{0\}.$
The main difference to the Case $i=1$ is that here $Z$ is not finite; we include some details to explain our computations. First,   a direct calculation shows that  $H^1(H)$
consists of four classes defined by the representatives
\[\begin{array}{llll}
	z_1=(I,I,I,I),& z_2=(-I,-I,I,I),& z_3=(-I,I,-I,I),& z_4=(I,-I,-I,I). 
\end{array}\]	
Next, we look at $H^1(\widetilde{Z})$. Since $Z^\circ$ is a $1$-dimensional torus, a direct computation (together with Remark \ref{H1torus}) shows that $H^1(Z^\circ)$ is trivial. It remains to consider the cohomology classes of elements in $\widetilde{Z}/Z^\circ$. Let  $u = wj^*$ where $w\in Z^\circ$ and  \[j^*=(J,J,J,J).\] A short calculation shows that $u$ is a 1-cocycle if and only if there is $a\in\R\setminus\{0\}$ with \[u=\SmallMatrix{0&-\imath a^{-1} \\ -\imath a &0} \times \SmallMatrix{0&-\imath a^{-1} \\ -\imath a &0} \times \SmallMatrix{0&\imath a \\ \imath a^{-1} &0} \times \SmallMatrix{0&\imath a \\ \imath a^{-1} &0}.\]
Moreover, every such $u$ is equivalent to $k=(-K,-K,K,K)$, thus  
$H^1(\tilde Z)=\{[1],[k]\}$. 
Indeed, we can verify (by a short calculation or with the help of GAP) that two $1$-cocycles $u$, $u'$ satisfying $u'=gu\overline g^{-1}$ where $g=(D(c^{-1}), D(c^{-1}), D(c), D(c))\in Z^\circ$ for some $c\in \C^\times$ if and only if $a'=a|c|^2$, thus
we can assume $a=\pm 1$. Now  $u'=gj^*u(\overline{gj^*})^{-1}$ for some $g\in Z^\circ$ if and only if $aa'=(c/|c|)^2$, then the 1-cocycles corresponding to $a=1$ and $a'=-1$ are equivalent.
Since $H^1(Z)=H^1(\widetilde{Z})\times H^1(H)$, representatives of the classes in $H^1(Z_{\wG}(p))$ are $z_1,\ldots,z_4$ and  
\[\begin{array}{ll}
	z_5=z_1k=(-K,-K,K,K),& z_6=z_2k=(K,K,K,K),\\ z_7=z_3k=(K,-K,-K,K),& z_8=z_4k=(-K,K,-K,K).
\end{array}\]
With the same approach we obtain $H^1(g_jZg_j^{-1})$ for all $\gamma_j$ for $j=1,\dots,8$. Representatives of the real orbits are listed in Table \ref{tabreps2}; below we only list the important data.

\begin{iprf}
	\item[{\bf $\pmb{(i,j)=(2,1)}$:}] If $[c]=[\gamma_1]$, then   $q'\in \h_{\Pi_2}$  satisfies  $\overline{q'}=q'$ if and only if  $\lambda_1,\lambda_2,\lambda_3\in \R$ are nonzero and $\lambda_1\notin \{\pm\lambda_2\pm\lambda_3\}$. We have $n_1=g_1=(I,I,I,I)$, and $H^1(Z)$ was computed above.

	\item[{\bf $\pmb{(i,j)=(2,2)}$:}] If $[c]=[\gamma_2]$, then $q'\in \h_{\Pi_2}$ satisfies $\overline{q'}=\gamma_2^{-1}q'$ if and only if  $\lambda_1,\imath\lambda_2,\imath\lambda_3\in\R\setminus\{0\}$ and $\lambda_1\notin \{\pm\lambda_2\pm\lambda_3\}$. We have $n_2=(I,I,L,-L)$ and $g_2=(I,I,M,D(\zeta))$, with real point $p=g_2q'$. A direct calculation shows that $u=g_2 wj^* g_2^{-1}$ with $w=(D(a)^{-1}, D(a)^{-1}, D(a), D(a))\in Z^\circ$ for some $a\in \C^\times$ is a $1$-cocycle if and only if $a=-\overline a$ and $a=\overline a$, which is a contradiction. This shows that  there is no $1$-cocycle with representative $u=g_2 wj^* g_2^{-1}$, so  $H^1(Z)=H^1(H)$.

\medskip

\noindent For $\gamma_3,\ldots,\gamma_7$ we also deduce that $H^1(Z)=H^1(H)$; the case of $\gamma_8$ is similar to $\gamma_1$.

            \medskip
	
	\item[{\bf $\pmb{(i,j)=(2,3)}$:}] If $[c]=[\gamma_3]$, then the condition on $q'=\lambda_1 u_1+\lambda_2u_2 +\lambda_3 u_3\in \h_{\Pi_2}$ is $\lambda_1,\imath\lambda_2,\lambda_3\in \R\setminus\{0\}$, and $\lambda_1\notin \{\pm\lambda_2\pm\lambda_3\}$. In this case $n_3=(I,L,I,-L)$ and $g_3=(I,M,I,D(\zeta))$. 
	
	\item[{\bf $\pmb{(i,j)=(2,4)}$:}] If $[c]=[\gamma_4]$, then the condition on $q'=\lambda_1 u_1+\lambda_2u_2 +\lambda_3 u_3\in \h_{\Pi_2}$ is $\imath\lambda_1,\imath\lambda_2,\lambda_3\in \R\setminus\{0\}$ and $\lambda_1\notin \{\pm\lambda_2\pm\lambda_3\}$. We have $n_4=(L,I,I,L)$ and $g_4=(M,I,I,M)$.

	\item[{\bf $\pmb{(i,j)=(2,5)}$:}] If $[c]=[\gamma_5]$, then the condition on  $q'=\lambda_1 u_1+\lambda_2u_2 +\lambda_3 u_3\in \h_{\Pi_2}$ is  $\lambda_1,\lambda_2,\imath\lambda_3\in \R\setminus\{0\}$ and $\lambda_1\notin \{\pm\lambda_2\pm\lambda_3\}$; we have $n_5=(L,I,I,-L)$ and $g_5=(M,I,I,N)$.

	\item[{\bf $\pmb{(i,j)=(2,6)}$:}] If $[c]=[\gamma_6]$, then the condition on $q'=\lambda_1 u_1+\lambda_2u_2 +\lambda_3 u_3\in \h_{\Pi_2}$ is $\imath\lambda_1,\lambda_2,\imath\lambda_3\in \R\setminus\{0\}$ and $\lambda_1\notin \{\pm\lambda_2\pm\lambda_3\}$; we have  $n_6=(I,L,I,L)$ and $g_6=(I,M,I,M)$.

	\item[{\bf $\pmb{(i,j)=(2,7)}$:}] If $[c]=[\gamma_7]$, then the condition on $q'=\lambda_1 u_1+\lambda_2u_2 +\lambda_3 u_3\in \h_{\Pi_2}$ is $\imath\lambda_1,\lambda_2,\lambda_3\in \R\setminus\{0\}$ and $\lambda_1\notin \{\pm\lambda_2\pm\lambda_3\}$; we have  $n_7=(I,I,L,L)$ and $g_7=(I,I,M,M)$.

	\item[{\bf $\pmb{(i,j)=(2,8)}$:}] If $[c]=[\gamma_8]$, then the condition on $q'=\lambda_1 u_1+\lambda_2u_2 +\lambda_3 u_3\in \h_{\Pi_2}$ is  $\imath\lambda_1,\imath\lambda_2,\imath\lambda_3\in\R\setminus\{0\}$ and $\lambda_1\notin \{\pm\lambda_2\pm\lambda_3\}$; we have $n_8=(-I,I,I,I)$ and $g_8= (L,I,I,I)$.	A short calculation shows that  $u = wj^*$ with $w=(D(a)^{-1}, D(a)^{-1}, D(a), D(a))\in Z^\circ$ is a $1$-cocycle if and only if $a$ is purely imaginary. Moreover, every such $u$ is equivalent to $k=(K,-K,K,K)$, thus  
	$|H^1(\tilde Z)|=2$.
	In conclusion, $H^1(Z)$ has $8$ classes with representatives $z_1,\ldots,z_4$ and 
	\[\begin{array}{ll}
	z_5=z_1k=(K,-K,K,K),& z_6=z_2k=(-K,K,K,K),\\ z_7=z_3k=(-K,-K,-K,K),& z_8=z_4k=(K,K,-K,K).
\end{array}\]	
\end{iprf}

\medskip

\noindent {\bf Case $\pmb{i=3}$.} Since $H^1(\Gamma_{\Pi_3})=\{ [I], [-I]\}$, there are $2$ equivalence classes of cocycles with representatives $\gamma_1=\diag(1,1,1,1)$ and $\gamma_2=\diag(-1,-1,-1,-1)$. We  decompose $Z=Z^\circ\times H$, where $H$ is the same as in the case $i=2$ and
\[Z^\circ =\{(A^\#,A^\#,A,A) : A\in\SL(2,\mathbb{C})\}.\] A short calculation and  Remark \ref{H1torus} show that $H^1(Z^\circ)=1$, so  $H^1( Z )= H^1( H)=\{[z_1],[z_2],[z_3],[z_4]\}$ as determined for  $i=2$. 
Representatives of the real orbits are listed in Table \ref{tabreps3}; below we give some details.

\medskip

\begin{iprf}
	\item[{\bf $\pmb{(i,j)=(3,1)}$:}] If $[c]=[\gamma_1]$, then $q'=\lambda_1 (u_1-u_2) + \lambda_2 (u_1-u_3)\in \h_{\Pi_3}^\circ$ satisfies $\overline{q'}=q'$ if and only if $\lambda_1,\lambda_2\in \R$ and $\lambda_1\lambda_2(\lambda_1+\lambda_2)\neq 0$; we have $n_1=g_1=(I,I,I,I)$.	
	
	\item[{\bf $\pmb{(i,j)=(3,2)}$:}] If $[c]=[\gamma_2]$, then the condition on $q'=\lambda_1 (u_1-u_2) + \lambda_2 (u_1-u_3)\in \h_{\Pi_3}$ is  $\lambda_1,\lambda_2\in \imath\R$ and $\lambda_1\lambda_2(\lambda_1+\lambda_2)\neq 0$; we have $n_2=(-I,I,I,I)$ and $g_2=(L,I,I,I)$. A short calculation shows that $H^1(g_2Z^\circ g_2^{-1})$ is in bijection with $H^1(\SL(2,\C))=1$; in conclusion,  $H^1(Z_{\wG}(p))=H^1(H)$. 
\end{iprf}		

\medskip

\noindent {\bf Case $\pmb{i=4}$.} This case is similar to  $i=2$. Here $H^1 (\Gamma_{\Pi_4})$ consists of the classes of 
\[\gamma_1=\diag(1,1,1,1),\quad \gamma_2=\diag(-1,1,1,1),\quad \gamma_3=\diag(-1,1,1,-1),\quad \gamma_4=\SmallMatrix{ 0&0&0&-1\\0&0&1&0\\0&1&0&0\\-1&0&0&0}.\]
We have $Z= \widetilde{Z}\times H$ where $H$ is abelian of order
2 and generated by $(-I,I,-I,I)$. 
Furthermore
$\widetilde{Z}/Z^\circ$ has order $2$, generated by the class of $(J,J,J,J)$, and  $Z^\circ$ is a $2$-dimensional torus consisting of elements $T_2(a,b)=\left\{ (D(a)^{-1}, D(a), D(b)^{-1}, D(b)) \;:\; a,b\in \C^\times\right \}$. First,
$H^1(H)$  
consists of 2 classes defined by the representatives
\[\begin{array}{llll}
	z_1=(I,I,I,I),& z_2=(-I,I,-I,I). 
\end{array}\]	
Since $Z^\circ$ is parametrised by a $2$-dimensional torus, a direct computation and Remark \ref{H1torus}) shows that $H^1(Z^\circ)$ is trivial. 
Now consider the cohomology classes of elements in $\widetilde{Z}/Z^\circ$, that is,  $u = wj^\ast$ where $w\in T_2(a,b)$. Computations similar to the ones in Case  $i=2$ show that $u$ is a $1$ cocycle if and only if $a,b\in\imath\R$. Moreover, such
a $1$-cocycle $u$ is equivalent to either $(-K,K,-K,K)$ or $(-K,K,K,-K)$, thus $|H^1 (\widetilde{Z})|=3$. In conclusion, $H^1(Z)$ has $6$ classes with representatives $z_1,z_2$ and  
\[z_3=(-K,K,-K,K),\quad z_4=(-K,K,K,-K),\quad  z_5=(K,K,K,K),\quad z_6=(K,K,-K,-K).\] Real orbit representatives are listed in Table \ref{tabreps4}; we summarise the important data below.

\begin{iprf}
\item[{\bf $\pmb{(i,j)=(4,1)}$:}] If $[c]=[\gamma_1]$, then  $q'=\lambda_1 u_1+\lambda_4u_4\in \h_{\Pi_4}$  satisfies  $\overline{q'}=q'$ if and only if  $\lambda_1,\lambda_4\in\R$ and $\lambda_1\lambda_4(\lambda_1+\lambda_4)(\lambda_1-\lambda_4)\neq0$. We have  $n_1=g_1=(I,I,I,I)$ and $p=g_1q'=q'$
  
	\item[{\bf $\pmb{(i,j)=(4,2)}$:}] If  $[c]=[\gamma_2]$, then the condition on  $q'=\lambda_1 u_1+\lambda_4u_4\in \h_{\Pi_4}$  is  $\imath\lambda_1,\lambda_4\in \R\setminus\{0\}$; we have  $n_2=(M,M,M,M)$ and $g_2=(D(\eta^5),D(\eta^5),D(\eta^5),D(\eta^5))$.	Let $u=g_2 wj^* g_2^{-1}$ with $w\in Z^\circ$. As in Case $i=2$, a short calculation shows that there is no $1$-cocycle with representative $u$, so $H^1(Z)=H^1(H)$ has size~2.

	\item[{\bf $\pmb{(i,j)=(4,3)}$:}] If $[c]=[\gamma_3]$, then the condition on $q'$ is $q'=\lambda_1 u_1+\lambda_4u_4\in \h_{\Pi_4}$ with $\lambda_1,\lambda_4\in\imath\R$ and $\lambda_1\lambda_4(\lambda_1+\lambda_4)(\lambda_1-\lambda_4)\neq0$; we have  $n_3=(I,I,L,L)$ and $g_3=(I,I,M,M)$. A direct calculation shows that a $1$-cocycle  $u=g_3 wj^* g_3^{-1}$ with $w \in Z^\circ$ is equivalent to $(-K,K,-K,-K)$ or $(-K,K,K,K)$. In conclusion $H^1(g_3Zg_3^{-1})$ consists of  6 classes with representatives $z_1,z_2$ and 
	\[\begin{array}{llll} 
		z_3=(-K,K,-K,-K),& z_4=(K,K,K,-K), &z_5=(-K,K,K,K),& z_6=(K,K,-K,K).
	\end{array}\]
	
	\item[{\bf $\pmb{(i,j)=(4,4)}$:}] Let $[c]=[\gamma_4]$, then the condition on $q'=\lambda_1 u_1+\lambda_4u_4\in \h_{\Pi_4}$ is $\overline\lambda_1=-\lambda_4$; we have  $\lambda_1-\lambda_4\in \R$ and $\lambda_1+\lambda_4\in \imath\R$. We have $g_4=(M,M,F,LF)$, and a direct calculation (assisted by GAP) shows that  $H^1(g_4Zg_4^{-1})$ has size $4$ with representatives $z_1,z_2$ and $z_3=(-K,-K,-L,-L)$ and $z_4=(K,-K,L,-L)$.
\end{iprf}

\medskip

\noindent {\bf Case $\pmb{i=7}$.}  Since $H^1 (\Gamma_{\Pi_7}) = \{ [I], [-I]\}$, we have the same $\gamma_1,\gamma_2$ as in the Case $i=3$. We decompose  $Z=Z^\circ\times H$ where $H$ has order 2, generated by $(-I,I,-I,I)$, and   $Z^\circ$ is parametrised by $\SL(2,\C)\times\SL(2,\C)$. As before,  $H^1(Z^\circ)=1$, so $H^1( Z) = H^1( H)$  consists of the classes of $z_1=(I,I,I,I)$ and  $z_2=(-I,I,-I,I)$. Table \ref{tabreps7} lists the real orbit representatives.

\begin{iprf}
	\item[{\bf $\pmb{(i,j)=(7,1)}$:}] If $[c]=[\gamma_1]$, then  $q'\in \h_{\Pi_7}^\circ$ satisfies $\overline{q'}=q'$ if and only if $q'=\lambda (u_1-u_4)$ with $\lambda\in \R$ and $\lambda\neq 0$; we have $n_1=g_1=(I,I,I,I)$.	
	\item[{\bf $\pmb{(i,j)=(7,2)}$:}] If $[c]=[\gamma_2]$, then the condition on $q'=\lambda (u_1-u_4)$ is  $\lambda\in \imath\R\setminus\{0\}$; we have $n_2=(-I,I,I,I)$ and $g_2=(L,I,I,I)$. A direct computation shows that  $H^1(g_2Z^\circ g_2^{-1})=1$, so  $H^1(g_2Zg_2^{-1})=H^1(H)$ determines 2 real orbits.
\end{iprf}

\medskip

\noindent {\bf Case $\pmb{i=10}$.} Here we have  $H^1 (\Gamma_{\Pi_{10}}) = \{ [1], [-1]\}$ and $Z= \widetilde{Z}$, where $Z^\circ$ is a 3-dimensional
torus consisting of elements $T_3(a,b,c)=\left\{ (D(abc)^{-1}, D(a), D(b), D(c)) \;:\; a,b,c\in \C^\times \right\}$ and $\widetilde{Z}/Z^\circ$ is of order 2, generated by the class of $(J,J,J,J)$. As before, $H^1(Z^\circ)=1$, and elements of the form $u = wj^*$ with $w=T_3(a,b,c)\in Z^\circ$ are $1$-cocycles if and only if $a,b,c\in\imath\R$. Moreover, every such 1-cocycle $u$ is equivalent to  $(K,K,K,K)$, $(-K,K,K,-K)$, $(-K,K,-K,K),$ or $(K,K,-K,-K)$, thus  $H^1(Z)$ has $5$ classes with representatives   
\[	z_1=(I,I,I,I),z_2=(K,K,K,K), z_3=(-K,K,K,-K), z_4=(-K,K,-K, K), z_5=(K,K,-K,-K).\]
Table \ref{tabreps10} lists the real orbits representatives.
\begin{iprf}
	\item[{\bf $\pmb{(i,j)=(10,1)}$:}] If $[c]=[\gamma_1]$, then the condition on  $q'=\lambda u_1$  is  $\lambda\in\R\setminus\{0\}$; we have $n_1=g_1=(I,I,I,I)$.

	\item[{\bf $\pmb{(i,j)=(10,2)}$:}] If $[c]=[\gamma_2]$, then the condition on  $q'=\lambda u_1\in \h_{\Pi_{10}}$ is  $\lambda\in \imath\R\setminus\{0\}$; we have $n_2=(-I,I,I,I)$ and $g_2=(L,I,I,I)$. Since $L$ commutes with diagonal matrices, $H^1(g_2Z^\circ g_2^{-1})=H^1(Z^\circ)$. On the other hand, every $1$-cocycle  $u = wj^*$ with $w\in Z^\circ$ is 
	equivalent to  $(K,K,K,K)$, $(-K,K,K,-K)$, $(-K,K,-K,K),$ or $(K,K,-K,-K)$; thus  $H^1(Z)$ has $5$ classes with representatives   
	\[z_1=(I,I,I,I), z_2=(-K,K,K,K), z_3=(K,K,K,-K),	z_4=(K,K,-K, K), z_5=(-K,K,-K,-K).\]
\end{iprf}

\medskip


\section{Real elements of mixed type}\label{secMixed}

\noindent An element of mixed type is of the form $p+e$ where $p$ is
semisimple, $e$ is nilpotent and $[p,e]=0$. From the uniqueness of the Jordan
decomposition it follows that two elements $p+e$ and $p'+e'$  of mixed type are
$\wG$-conjugate if and only if there is a $g\in \wG$ with $g p =p'$ and
$g e=e'$. So if we want to classify orbits of mixed type then we may assume
that the semisimple part is one of a fixed set of orbit representatives
of semisimple elements. For $\mathbb{K}\in \{\R,\C\}$  we define
$$\MM(\mathbb{K}) = \{ p+e \colon p+e\text{ is of mixed type in }
\g_1(\mathbb{K})\}.$$
We know the $\wG$-orbits in $\MM(\C)$ and we want to classify the
$\wG(\R)$-orbits in $\MM(\R)$. Applying the general Galois cohomology approach
will lead to additional challenges. To avoid these, instead of working with
$\mathcal{M}(\mathbb{K})$, we will consider 4-tuples $(p,h,e,f)$, where
$p+e$ is a mixed element and $(h,e,f)$ is a suitable $\ssl_2$-triple.
This has the
advantage that the stabiliser of such a 4-tuple in $\wG$ is smaller than
the stabiliser of $p+e$, and secondly it is reductive. This makes it easier
to compute the Galois cohomology sets. We now
explain the details of this approach.

Let $p\in \g_1(\K)$ be a semisimple element. The nilpotent parts of
mixed elements with semisimple part $p$ lie in the subalgebra
$$\a=\z_{\g(\K)}(p) = \{ x\in \g(\K) : [x,p]=0\}.$$
This subalgebra inherits the grading from $\g$, that is, if we set $\a_i = \a
\cap \g_i(\K)$ then $\a=\a_0\oplus \a_1$. Moreover, the possible nilpotent parts
of mixed elements with semisimple part $p$ correspond, up to $\wG(\K)$-conjugacy,
to the $Z_{\wG(\K)}(p)$-orbits of nilpotent elements in $\a_1$.
The latter are classified using {\em homogeneous $\ssl_2$-triples},
which are triples $(h,e,f)$ with $h\in \a_0$ and $e,f\in\a_1$ such that
$$[h,e]=2e,\quad [h,f]=-2f,\quad [e,f]=h.$$
By the Jacobson-Morozov Theorem (see
\cite[Proposition 4.2.1]{trivectors}), every nonzero nilpotent $e\in \a_1$ lies in some
homogeneous $\ssl_2$-triple. Moreover, if $e,e'\in \a_1$ lie in homogeneous
$\ssl_2$-triples $(h,e,f)$ and $(h',e',f')$, then $e$ and $e'$ are $Z_{\wG(\K)}(p)$-conjugate
if and only if the triples $(h,e,f)$ and $(h',e',f')$ are $Z_{\wG(\K)}(p)$-conjugate.
For this reason we consider the set of quadruples
$$\QQ(\mathbb{K}) = \{ (p,h,e,f) \colon p\in \g_1(\mathbb{K}) \text{ is
  semisimple and } (h,e,f) \text{ is a homogeneous $\ssl_2$-triple
  in } \z_{\g(\mathbb{K})}(p)\}.$$
We have just shown that there is a surjective map $\QQ(\K)\to \MM(\K)$,
$(p,h,e,f) \mapsto p+e$. By the next lemma, this map defines a bijection between the
$\wG(\K)$-orbits in the two sets.

\begin{lemma} Let $\mathbb{K}\in\{\R,\mathbb{C}\}$. Let  $p,\hat{p}\in \g_1(\mathbb{K})$ be semisimple and let  $(h,e,f)$ and $(\hat{h},\hat{e},\hat{f})$ be homogeneous $\ssl_2$-triples in
	$\z_{\g(\mathbb{K})}(p)$  and  $\z_{\g(\mathbb{K})}(\hat{p})$, respectively. Then $p+e$ and $\hat{p}+\hat{e}$ are $\wG(\mathbb{K})$-conjugate if and
  only if $(p,h,e,f)$ and $(\hat{p},\hat{h},\hat{e},\hat{f})$ are
  $\wG(\K)$-conjugate.
\end{lemma}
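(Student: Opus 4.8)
The plan is to prove the two implications separately, with the forward direction being essentially immediate and the reverse direction requiring the conjugacy theory of $\ssl_2$-triples. First I would handle the easy direction: if $(p,h,e,f)$ and $(\hat p,\hat h,\hat e,\hat f)$ are $\wG(\K)$-conjugate, say $g\cdot(p,h,e,f)=(\hat p,\hat h,\hat e,\hat f)$ with $g\in\wG(\K)$ acting diagonally (via $\pi$ on the $\g_0$-components and the given module action on $\g_1$), then in particular $gp=\hat p$ and $ge=\hat e$, so $g(p+e)=\hat p+\hat e$, which gives $\wG(\K)$-conjugacy of the mixed elements. Here one should note that $\wG(\K)$ acts on all of $\g(\K)$ through $\pi\colon\wG(\K)\to G_0(\K)\subseteq\Aut(\g(\K))$, so it makes sense to speak of $g$ acting simultaneously on $p$, $h$, $e$, $f$, and this action preserves brackets and the grading.

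For the reverse direction, suppose $g\in\wG(\K)$ satisfies $g p=\hat p$ and $g e=\hat e$. The first equation shows that $g$ maps $\z_{\g(\K)}(p)$ onto $\z_{\g(\K)}(\hat p)$ and respects the induced gradings, so $(g h, \hat e, g f)=(gh,ge,gf)$ is a homogeneous $\ssl_2$-triple in $\z_{\g(\K)}(\hat p)=:\hat\a$ whose nilpositive element is $\hat e$. Now I have two homogeneous $\ssl_2$-triples in $\hat\a$ with the same nilpositive element $\hat e$, namely $(gh,\hat e,gf)$ and $(\hat h,\hat e,\hat f)$. The key step is to invoke the uniqueness part of the Jacobson--Morozov theorem in its $\Z/2\Z$-graded, group-equivariant form: any two homogeneous $\ssl_2$-triples in $\hat\a$ sharing the nilpositive element $\hat e$ are conjugate under the connected subgroup of $Z_{\wG(\K)}(\hat p)$ whose Lie algebra is $\z_{\hat\a_0}(\hat e)$ — more precisely, under the unipotent subgroup generated by $\exp(\ad x)$ for $x\in\hat\a_0$ with $[x,\hat e]=0$. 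This is the graded analogue of the classical statement (\cite[Proposition 4.2.1]{trivectors} and the surrounding discussion of homogeneous triples) that the $\ssl_2$-triple completing a given nilpotent is unique up to the action of the centraliser of that nilpotent. Call the element realising this conjugacy $z\in Z_{\wG(\K)}(\hat p)$; then $z$ fixes $\hat e$ (since it comes from $\z_{\hat\a_0}(\hat e)$), fixes $\hat p$, and sends $gh\mapsto\hat h$, $gf\mapsto\hat f$. Setting $g'=zg\in\wG(\K)$, we get $g' p=z\hat p=\hat p$, $g' e=z\hat e=\hat e$, $g' h=zgh=\hat h$, and $g' f=zgf=\hat f$, so $g'$ conjugates $(p,h,e,f)$ to $(\hat p,\hat h,\hat e,\hat f)$, as required.

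The main obstacle is making the uniqueness statement for homogeneous $\ssl_2$-triples genuinely equivariant over the possibly non-algebraically-closed field $\K=\R$, and with the acting group being $Z_{\wG(\K)}(\hat p)$ rather than the full automorphism group: one must check that the conjugating element can be taken inside $Z_{\wG(\K)}(\hat p)$ and defined over $\K$. The standard argument is that $z$ is produced as $\exp(\ad x)$ for an explicitly constructed $x\in\hat\a_0$ (obtained by solving a linear system built from $[h-\hat h,\hat e]$ and related brackets, all of which lie in the $\K$-span since $gh,\hat h,\hat e,\hat f\in\g(\K)$), so $x$ and hence $z$ are automatically defined over $\K$; and since $x\in\z_{\g(\K)}(\hat p)$, the group element $\exp(\ad x)$ centralises $\hat p$ and lifts to $\wG(\K)$ because $\pi$ is a covering that induces an isomorphism on Lie algebras. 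I would cite \cite[Proposition 4.2.1]{trivectors} (or the homogeneous $\ssl_2$-triple discussion in \cite{trivectors} / \cite{trivectors2}) for this equivariant uniqueness rather than reprove it, since the paper has already committed to using that machinery.
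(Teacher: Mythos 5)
Your proof is correct and follows essentially the same route as the paper: the nontrivial direction reduces, via uniqueness of the Jordan decomposition, to $gp=\hat p$ and $ge=\hat e$, after which $(gh,\hat e,gf)$ is a homogeneous $\ssl_2$-triple in $\z_{\g(\K)}(\hat p)$ and \cite[Proposition~4.2.1]{trivectors} supplies $g_1\in Z_{\wG(\K)}(\hat p)$ with $g_1(gh,\hat e,gf)=(\hat h,\hat e,\hat f)$, so $g_1g$ conjugates the 4-tuples. The extra discussion of rationality of the conjugating element is more than the paper records but is consistent with its use of the cited proposition.
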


\begin{proof}
	Only one direction needs proof. If $g(p+e) =\hat{p}+\hat{e}$ with $g\in \wG(\mathbb{K})$, then $gp = \hat{p}$ and $ge=\hat{e}$ by uniqueness of the Jordan
	decomposition. Now  $(gh,\hat{e},gf)$ is a
	homogeneous $\ssl_2$-triple in $\z_{\g(\mathbb{K})}(\hat{p})$. By \cite[Proposition~4.2.1]{trivectors}, there is  $g_1\in Z_{\wG(\K)}(\hat{p})$
	such that $g_1(gh,\hat{e},gf) = (\hat{h},\hat{e},\hat{f})$, so  $(g_1g)(p,h,e,f) =(\hat{p},\hat{h},\hat{e},\hat{f})$.
\end{proof}
 Our approach now is to  classify the $\wG(\R)$-orbits in $\QQ(\R)$; the main tool for this is the following theorem which follows directly from 
Theorem \ref{thm:H1} and the fact that $\wG$ has trivial cohomology.

\begin{theorem}\label{thmMGal}
Let $(p',h',e',f')$ be a real point in the $\wG$-orbit of $(p,h,e,f)\in\QQ(\mathbb{C})$. There is a 1-to-1 correspondence between  $H^1(Z_{\wG}(p',h',e',f'))$ and the $\wG(\R)$-orbits in $\wG(p,h,e,f)$:  the orbit corresponding to the class $[z]\in H^1(Z_{\wG}(p',h',e',f'))$ has representative $b(p',h',e',f')$ where $b\in \wG$ satisfies $z=b^{-1}\overline{b}$.
\end{theorem}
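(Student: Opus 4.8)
The plan is to apply the general Galois cohomology machinery from Section~\ref{secGH}, specialised to the action of $\wG$ on the set $\QQ(\C)$ equipped with the complex conjugation coming from the real form $\g(\R)$. Concretely, I would take the group $G=\wG$ with the usual conjugation $\sigma=\bar\ast$, and the $G$-set $X=\QQ(\C)$ with the conjugation $\sigma$ that acts coordinatewise on quadruples $(p,h,e,f)$ via complex conjugation on $\g$; this is compatible with the action in the sense required before Theorem~\ref{thm:H1}, since conjugation is a Lie algebra automorphism of $\g$ preserving the grading and $\wG(\R)=\wG^\sigma$.

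First I would fix the complex orbit $\OO=\wG(p,h,e,f)$ and observe that, by hypothesis, it contains a real point $(p',h',e',f')$, so $\OO$ is $\sigma$-stable and $\OO^\sigma\ne\emptyset$. Applying the orbit--stabiliser exact sequence to the base point $x_0=(p',h',e',f')$ and then the long exact sequence in cohomology (as recalled from \cite[Proposition~36]{serre} in Section~\ref{secGH}) gives
\[
\mathcal{O}^\sigma\labelto{\delta} H^1(Z_{\wG}(x_0),\sigma)\labelto{i^*} H^1(\wG,\sigma),
\]
and Theorem~\ref{thm:H1} identifies the $\wG^\sigma$-orbits in $\OO^\sigma$ with $\ker i^*$. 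Since $\wG=\SL(2,\C)^4$ is a direct product of copies of $\SL(2,\C)$, Remark~\ref{H1torus} gives $H^1(\wG)=H^1(\SL(2,\C))^4=1$, so $\ker i^*=H^1(Z_{\wG}(x_0))$ and the bijection becomes $\mathcal{O}^\sigma/\wG^\sigma \leftrightarrow H^1(Z_{\wG}(p',h',e',f'))$.

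It remains only to unwind the explicit form of the representatives. Tracing through the connecting map: a class $[z]\in H^1(Z_{\wG}(p',h',e',f'))$ pulls back under $i^*$ to the trivial class in $H^1(\wG)$, so there is $b\in\wG$ with $z=b^{-1}\sigma(b)=b^{-1}\overline{b}$; the description of $\delta$ in Section~\ref{secGH} (namely $\delta(g\cdot x_0)$ is the class of $g^{-1}\sigma(g)$) then shows that the real point $b^{-1}\cdot x_0$—equivalently, after replacing $b$ by $b^{-1}$, the point $b(p',h',e',f')$—lies in $\mathcal{O}^\sigma$ and its $\wG^\sigma$-orbit is the one matched to $[z]$ under the bijection. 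This yields exactly the stated correspondence.

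\emph{Main obstacle.} There is essentially no deep obstacle here: the theorem is a direct corollary of Theorem~\ref{thm:H1} together with $H^1(\wG)=1$, exactly as Theorem~\ref{thm1to1} was for semisimple elements. The only point requiring a little care is verifying that the conjugation $\sigma$ on $\QQ(\C)$ is well defined and compatible with the $\wG$-action—i.e.\ that complex conjugation on $\g$ sends homogeneous $\ssl_2$-triples in $\z_{\g}(p)$ to homogeneous $\ssl_2$-triples in $\z_{\g}(\overline p)$—which is immediate because conjugation preserves the grading, the Lie bracket, and hence centralisers and the defining relations of an $\ssl_2$-triple. The genuine difficulty is deferred to the subsequent sections: producing a real point $(p',h',e',f')$ in each relevant complex orbit and actually computing $H^1(Z_{\wG}(p',h',e',f'))$, the stabilisers now being reductive rather than finite.
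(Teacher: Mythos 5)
Your proposal is correct and follows exactly the paper's route: the paper also derives Theorem~\ref{thmMGal} directly from Theorem~\ref{thm:H1} applied to the $\wG$-set $\QQ(\C)$ together with $H^1(\wG)=1$, just as in Theorem~\ref{thm1to1}. (The only cosmetic wobble is your intermediate claim that the real point is $b^{-1}x_0$ before "replacing $b$ by $b^{-1}$"; since $\delta(gx_0)=[g^{-1}\sigma(g)]$, the representative for $[z]$ with $z=b^{-1}\overline{b}$ is $bx_0$ directly, which is where you end up anyway.)
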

 
The complex semisimple and mixed orbits are parametrised as follows.
For each $i\in\{1,\ldots,10\}$ let $\Sigma_i$ be a set of $\wG$-orbit
representatives of semisimple elements in $\h_{\Pi_i}^\circ$ as specified in
Table \ref{tabW}. By Theorem \ref{thmME}, up to $\wG$-conjugacy, the complex
elements in $\g_1$ of mixed type are $s+n$ where $s\in\Sigma_i$ for some $i$
and  $n=n_{i,r}$ for some $r$, as specified in Table \ref{tabnij}. In the following we write \[\Sigma=\Sigma_1\cup\ldots\cup \Sigma_{10}.\]
The first problem is to decide which orbits in $\QQ(\C)$ have real representatives, but 
we know already which semisimple orbits have real representative. So
let us consider $p\in \Sigma$ such that  $p'\in \g_1(\R)$ is a real element in its
$\wG$-orbit. We define  $\a = \z_{\g(\C)}(p')$ as above, with the induced grading $\a=\a_0\oplus \a_1$. It remains to determine which nilpotent
$Z_{\wG(\C)}(p')$-orbits in $\a_1$ have real representatives. In the case that the real point $p'$ also lies in $\Sigma$, this is straightforward; we discuss this case in Section \ref{sec:pinS}. We treat the case $p'\not\in\Sigma$ in Section \ref{sec:pnotinS}.

In conclusion, our efforts lead to the following theorem.

\begin{theorem}\label{thmMTE}
Up to $\wG(\R)$-conjugacy, the mixed  elements in $\g_1(\R)$ are the elements in Tables \ref{tabrepsME1}-- \ref{tabrepsME_NS_4_2} in Appendices~\ref{appMES} and \ref{appMENS}; see Definition \ref{normalizer} for the notation used in all tables.
\end{theorem}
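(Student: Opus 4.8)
The plan is to classify the $\wG(\R)$-orbits in the set $\QQ(\R)$ of quadruples $(p,h,e,f)$ and then transport the answer to $\MM(\R)$ via the bijection on orbits established above; this detour through quadruples is the device that keeps the relevant stabilisers reductive and their Galois cohomology computable. The input is the complex classification: by Theorem~\ref{thmME}, every $\wG$-orbit in $\QQ(\C)$ is represented by a quadruple whose semisimple part $p$ lies in some $\Sigma_i$ and whose nilpotent part is one of the $n_{i,r}$ of Table~\ref{tabnij}, while by Theorem~\ref{thmRSE} we already know which of these semisimple orbits meet $\g_1(\R)$, together with an explicit real representative $p'$ in each. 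For each such $p'$ the task is then to decide which nilpotent $Z_{\wG(\C)}(p')$-orbits in $\a_1 = \z_{\g(\C)}(p')\cap\g_1$ contain a real point and to exhibit one; passing to homogeneous $\ssl_2$-triples and invoking the Galois-cohomological Jacobson--Morozov theorem, this is governed by $H^1$ of the stabiliser of the resulting quadruple, through Theorem~\ref{thmMGal}.

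I would organise the work along the dichotomy already singled out in the text, according to whether the chosen real representative $p'$ of the semisimple part again lies in the fixed set $\Sigma$. If $p'\in\Sigma$ (Section~\ref{sec:pinS}), then $\a=\z_{\g(\C)}(p')$ carries the evident real structure, $Z_{\wG(\C)}(p')$ has real points $Z_{\wG(\R)}(p')$, and the admissible nilpotent parts are read off directly from the known nilpotent data of $\a_1$ together with a finite Galois-cohomology computation; one then forms each quadruple $(p',h',e',f')$, writes $Z_{\wG}(p',h',e',f')$ as an explicit subgroup of $\SL(2,\C)^4$, computes its $H^1$ with respect to complex conjugation, and records one real representative $b(p',h',e',f')$ per class $[z]=[b^{-1}\overline b]$, in the spirit of steps (A)--(C) of Section~\ref{approach}. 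If $p'\notin\Sigma$ (Section~\ref{sec:pnotinS}), one picks $g\in\wG$ with $gp=p'$ for the corresponding $p\in\Sigma$; conjugation by $g$ identifies $\z_{\g(\C)}(p')$ with $\z_{\g(\C)}(p)$ but twists its real structure by the cocycle $g^{-1}\overline g$, so the same analysis must be carried out for the inner twist of $Z_{\wG}(p)$ determined by that cocycle: decide which of its nilpotent orbits survive over $\R$, build the quadruples, and compute the corresponding twisted $H^1$ sets.

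For the explicit computations I would follow the template of Section~\ref{secSS}: construct the grading $\a=\a_0\oplus\a_1$, produce homogeneous $\ssl_2$-triples $(h,e,f)$, describe $Z_{\wG}(p',h',e',f')$ concretely and, where possible, decompose it as $\widetilde Z\times H$ with $H$ finite abelian and $\widetilde Z$ having component group of order at most two generated by the class of $(J,J,J,J)$, thereby reducing $H^1$ to a cocycle-and-equivalence problem on a generator $u=wj^\ast$ that can be solved by hand; all of this is cross-checked in GAP as elsewhere in the paper. Finally each representative is rewritten, when necessary, into one of the fixed Cartan subspaces $\cc_1,\dots,\cc_7$ of Definition~\ref{normalizer}, and the entries of Tables~\ref{tabrepsME1}--\ref{tabrepsME_NS_4_2} result; irredundancy is ensured by the same separation arguments used for semisimple elements, namely that distinct $\gamma_j$, distinct classes in $H^1$, and the permutation symmetries of Remark~\ref{remSym} give pairwise non-$\wG(\R)$-conjugate quadruples.

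The hardest part, and the main departure from \cite{trivectors2}, will be the Galois cohomology of the quadruple-stabilisers: there the semisimple stabilisers have trivial $H^1$, so every complex mixed orbit remains a single real orbit, whereas here the $H^1$ sets are genuinely nontrivial and must be treated case by case, with the extra bookkeeping of the twisted real forms in the $p'\notin\Sigma$ case. A secondary difficulty is simply the combinatorial size of the problem — one semisimple representative $p'$ for each real semisimple orbit, crossed with its nilpotent orbits and then with the classes of $H^1$ — which is precisely why the reduction to $4$-tuples and the systematic use of Theorem~\ref{thmMGal} are indispensable.
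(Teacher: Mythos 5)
Your proposal follows essentially the same route as the paper: reduction to quadruples $(p,h,e,f)$, the orbit bijection with $\MM(\K)$, Theorem~\ref{thmMGal} applied case by case, the dichotomy $p'\in\Sigma$ versus $p'\notin\Sigma$ with the twisted real structure $x\mapsto n\bar x$ (the paper's map $\mu$ and the eigenspace search in $U_p^\mu$) used to locate real nilpotent parts, and the same $H^1$ computations via the decomposition $\widetilde Z\times H$ followed by rewriting into the fixed Cartan subspaces. The plan is correct and matches the paper's proof in structure and substance.
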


\subsection{Classification for the case  $p'\in \Sigma$}\label{sec:pinS}
Here we suppose that the $\wG$-orbit of $p$ has a real point $p'$ in $\Sigma$.
As before set $\a = \z_{\g(\C)}(p')$. If $p'\in \Sigma_i$, then we can assume
that $p'$ corresponds to an element in the first row of block $j=1$ in the
table for Case $i$ (see Tables \ref{tabreps10}--\ref{tabreps2}). With this
assumption, it follows from Theorem \ref{thmME} that every  nilpotent
$Z_{\wG(\C)}(p')$-orbit in $\a_1$ has a real representative; in particular, we
can assume that $e'=n_{i,r}$ for some $r$, as specified in Table \ref{tabnij}.
This then yields a real  $4$-tuple $(p',h',e',f')\in\QQ(\R)$.

We start by computing the centralisers
$Z_{\wG}(p',h',e',f')$, similarly to how we computed $Z_{\wG}(p')$ before;
the result is listed in Table \ref{tabZME}. Due to Theorem \ref{thmSE}b),
we can always take one explicit element for our computations; for  example,
in Case $i=2$,  we can always take $p'=u_1+u_2+u_3$.

We now consider the different cases $i=2,\ldots,10$; for $i=1$ and $i=11$ there are no  elements of mixed type. As before, cases $i\in\{5,6\}$ and $i\in\{8,9\}$ follow from $i=4$ and $i=7$, respectively.  We compute the first cohomology of each centraliser by using the same approach as described in  Section \ref{secProofRSE}; all the centralisers in this section can be found in Table \ref{tabZME}.

\medskip

\noindent {\bf Case $\pmb{i=2}$.} We can assume $p'$ corresponds to the first element in block $j=1$ in Table \ref{tabreps2}. By Theorem~\ref{thmME}, there is only one nilpotent element $e'=\mye{0011}$ such that $p'+e'$ has  mixed type; this is the real point we use. First, we compute a real $\ssl_2$-triple associated to $e'$. A direct calculation shows that $H^1(Z)$ for  $Z=Z_{\wG}(p',h',e',f')$ has 8 classes with representatives
\begin{equation}\label{H1Z}
\begin{array}{llll}
z_1=(I,I,I,I), & z_2=(-I,-I,I,I), & z_3=(-I,I,-I,I), & z_4=(-I,I,I,-I),\\
z_5 = (L,L,L,L), & z_{6}=(L,L,-L,-L), & z_{7}= (-L,L,-L,L),  & z_{8}= (L,-L,-L,L).
\end{array}
\end{equation}
This shows that the complex orbit $\wG(p,h,e,f)$ splits into $8$ real orbits: each $[z]\in H^1(Z_i)$ determines some $b\in \wG$ with $z=b^{-1}\overline{b}$, and  then $b(p'+e')$ is the real representative of the mixed type orbit corresponding to $[z]$; the resulting orbit representatives are listed in  Table~\ref{tabrepsME1}.

\medskip

\noindent {\bf Case $\pmb{i=3}$.} We proceed as before and the resulting real orbit representatives are exhibited in Table \ref{tabrepsME1b}. Here we have to consider the nilpotent elements $n_{3,1}$ and $n_{3,2}$. The case $e'=n_{3,1}$ yields the same centraliser $Z_1=Z$ as in Case $i=2$. The centraliser $Z_2$ for the case $e'=n_{3,2}$ leads to a first cohomology  $H^1(Z_2)$ with $8$ classes, given by representatives
\begin{equation}\label{H1Z32}
\begin{array}{llll}
	z_1=(I,I,I,I), & z_2=(-I,-I,I,I), & z_3=(-I,I,-I,I), & z_4=(-I,I,I,-I),\\
	z_5 = (I,-I,-I,I), & z_{6}=(I,-I,I,-I), & z_{7}= (I,I,-I,-I),  & z_{8}= (-I,-I,-I,-I).
\end{array}
\end{equation}


\medskip

\noindent {\bf Case $\pmb{i=4}$.} Here we have four nilpotent elements $n_{4,r}$ with $r\in\{1,2,3,4\}$; the real orbit representatives for this case are given in Table \ref{tabrepsME2}. The centralisers for $r=1$ and $r=2$ coincide with $Z$ as in Case $i=2$; if $r\in\{3,4\}$, then the centraliser is infinite; its first cohomology has $4$ classes with representatives 
\begin{equation}\label{H1Z43}
	\begin{array}{llll}
		z_1=(I,I,I,I), & z_2=(-I,-I,I,I), & z_3=(-I,I,-I,I), & z_4=(-I,I,I,-I).
	\end{array}
\end{equation}


\medskip

\noindent {\bf Case $\pmb{i=7}$.} There are six nilpotent elements $n_{7,r}$ with $r\in\{1,\ldots,6\}$; the real orbit representatives are exhibited in Table \ref{tabrepsME3}. The centraliser for $r=1,2,3$ is the same as $Z_2$ as in Case $i=3$; the centraliser for $r=6$ is as in Case $i=4$ and $n_{4,4}$. It therefore remains to consider  $r\in\{4,5\}$; we use the approach described in Section \ref{approach}. For $r=4$, a short calculation shows that  $k=(-L,L,K,K)$ is the only $1$-cocycle arising from $(-L,L,-J,J)$; it follows that the first cohomology has $8$ classes with representatives $z_i$ as in \eqref{H1Z43} along with $z_i k$, that is
\begin{equation}\label{H1Z74}\small
	\begin{array}{llll}
		z_1=(I,I,I,I), & z_2=(-I,-I,I,I), & z_3=(-I,I,-I,I), & z_4=(-I,I,I,-I),\\
		z_5 = (-L,L,K,K), & z_{6}=(L,-L,K,K), & z_{7}= (L,L,-K,K),  & z_{8}= (L,L,K,-K).
	\end{array}
\end{equation}
 For $r=5$, the first cohomology has representatives $z_1,\ldots,z_4$ as above along with $z_i (K,K,-L,L)$:
{\small\begin{equation}\label{H1Z75}
	\begin{array}{llll}
		z_1=(I,I,I,I), & z_2=(-I,-I,I,I), & z_3=(-I,I,-I,I), & z_4=(-I,I,I,-I),\\
		z_5 = (K,K,-L,L), & z_{6}=(-K,-K,-L,L), & z_{7}= (-K,K,L,L),  & z_{8}= (-K,K,-L,-L).
	\end{array} 
\end{equation}}

\medskip

\noindent {\bf Case $\pmb{i=10}$.} There are $12$ nilpotent elements; the real orbit representatives are exhibited in Tables \ref{tabrepsME4} and~\ref{tabrepsME5}. Cases $r=1,3,7,9$ lead to centralisers that have the same first cohomology as in Case $i=2$; Cases $r=2,4,6,8,10,12$ yield the same first cohomology as Case $(i,r)=(4,4)$, see  \eqref{H1Z43}. For $r\in\{5,11\}$ we obtain the following cohomology representatives:
\begin{equation}\label{H1Z105}
	\begin{array}{llll}
		z_1=(I,I,I,I), & z_2=(-I,-I,I,I).
	\end{array}
\end{equation}
For $j=13$ the fist cohomology has 2 classes with representatives
\begin{equation}\label{H1Z1013}
	\begin{array}{llll}
		z_1=(I,I,I,I), & z_2=(-I,I,-I,I).
	\end{array}
\end{equation}

\begin{remark}
  The semisimple parts of the real orbit representatives arising from cocycles involving $K$ or $-K$ are not in our fixed Cartan subspaces $\cc_1,\ldots,\cc_7$, and we use Remark \ref{remRewrite} to replace these elements by elements in our spaces. For example, consider the cocycle $z = (-L,L,K,K)$ in Case $(i,r)=(7,4)$. There is $b\in \wG$ with $b^{-1}\bar{b} = z$, and we compute $bp'$, so that  $b(p'+e')$ is a real point; in this case, $bp'\notin \cc_1\cup\ldots\cup\cc_7$. However, $bp'$ is $\wG(\R)$-conjugate to
an element in some space $\cc_j$, so there is $b'\in \wG(\R)$
such that $(b'b)p'\in \cc_j$. Since $b'$ is real, $(b'b)^{-1} \overline{b'b} = z$. Thus, there is $b_0\in \wG$ with $b_0^{-1} \bar{b}_0=z$ and $b_0p'\in \cc_j$. Now $(b'b)e'$ is real  and  $b_0(p'+e')$ is a real point.
\end{remark}


\subsection{Classification for the case  $p'\notin \Sigma$}\label{sec:pnotinS}
As before, we consider a  complex semisimple element $p$. By Theorem~\ref{thmME}, we can assume that $p\in\Sigma_i$. Let $p'$ be a real point in $\wG p$ as  in Theorem \ref{thmRSE}. This time we consider the case $p'\notin \Sigma$, so if $\a=\z_{\g(\C)}(p')=\a_0\oplus\a_1$, then we do not know which nilpotent $Z_{\wG}(p')$-orbits in $\a_1$ have real points. We now discuss how decide this question.
The method that we describe is borrowed from \cite[Section~5.3]{trivectors2}.
However, some difficulties that occurred in the case considered in
\cite{trivectors2} do not appear here, see Remark \ref{rem:finde}.

Recall that our proof of Theorem~\ref{thmRSE} has exhibited an explicit  $g\in \wG$ such that $p'=gp$; this construction used  Theorem \ref{thm1to1} and a 1-cocycle $n=g^{-1}\overline{g}\in N_{\wG}(\h_p^\circ)$. In the following write \[U_{p'} = \z_{\g}(p') \cap \g_1\quad\text{and}\quad U_p= \z_{\g}(p) \cap \g_1.\] Since $Z_{\wG}(p')=gZ_{\wG}(p)g^{-1}$, the next lemma allows us to  determine the nilpotent $Z_{\wG}(p')$-orbits in $U_{p'}$ from the known $Z_{\wG}(p)$-orbits in $U_p$, cf.\ Theorem~\ref{thmME}.
\begin{lemma}
 The map $\varphi\colon U_p\to U_{p'}$, $x\mapsto gx$, is a bijection that maps $Z_{\wG}(p)$-orbits to $Z_{\wG}(p')$-orbits.
\end{lemma}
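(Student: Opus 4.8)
The plan is to verify the three claimed properties of $\varphi$ in turn, using only that $g\in\wG$ is a fixed element with $p'=gp$ and that $\wG$ acts on $\g_1$ linearly (so conjugation by $g$ is available on stabilisers).

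First I would check that $\varphi$ is well-defined, i.e.\ that $gx\in U_{p'}$ whenever $x\in U_p$. Write $x\in U_p=\z_\g(p)\cap\g_1$, so $[x,p]=0$ and $x\in\g_1$. Since $g$ acts on $\g$ by a Lie algebra automorphism that preserves the grading (it lies in $\wG$, whose image in $G_0$ acts on $\g$ preserving $\g_0$ and $\g_1$), we get $gx\in\g_1$; and $[gx,p']=[gx,gp]=g[x,p]=0$, so $gx\in\z_\g(p')\cap\g_1=U_{p'}$. Injectivity is immediate because $g$ is invertible on $\g$. Surjectivity follows symmetrically: given $y\in U_{p'}$, the element $x=g^{-1}y$ satisfies $x\in\g_1$ (same grading argument applied to $g^{-1}$) and $[x,p]=[g^{-1}y,g^{-1}p']=g^{-1}[y,p']=0$, so $x\in U_p$ and $\varphi(x)=y$. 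Hence $\varphi$ is a bijection.

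Next I would show $\varphi$ intertwines the orbit structures. Recall $Z_{\wG}(p')=gZ_{\wG}(p)g^{-1}$, which holds because $h\in Z_{\wG}(p)\iff hp=p\iff (ghg^{-1})p'=(ghg^{-1})gp=ghp=gp=p'$. Now take $x\in U_p$ and $h\in Z_{\wG}(p)$; then $ghg^{-1}\in Z_{\wG}(p')$ and
\[
\varphi(hx)=g(hx)=(ghg^{-1})(gx)=(ghg^{-1})\varphi(x),
\]
so $\varphi$ maps the $Z_{\wG}(p)$-orbit of $x$ into the $Z_{\wG}(p')$-orbit of $\varphi(x)$. Applying the same computation to $\varphi^{-1}=(x\mapsto g^{-1}x)$, which by the displayed identity $Z_{\wG}(p)=g^{-1}Z_{\wG}(p')g$ carries $Z_{\wG}(p')$-orbits into $Z_{\wG}(p)$-orbits, shows the inclusion in the other direction, so $\varphi$ sends orbits bijectively to orbits. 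One should also note that $\varphi$ takes nilpotent elements to nilpotent elements, since $\ad_\g(gx)=g\,\ad_\g(x)\,g^{-1}$ is nilpotent iff $\ad_\g(x)$ is; this is what makes the map useful for transporting the known nilpotent $Z_{\wG}(p)$-orbits of Theorem~\ref{thmME} to $p'$.

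There is no real obstacle here: the only point requiring care is the bookkeeping that $g$ preserves the grading $\g=\g_0(\C)\oplus\g_1(\C)$ and hence restricts to a linear isomorphism $\g_1\to\g_1$, together with the standard conjugation identity $Z_{\wG}(gp)=gZ_{\wG}(p)g^{-1}$. Everything else is formal. The statement is essentially a transport-of-structure lemma, included so that the classification of nilpotent $Z_{\wG}(p')$-orbits in $U_{p'}$ (for $p'\notin\Sigma$) can be read off from the already-known classification for the representative $p\in\Sigma$; the genuinely delicate part of Section~\ref{sec:pnotinS} is not this lemma but the subsequent Galois-cohomology argument deciding which of these transported orbits contain real points.
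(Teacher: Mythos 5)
Your argument is correct and is exactly the (omitted) justification the paper has in mind: the lemma is stated without proof, preceded only by the remark that $Z_{\wG}(p')=gZ_{\wG}(p)g^{-1}$, and your verification that $g$ preserves the grading, intertwines brackets, and conjugates the stabilisers is the standard transport-of-structure argument this relies on. No gaps.
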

Having determined the $Z_{\wG}(p')$-orbits in $U_{p'}$, it remains to decide  when  such a complex orbit has a real point.
Note that if $e'$ is a real nilpotent element in $\a_1$
then  $e'=\varphi(x)$  for some $x\in U_p$ lying  in the $Z_{\wG}(p)$-orbit of some $e=n_{i,r}$.  Motivated by this observation, we proceed as follows: we fix $p\in\Sigma_i$ and $p'=gp$, and for each $e=n_{i,r}$ we look for $x$ in the complex
$Z_{\wG}(p)$-orbit of $e$ such that $\varphi(x)$ is real. Note that the condition that $\varphi(x)$ is real is equivalent to $n\overline{x}=x$ where $n=g^{-1}\overline{g}$  as above. Thus, we define
\begin{eqnarray}\label{eqmu}\mu\colon U_p \to U_p,\quad x\mapsto n\overline{x};
\end{eqnarray}
note that $\mu^2=1$ since $n\overline{n}=1$ and, by construction, $\varphi(x)$ is real if and only if $\mu(x)=x$.  The following lemma is analogous to \cite[Lemma 5.3.1]{trivectors}.
\begin{lemma}\label{lemYstuff}
	Let $Y=Z_{\wG}(p)e$. Then $\mu(Y)=Y$ if and only if $\mu(y)\in Y$ for some $y\in Y$.
\end{lemma}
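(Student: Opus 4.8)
The plan is to establish both directions of this biconditional, with the forward direction being trivial and the reverse direction carrying the content. Note first that $\mu$ is an involution of the ambient space (since $n\bar n=1$), but it is \emph{not} $\wG$-equivariant; rather, it is \emph{semilinear} in the following sense: for $h\in \wG$ and $x\in U_p$ we have $\mu(hx) = n\overline{hx} = n\overline{h}\,\overline{x} = (n\overline{h}n^{-1})\,\mu(x)$, and since $n\in N_{\wG}(\h_p^\circ)\subseteq\wG$ we have $n\overline{h}n^{-1}\in\wG$. The key point is that the map $h\mapsto n\overline{h}n^{-1}$ is an (anti)automorphism of $\wG$ preserving $Z_{\wG}(p)$: indeed, since $p'=gp$ is real and $n=g^{-1}\bar g$, conjugation by $n$ followed by bar is precisely the ``twisted conjugation'' that fixes $Z_{\wG}(p)$ setwise, because $Z_{\wG}(p')=gZ_{\wG}(p)g^{-1}$ is defined over $\R$. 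Concretely, if $z\in Z_{\wG}(p)$ then $n\overline{z}n^{-1}\in Z_{\wG}(p)$ as well.

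With this preparation, the forward implication is immediate: if $\mu(Y)=Y$ then certainly $\mu(y)\in Y$ for every $y\in Y$, in particular for some $y$. For the reverse implication, suppose $\mu(y_0)\in Y$ for some $y_0\in Y$. Since $Y=Z_{\wG}(p)e$ is a single orbit and $y_0\in Y$, write $y_0 = z_0 e$ for some $z_0\in Z_{\wG}(p)$. Now take an arbitrary $y\in Y$, say $y = z\,e$ with $z\in Z_{\wG}(p)$; we must show $\mu(y)\in Y$. Write $y = (z z_0^{-1})\,y_0$, so that by the semilinearity computed above,
\[
\mu(y) = \mu\bigl((zz_0^{-1})y_0\bigr) = \bigl(n\,\overline{zz_0^{-1}}\,n^{-1}\bigr)\,\mu(y_0).
\]
By the preparation, $n\,\overline{zz_0^{-1}}\,n^{-1}\in Z_{\wG}(p)$, and by hypothesis $\mu(y_0)\in Y = Z_{\wG}(p)e$, so $\mu(y_0) = z_1 e$ for some $z_1\in Z_{\wG}(p)$. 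Hence $\mu(y) = \bigl(n\,\overline{zz_0^{-1}}\,n^{-1}\bigr)z_1\,e\in Z_{\wG}(p)e = Y$. Since $y\in Y$ was arbitrary we conclude $\mu(Y)\subseteq Y$, and applying the same argument to $\mu^{-1}=\mu$ (or just using that $\mu$ is a bijection and $Y$ is finite-dimensional / $\mu$ is an involution) gives $\mu(Y)=Y$.

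The step I expect to be the main (small) obstacle is the bookkeeping around the semilinearity identity and the verification that the twist $h\mapsto n\overline{h}n^{-1}$ genuinely preserves $Z_{\wG}(p)$ — this is where one must invoke that $p'=gp$ is a \emph{real} point and that $n=g^{-1}\bar g$ is the associated cocycle in $N_{\wG}(\h_p^\circ)$, so that $Z_{\wG}(p)$ is carried to itself rather than to some conjugate. Everything else is a one-line orbit computation. One should also double-check that $\mu$ indeed maps $U_p$ into $U_p$ (so that $Y$ and $\mu(Y)$ live in the same space): this holds because $n\in N_{\wG}(\h_p^\circ)$ normalises $\z_{\g}(p)\cap\g_1 = U_p$ up to the real structure, which is exactly the reason $U_p$ was the right ambient space to work in. With that in hand the lemma follows cleanly, mirroring \cite[Lemma 5.3.1]{trivectors}.
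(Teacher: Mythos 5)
Your proof is correct. Both arguments come down to the same key fact, namely that the twisted conjugation $h\mapsto n\overline{h}n^{-1}$ maps $Z_{\wG}(p)$ into itself; once that is known, the rest is the identical piece of orbit bookkeeping (the paper writes $n\overline{t}=sn$ and $ne=z^{-1}ke$, you write $\mu\bigl((zz_0^{-1})y_0\bigr)=(n\overline{zz_0^{-1}}n^{-1})\mu(y_0)$ — these are the same computation). Where you genuinely diverge is in how that fact is justified. The paper uses that $Z_{\wG}(p)=Z_{\wG}(\h_\Pi^\circ)$ by Theorem~\ref{thmSE}b), that this group is normal in $N_{\wG}(\h_\Pi^\circ)$ (being the kernel of $\varphi$ in Lemma~\ref{lem:Np}), that $n$ lies in that normaliser, and that $\overline{h}\in Z_{\wG}(p)$. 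You instead observe that $n\overline{h}n^{-1}=g^{-1}\,\overline{ghg^{-1}}\,g$ and invoke only the conjugation-stability of $Z_{\wG}(p')=gZ_{\wG}(p)g^{-1}$, which is immediate from $p'$ being real. Your route is somewhat more self-contained and would survive in settings where one does not know that the centraliser of a point equals the centraliser of its whole component; the paper's route has the small advantage of producing the explicit element $ne=z^{-1}ke$, which is reused in spirit in the subsequent search for real points. One cosmetic remark: your closing worry about $\mu$ mapping $U_p$ to $U_p$ is resolved cleanly by noting that $\overline{p'}=p'$ gives $n\overline{p}=p$, hence $n\,\overline{\z_\g(p)\cap\g_1}=\z_\g(n\overline{p})\cap\g_1=U_p$; the vaguer phrase ``up to the real structure'' is not needed.
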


\begin{proof}
	It follows from Theorem \ref{thmSE}b) 
	that $Z_{\wG}(p)=Z_{\wG}(\h_\Pi^\circ)$, where $\h_{\Pi}^\circ$ is the component containing $p$. 
	Suppose there is $y\in Y$ such that $\mu(y)\in Y$; we have to show $\mu(Y)=Y$. Write $y=he$ with $h\in Z_{\wG}(p)$.
        We know that $\mu(y) = n \overline{y} = n\overline{h} e = k e$ for some $k\in Z_{\wG}(p)$; note that $\overline{he}=\overline{h}e$ because $e$ is real.
	Since $\overline{h}\in Z_{\wG}(p)=Z_{\wG}(\h_\Pi^\circ)$ and the latter is normal in $N_{\wG}(\h_\Pi^\circ)$, we have $n\overline{h} = zn$ for some $z\in Z_{\wG}(p)$, and the previous equation yields $ne = z^{-1}ke$.
	Now let $w\in Y$, say $w=te$ with $t\in Z_{\wG}(p)$. There is some $s\in Z_{\wG}(p)$ such that $n\overline{t}=sn$. We have $\mu(w) = n\overline{t}e = sne=sz^{-1}ke$ and, since $sz^{-1}k\in Z_{\wG}(p)$, we deduce that  $\mu(w)\in Y$, so $\mu(Y)=Y$. The other implication is trivial.
\end{proof}

\begin{corollary}
If there is $x\in Z_{\wG}(p)e$ such that $\varphi(x)$ is real, then 
$\mu(e)$ is $Z_{\wG}(p)$-conjugate to $e$.
\end{corollary}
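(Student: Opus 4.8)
The plan is to read this off immediately from Lemma~\ref{lemYstuff}. Write $Y=Z_{\wG}(p)e$ for the complex $Z_{\wG}(p)$-orbit in question. By the construction of $\mu$ in~\eqref{eqmu}, for any $x\in U_p$ the element $\varphi(x)=gx$ is real precisely when $\mu(x)=x$. Hence the hypothesis ``there exists $x\in Z_{\wG}(p)e$ with $\varphi(x)$ real'' says exactly that there is $x\in Y$ with $\mu(x)=x$, and in particular $\mu(x)\in Y$. Applying Lemma~\ref{lemYstuff} with $y=x$ then gives $\mu(Y)=Y$. Since $e\in Y$, this yields $\mu(e)\in Y=Z_{\wG}(p)e$, i.e.\ $\mu(e)$ is $Z_{\wG}(p)$-conjugate to $e$, as claimed.

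No genuine obstacle arises here: all the real content has been packaged into Lemma~\ref{lemYstuff}, and the corollary only invokes the elementary equivalence between ``$\varphi(x)$ is real'' and ``$\mu(x)=x$''. The reason for isolating it is practical rather than mathematical: it furnishes an easily checkable \emph{necessary} condition for the $Z_{\wG}(p)$-orbit of a fixed $e=n_{i,r}$ to contain an element whose $\varphi$-image is real, namely that $\mu(e)$ and $e$ lie in the same complex $Z_{\wG}(p)$-orbit. This condition can be tested directly against the known classification of nilpotent $Z_{\wG}(p)$-orbits in $\a_1$ coming from Theorem~\ref{thmME}, so that in the subsequent case analysis one only has to examine those $n_{i,r}$ that survive this test (and then, for the survivors, actually exhibit an $x\in Z_{\wG}(p)e$ with $\mu(x)=x$).
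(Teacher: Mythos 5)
Your proof is correct and is exactly the argument the paper intends: the corollary is stated without proof precisely because it follows immediately from Lemma~\ref{lemYstuff} via the equivalence ``$\varphi(x)$ is real $\Leftrightarrow$ $\mu(x)=x$'' and the fact that $e\in Y=Z_{\wG}(p)e$. Nothing to add.
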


If $e=n_{i,r}$ does not satisfy $\mu(e)\in Z_{\wG}(p)e$, then we can discard the
pair $(p,e)$. If $\mu(e)$ is $Z_{\wG}(p)$-conjugate to $e$, then we attempt to
compute $x\in Z_{\wG}(p)e$ such that  $\mu(x)=x$; then $e'=\varphi(x)$ is a real
nilpotent element commuting with $p'$. Once this real point is found, we
construct a real $4$-tuple $(p',h',e',f')$ and apply Theorem~\ref{thmMGal} to
compute the $\wG(\R)$-orbits in $\wG(p',h',e',f')$.

\begin{remark}\label{rem:finde}
  In our classification, using ad hoc methods, we always found suitable elements $x$ as above. We note that if such ad hoc methods would not have worked, then we could have used a method described in \cite[Section~5.3]{trivectors2} for finding such elements (or for deciding that none exists). This method is based on computations with the second cohomology set $H^2(Z_{\wG}(p))$. 
\end{remark}


\medskip

\noindent {\bf Classification approach.} We summarise our   approach for classifying the real mixed  orbits in $\wG(p+e)$, where the real point $p'$ in $\wG p$ (as in Theorem \ref{thmRSE}) does not lie in $\Sigma$.

\begin{iprf}
\item[{\bf (1)}] Recall that Tables  \ref{tabreps10}--\ref{tabreps} list our real semisimple orbits; each table corresponds to a case $i\in\{1,\ldots,10\}$ and has subcases $j=1,2,\ldots$, where $j=1$ lists elements in $\Sigma$. For each $i\in\{2,\ldots,10\}$ and each $j>1$ listed in the corresponding table, choose the first real element $p'$ in the block labelled $j$. The proof of Theorem \ref{thmRSE} shows that  $p'=gp$ where $p\in \Sigma_i$ as given in Table \ref{tabW}; in particular, the element $g$ is determined  by our  classification, and we define  $n=g^{-1}\overline{g}$.
\enlargethispage{3ex}
\item[{\bf (2)}] For each $p\in \Sigma_i$ as determined in (1) we consider each $e=n_{i,r}$ ($r=1,2,\ldots$) such that the elements $p+n_{i,r}$ are the mixed orbit representatives as determined in Theorem \ref{thmME}. Using \eqref{eqmu}, we then define $\mu$ with respect to $n$, and check whether $\mu(e)\in Z_{\wG}(p)e$. If true, then we compute $x\in Z_{\wG}(p)e$ with $\mu(x)=x$ by computing the $1$-eigenspace $U_p^\mu = \{ u\in U_p : \mu(u)=u\}$ and looking for some $x\in U_p^\mu$ that is $Z_{\wG}(p)$-conjugate to $e$. Note that  $\dim_\R U_p^\mu=\dim_\C U_p$; this follows from the fact that  $\mu$ is an $\R$-linear map of order 2, so $U_p$ is the direct sum of the $\pm 1$-eigenspaces. Since multiplication by $\imath$ is a bijective $\R$-linear map swapping these eigenspaces,  they have equal dimension. In our classification, this search is always successful, and we set $e'=\varphi(x)$.

\item[{\bf (3)}] We determine a real 4-tuple $(p',h',e',f')$ and
apply Theorem \ref{thmMGal} to find the real $\wG(\R)$-orbits in the $\wG$-orbit
of this 4-tuple. If $(p'',h'',e'',f'')$ is a representative of such an
orbit then $p''+e''$ is the corresponding element of mixed type. This is a
representative of a $\wG(\R)$-orbit contained in $\wG(p+e)$. 
\end{iprf}


We now discuss the individual cases in detail. The following case distinction determines the relevant real points $n_{i,j,r}$ for Case $i$ and the cohomology class $[\gamma_j]$ of $H^1(\Gamma_{p_i})$ with $\gamma_j$ as determined in the proof of Theorem~\ref{thmRSE}; note that we do not have to  consider the trivial class $[\gamma_1]$ because this class produces elements in~$\Sigma$.

\medskip

\noindent {\bf Case $\pmb{i=2}$.} We have to consider cocycles $\gamma_2,\ldots,\gamma_8$, with corresponding elements $n_j$ given by (see Section $6$)
\[\begin{array}{llll}
	n_2=(I,I,L,-L), & n_3=(I,L,I,-L), & n_4=(L,I,I,L),& \\
	n_5 = (L,I,I,-L), & n_{6}=(I,L,I,L), & n_{7}= (I,I,L,L),  & n_{8}= (-I,I,I,I).
\end{array}
\]
For this case there is only one nilpotent element $e=n_{2,1}$. After a short calculation we can see that $\mu(e)=n_je$ is conjugate to $e$ for all $j=2,\dots,8$.  
For $n_2=(I,I,L,-L)$ we obtain $g_2=(I,I,M,D(\zeta))$; following Remark~\ref{remRewrite}, we replace $g_2$ by  $g_2=(-I,-J,-M,-M J)$. We then compute $U_p^\mu$ and verify that $e\in U_p^\mu$, thus $x=e$ is the element we are looking for, and therefore we set $e'=\varphi(x)=g_2e=-\mye{0110}$; we denote the latter by $n_{2,2,1}$. For $n_3$ we obtain  $g_3=(I,M,I,D(\zeta))$. Again, to get elements in one of the seven Cartan subspaces, it is necessary to replace it by $g_3=(-I,-M,-J,-M J)$. After computing $U_q^\mu$ we have that $x=\imath e\in U_p^\mu$, which is conjugate to $e$ under the action of $Z_{\wG}(p)$ via $g=(D(\imath),D(\imath),D(\imath^{-1}),D(\imath^{-1}))$. Therefore we set $e'=\varphi(x)=g_3e=\mye{0000}$;
the latter is denoted by $n_{2,3,1}$. The other cases $j>3$ are computed along the same way.

\medskip

\noindent {\bf Case $\pmb{i=3}$.} We have to consider the cocycle $\gamma_2$ with $n_2=(-I,I,I,I)$. There are two nilpotent elements $n_{3,1}$ and $n_{3,2}$. 
We compute $U_p^\mu$ and see that $x=\imath\mye{0011}$ is a nilpotent element in $U_p^\mu$ conjugate to $n_{3,1}$ via $(D(a)^{-1},D(a)^{-1},D(a),D(a))$ with $a^{-4}=\imath$. Thus, the real nilpotent element is $e'=\varphi(x)=-\mye{0011}$, denoted $n_{3,2,1}$. Similarly,  we find $x\in U_p^\mu$ such that $x$ is conjugate to $n_{3,2}$, see $\varphi(x)=n_{3,2,2}$ in Table \ref{tabNilpElements}.

\medskip

\noindent {\bf Case $\pmb{i=4}$.} We have to consider cocycles $\gamma_2,\ldots,\gamma_4$, with corresponding elements
\[\begin{array}{llll}
	n_2=(M,M,M,M), & n_3=(I,I,L,L), & n_4=(L,L,-K,K).
\end{array}
\]
For each of them, we have to consider four nilpotent elements $n_{4,1},\ldots,n_{4,4}$. After computing $U_p^\mu$ for $n_2$, we observe that each $n_{4,\ell}\in U_p^\mu$, thus the real representatives are $n_{4,2,1},\ldots,n_{4,2,4}$, defined as $n_{4,2,\ell}=g_2n_{4,\ell}=n_{4,\ell}$  with $g_2=(D(\eta^5),D(\eta^5),D(\eta^5),D(\eta^5))$. The case $n_3$ is similar, so now let us consider $n_4$. We obtain that  $\mu(n_{4,\ell})$ is not conjugate to $n_{4,\ell}$ for $\ell\in\{3,4\}$, thus we have to only consider $n_{4,1}$ and $n_{4,2}$. For $n_{4,1}$ there is no $x\in U_q^\mu$ such that $x$ is conjugate to $n_{4,1}$. This is seen by acting on
$n_{4,1}$ by a general element $g$ of $Z_{\wG}(p)$ which is
$g=h(D(a^{-1}),D(a),D(b^{-1}),D(b))$ where $a,b\in \C^*$ and $h$ lies in the
component group (see Table \ref{tabZ}). From the expressions obtained it is
straightforward to see that the image can never lie in $U_p^\mu$. 


On the other hand, we observe that $n_{4,2}$ lies in $U_p^\mu$  and therefore
we set $e'=\varphi(n_{4,2})=g_4n_{4,2}=\frac12(-\mye{1110}-\mye{1101}+\mye{1010}+\mye{1001}+\mye{0110}+\mye{0101}-\mye{0010}-\mye{0001})$. In this case it is  necessary to replace $g_4$ in order to get elements in one of the seven Cartan subspaces.


\medskip

\noindent {\bf Cases $\pmb{i\in\{7,10\}}$.} The nilpotent elements for the remaining cases $i=7$ and $i=10$ are computed analogously.

\bigskip

The next step is to compute the centralisers $Z=Z_{\wG}(p',h',e',f')$ of the $4$-tuples $(p',h',e',f')$, and then the first cohomology $H^1(Z)$ as we did in Table \ref{tabZME}. The difference here is that $p'\notin \Sigma$ and $e'$ is a nilpotent element from Table \ref{tabNilpElements}, with corresponding $\ssl_2$-triple 
$(h',e',f')$ in $\z_\g(p')$. We exemplify the details  with a few examples:

\begin{exa}
For $(i,j)=(2,2)$ let $p'$ be the first element in the second block of  Table \ref{tabreps2} and let  $e'=n_{2,2,1}=-\mye{0011}$.  We compute an $\ssl_2$-triple with nilpotent element $e'$ in the whole Lie algebra and then check that it centralises $p'$. Since the centralisers of semisimple elements in the same component are equal, in order to compute $Z_{\wG}(p')$ we can assume that the parameters defining $p'$ are $\lambda_1=\lambda_3=1$ and $\lambda_2=\imath$. Using Groebner basis techniques, we determine that $Z_{\wG}(p',h',e',f')$ is the same as in the first row of Table \ref{tabZME}, hence $H^1(Z)$ has $8$ classes with representatives given by \eqref{H1Z}. The results are listed in the first block of Table \ref{tabrepsME_NS_2_1}.
\end{exa}

In fact, in most of the cases, the centraliser $Z=Z_{\wG}(p',h',e',f')$  for  Case $i$, cohomology class $[\gamma_j]$, and  nilpotent element $e'=n_{i,j,r}$ is exactly the same given in Table \ref{tabZME} for parameters $i$ and $r$. For example, if $i=2$ and $e'=n_{2,j,1}$ with  $j\in\{2,3,4,5,6,7,8\}$, then $Z_{\wG}(p',h',e',f')$ is always given in the first row of Table \ref{tabZME}. The only exceptions where this behaviour was not observed are the following two cases.

\begin{exa}
For $(i,j)=(4,4)$ let $p'$ be the first element in the second block of Table \ref{tabreps4} and let $e'=n_{4,4,1}$ as in Table \ref{tabNilpElements}. Here the centraliser $Z=Z_{\wG}(p',h',e',f')$ is generated by $(-I,-I,I,I)$, $(-I,I,-I,I)$, $(K,K,J,J)$, therefore different from the centraliser given in Table \ref{tabZME} for parameters $(i,r)=(4,1)$. The first cohomology  $H^1(Z)$ has $4$ classes with representatives 
\begin{equation}\label{H1ZnoS44}
	\begin{array}{llll}
		z_1=(I,I,I,I), & z_2=(-I,-I,I,I), & z_3=(-I,I,-I,I), & z_4=(-I,I,I,-I).
	\end{array}
\end{equation}
For $(i,j)=(7,2)$ and $e'=n_{7,2,5}$ the centraliser is generated by  $(-I,I,-I,I)$, $(-I,I,I,-I)$, $(-J,J,L,L)$, and $\left\{ ( D(a)^{-1}, D(a),I,I) \;:\; a\in \C^\times \right\}$, and therefore is different from the centraliser given in Table \ref{tabZME} for parameters $(i,r)=(7,5)$; its first cohomology has  $8$ classes with representatives
\begin{equation}\label{H1ZnoS75}
		\begin{array}{llll}
	z_1=(I,I,I,I), & z_2=(-I,I,-I,I), & z_3=(-I,I,I,-I), & z_4=(I,I,-I,-I),\\
	z_5 = (K,K,L,L), & z_{6}=(-K,K,-L,L), & z_{7}= (-K,K,L,-L),  & z_{8}= (K,K,-L,-L).
\end{array} 
\end{equation}
\end{exa}

Having found all real points and having determined the corresponding cohomology class representatives, the last step is to employ the usual Galois theory approach to obtain a real representative for each of the mixed orbits obtained above. We exhibit the results in Tables \ref{tabrepsME1}-- \ref{tabrepsME_NS_4_2}. This proves Theorem \ref{thmMTE}.

\appendix 

\newpage

\section{Tables}
\subsection{Complex classification}\label{appComp}
~{}

\begin{table}[H]
\scalebox{0.92}{\begin{tabular}{r|r|r|c|c|c|c}
  {\bf $\pmb{i}$} & {\bf type of $\Pi_i$} & {\bf roots of $\Pi_i$} & {\bf elements of $\pmb{\h_{\Pi_i}}$} & {\bf condition for being in $\pmb{\h_{\Pi_i}^\circ}$}& $\pmb{\Gamma_{\Pi_i}}$ & $\pmb{\z_\g(p_i)'}$\\
  \hline
  1& $\emptyset$     &  & $\lambda_1u_1+\cdots+\lambda_4u_4$ &
  $\lambda_i\neq 0$ and  $\lambda_1\notin \{\pm \lambda_2\pm \lambda_3\pm \lambda_4\}$ & $W$ & $0$\\
  2 & ${\rm A}_1$ & $\alpha_4$ & $\lambda_1u_1+\lambda_2u_2+\lambda_3u_3$ &
  $\lambda_i\neq 0$ and  $\lambda_1\notin\{ \pm\lambda_2\pm\lambda_3\}$& $(\Z/2\Z)^3$ & $\ssl(2,\C)$\\
  3 & ${\rm A}_2$ & $\alpha_2,\alpha_4$ & $\lambda_1 (u_1-u_2) + \lambda_2 (u_1-u_3)$ &
   $\lambda_u\neq 0$ and $\lambda_1\ne -\lambda_2$& $\langle-I_4\rangle$ & $\ssl(3,\C)$\\
  4 & $2{\rm A}_1$ & $\alpha_1,\alpha_3$ & $\lambda_1u_1+\lambda_2u_4$ &
  $\lambda_i\neq 0$ and $\lambda_1\notin \{\pm\lambda_2\}$& $\Dih_4$ & $\ssl(2,\C)^2$\\
  5 & $2{\rm A}_1$ & $\alpha_1,\alpha_4$ & $\lambda_1u_1+\lambda_2u_3$ &
  $\lambda_i\neq 0$ and $\lambda_1\notin \{\pm\lambda_2\}$& $\Dih_4$ & $\ssl(2,\C)^2$\\
  6 & $2{\rm A}_1$ & $\alpha_3,\alpha_4$ & $\lambda_1u_1+\lambda_2u_2$ &
    $\lambda_i\neq 0$ and $\lambda_1\notin \{\pm\lambda_2\}$& $\Dih_4$ & $\ssl(2,\C)^2$\\
  7 & ${\rm A}_3$ & $\alpha_1,\alpha_2,\alpha_3$ & $\lambda_1(u_1-u_4)$ & $\lambda_1\neq 0$& $\langle-I_4\rangle$ & $\ssl(4,\C)$\\
  8 & ${\rm A}_3$ & $\alpha_1,\alpha_2,\alpha_4$ & $\lambda_1(u_1-u_3)$ & $\lambda_1\neq 0$& $\langle-I_4\rangle$ & $\ssl(4,\C)$\\
  9 & ${\rm A}_3$ & $\alpha_2,\alpha_3,\alpha_4$ & $\lambda_1(u_1-u_2)$ & $\lambda_1\neq 0$& $\langle-I_4\rangle$ & $\ssl(4,\C)$\\
  10 & $3{\rm A}_1$ & $\alpha_1,\alpha_3,\alpha_4$ & $\lambda_1 u_1$ & $\lambda_1\neq 0$& $\langle-I_4\rangle$ & $\ssl(2,\C)^3$\\
  11 & ${\rm D}_4$ & $\alpha_1,\ldots,\alpha_4$  & $0$ & $0$& $1$ & $\so(4,\C)$
\end{tabular}}\\[1ex]\caption{This is \cite[Table 2]{complex}:  Complete root subsystems $\Pi_i$ of $\Phi$ and related data, with parameters $\lambda_1,\ldots,\lambda_4\in\C$; the last column displays
the derived algebra of the centraliser $\z_\g(p_i)$ for $p_i \in
\h_{\Pi_i}^\circ$. Elements in $\h_{\Pi_i}^\circ$ are not $\wG$-conjugate to elements in $\h_{\Pi_j}^\circ$ if $i\ne j$; two elements in the same component $\h_{\Pi_i}^\circ$ are $\wG$-conjugate if and only if they are $\Gamma_{\Pi_i}$-conjugate.}\label{tabW}
\end{table}


\begin{table}[H]\renewcommand\arraystretch{1.3}
 \scalebox{0.92}{

	\caption{ {\bf Cases $\pmb{i=4}$ and $\pmb{p\notin S}$ (Part II):} Mixed real representatives corresponding to $\gamma_j$, $[z_k]\in H^1(Z)$, and $n_{4,j,r}$.}		
    \label{tabrepsME_NS_4_2}	        
    \end{table}

\newpage


\end{document}